\newtheorem{theorem}{Theorem}
\newtheorem{corollary}[theorem]{Corollary}
\newtheorem{proposition}[theorem]{Proposition}
\newtheorem{lemma}[theorem]{Lemma}
\newtheorem{observation}[theorem]{Observation}
\theoremstyle{definition}
\newtheorem{example}{Example}
\newtheorem{definition}{Definition}
\DeclareMathOperator{\argmin}{argmin}
\DeclareMathOperator{\argmax}{argmax}
\newenvironment{rlemma}[1]{\medskip\sc{Lemma~\ref{#1}.}\begin{itshape}}{\end{itshape}}
\newenvironment{rproposition}[1]{\medskip\sc{Proposition~\ref{#1}.}\begin{itshape}}{\end{itshape}}
\newcommand{\eps}{\epsilon}
\renewcommand{\vec}[1]{\mathbf{#1}}
\newcommand{\coursename}{(67686) Mathematical Foundations of AI}
\newcommand{\handout}[5]{
   \renewcommand{\thepage}{#1-\arabic{page}}
   \noindent
   \begin{center}
   \framebox{
      \vbox{
    \hbox to 5.78in { {\bf \coursename}
         \hfill #2 }
       \vspace{4mm}
       \hbox to 5.78in { {\Large \hfill #5  \hfill} }
       \vspace{2mm}
       \hbox to 5.78in { {\it #3 \hfill #4} }
      }
   }
   \end{center}
   \vspace*{4mm}
}
\def\ol{\overline}
\newcommand{\ceil}[1]{\left\lceil #1 \right\rceil}
\newcommand{\floor}[1]{\left\lfloor #1 \right\rfloor}
\newenvironment{proof-sketch}{\noindent{\bf Sketch of Proof}\hspace*{1em}}{\qed\bigskip}
\newenvironment{proof-idea}{\noindent{\bf Proof Idea}\hspace*{1em}}{\qed\bigskip}
\newenvironment{proof-of-lemma}[1]{\noindent{\bf Proof of Lemma #1}\hspace*{1em}}{\qed\bigskip}
\newenvironment{proof-attempt}{\noindent{\bf Proof Attempt}\hspace*{1em}}{\qed\bigskip}
\def\fnum@figure{{\bf Figure \thefigure}}
\def\fnum@table{{\bf Table \thetable}}
\long\def\@mycaption#1[#2]#3{\addcontentsline{\csname
  ext@#1\endcsname}{#1}{\protect\numberline{\csname
  the#1\endcsname}{\ignorespaces #2}}\par
  \begingroup
    \@parboxrestore
    \small
    \@makecaption{\csname fnum@#1\endcsname}{\ignorespaces #3}\par
  \endgroup}
\def\mycaption{\refstepcounter\@captype \@dblarg{\@mycaption\@captype}}
\newcommand{\mathify}[1]{\ifmmode{#1}\else\mbox{$#1$}\fi}
\newcommand{\bigO}O
\renewcommand{\vec}[1]{{\mathbf #1}}
\newcommand{\remove}[1]{{}}
\newcommand{\xqed}{\mbox{\raggedright $\Diamond$}}
\newcommand{\step}[1]{\stackrel{{\scriptscriptstyle{#1}}}{\rightarrow}}
\newcommand{\newpar}[1]{
\vspace{-1mm}
\paragraph{#1}}
\newcommand{\newsubsec}[1]{
\vspace{-1.5mm}
\subsection{#1}}
\newcommand{\omittext}[1]{}
\def\shortcite{\citeyearpar}
\def\cite{\citep}
\definecolor{darkgreen}{rgb}{0,0.6,0}
\newcommand{\kibitz}[2]{\ifnum\Comments=1{\color{#1}{#2}}\fi}
\newcommand{\rmr}[1]{\kibitz{blue}{[RESHEF:#1]}}
\newcommand{\olev}[1]{\kibitz{red}{[OMER:#1]}}
\begin{document}

\begin{frontmatter}
\title{A Local-Dominance Theory of Voting Equilibria}

\author[harvard]{Reshef Meir}
\author[huji]{Omer Lev}
\author[huji]{Jeffrey S. Rosenschein}
\address[harvard]{Harvard University}
\address[huji]{Hebrew University of Jerusalem}

\begin{abstract}
It is well known that no reasonable voting rule is strategyproof. Moreover, the common Plurality rule is particularly prone to strategic behavior of the voters and empirical studies show that people often vote strategically in practice. Multiple game-theoretic models have been proposed to better understand and predict such behavior and the outcomes it induces. However, these models often make unrealistic assumptions regarding voters' behavior and the information on which they base their vote. 

We suggest a new model for strategic voting that takes into account voters' bounded rationality, as well as their limited access to reliable information. We introduce a simple behavioral heuristic based on \emph{local dominance}, where each voter considers a set of possible world states without assigning probabilities to them. This set is constructed based on prospective candidates' scores (e.g., available from an inaccurate poll). In a \emph{voting equilibrium}, all voters vote for candidates not dominated within the set of possible states. 

 We prove that these voting equilibria exist in the Plurality rule for a broad class of local dominance relations (that is, different ways to decide which states are possible). Furthermore, we show that in an iterative setting where voters may repeatedly change their vote, local dominance-based dynamics quickly converge to an equilibrium if voters start from the truthful state. Weaker convergence guarantees in more general settings are also provided. 

Using extensive simulations of strategic voting on generated and real preference profiles, we show that convergence is fast and robust, that emerging equilibria are consistent across various starting conditions, and that they replicate widely known patterns of human voting behavior such as Duverger's law. Further, strategic voting generally improves the quality of the winner compared to truthful voting.
\end{abstract}

%
%

\begin{keyword} 
Voting equilibrium \sep Strict uncertainty \sep Local dominance \sep Strategic voting
\end{keyword}
\end{frontmatter}

\section{Introduction}

It is often argued that people vote ``strategically'', by trying to promote the election of preferable candidates. Game-theoretic considerations have been applied to the study and design of voting systems for centuries, but the question of how people vote, or should vote, is still open. Suppose that we put aside the complications involved in political voting,\footnote{For example, social utilities~\cite{manski1993identification,brock2001discrete}, strategic candidates~\cite{calvert1985robustness,feddersen90}, and other considerations (see e.g. \cite{riker1968theory,edlin2007voting}.} and focus a simple scenario that fits all the ``standard'' assumptions: Each of $n$ voters has complete transitive preferences $\prec_i$ over a fixed set of alternatives $M$, and each voter's only purpose is to bring about the election of her most-favorable alternative. We will further restrict ourselves to discussing the common Plurality rule, where the alternative with the maximal number of votes is the winner. This scenario translates naturally to a \emph{game}, in which the actions of each voter are her possible ballots---voting for one of the alternatives, in case of Plurality. One might expect game theory to give us a definitive answer as to what would be the outcome of such a game. 

However, an attempt to apply the most fundamental solution concept, Nash equilibrium, to the scenario above, reveals a disappointing fact: Almost \emph{any} profile of actions is a pure Nash equilibrium, and in particular every alternative wins in some equilibrium, even if this alternative is least-preferred by all voters.\footnote{Assuming there are at least three voters.} This observation triggered a search for more appropriate solution concepts for voting games. These concepts rely on taking into account various additional factors, such as the information available to the voters, collusion and group behavior, and intrinsic preferences towards certain actions. Some solutions focused on variations of the standard single-shot game, for example when voters vote sequentially rather than simultaneously.
	
\newpar{Strategic voting}
The underlying assumption of game-theoretic analysis is that players are engaged in strategic behavior. But what does it mean to vote ``strategically''?	
Fisher~\shortcite{fisher2004definition} offers the following definition for what he calls ``tactical voting'':\footnote{Some authors distinct between tactical and strategic voting, especially in political settings. For our purpose they are the same.}
\begin{quote} 
\emph{A tactical voter is someone who votes for a party they believe is more likely to win than their preferred party, to best influence who wins in the constituency.}
\end{quote}
	
The two key components of this definition are \emph{belief} and \emph{influence}. Models of strategic voting differ in how they interpret these terms when considering the behavior of a voter.
	
\begin{example}\label{ex1}
As a running example, we consider a profile with 3 candidates $M=\{a,b,c\}$. Suppose that there are 100 voters, and that currently votes are divided as: $45$ for $a$, $40$ for $b$, and $15$ for $c$. Among the supporters of $c$ are voters $v$ and $v'$. Voter $v$ has preference $c \succ b \succ a$, whereas voter $v'$ prefers $c\succ a \succ b$.
\end{example}
While if truthful, both $v,v'$ would stay with $c$, it seems that $c$ has no chance of winning, and thus a wise strategic decision for $v$ would be to \emph{change} her vote to $b$.  Similarly, $v'$ may prefer to vote for $a$. 
A voter's \emph{response function} would dictate what the voter would do in any given state.
	
Once we define a voter's response function, this immediately induces an \emph{equilibrium model} (or a solution concept): Voting equilibria are simply outcomes where the response of every voter is her current action.
	
By applying the reasoning above to all supporters of $c$ in the example, we would expect to eventually reach an equilibrium where only $a$ and $b$ get votes. The phenomenon that under the Plurality rule almost all votes divide between two candidates is well known in political science, and is called Duverger's Law~\cite{duverger1954political}. 

\newpar{Our contribution}
After enumerating the desiderata we believe should guide the search for a proper solution concept, we review some solutions that have been proposed in the literature, and explain where they fall short of meeting these requirements. We then lay out our epistemic framework, which is a non-probabilistic way of capturing uncertainty. Using this framework and simple behavioral assumptions, we present our response function and equilibrium concept. In the remainder of the paper, we will argue, using formal propositions and empirical analysis, that our solution is indeed the appropriate one for Plurality voting. In particular we show that voters who start from the truthful vote will quickly converge to a pure equilibrium, and that convergence is likely to occur even from arbitrary initial states. We show that in various voter distribution models, using the local-dominance framework enables equilibria with desirable properties --- with ``better'' winners and a ``Duverger-like'' stable states. 

\section{Desiderata for Voting Models}
\label{sec:desiderata}
We now present some arguably-desirable criteria for a theory of voting. We will not be too picky about what is considered a voting model, and whether it is described in terms of individual or collective behavior. The key feature of a model is that given a profile of preferences, it can be mapped to a set of outcomes (i.e., of possible or likely voting profiles under the Plurality rule). 
We classify desirable criteria to the following classes: Theoretic (mainly game-theoretic), behavioral, and scientific.
\rmr{This section may be too long}

\newsubsec{Theoretic Criteria}

\newpar{Rationality} The model should assume that voters are behaving in a rational way, in the sense that they are trying to maximize their own utility based on what they know and/or believe.

\newpar{Equilibrium} The model predicts outcomes that are in equilibrium, for example, a refinement of Nash equilibrium, or of another popular solution concept from the game theory literature. It is more appealing if equilibrium can be naturally computed or even reached by some natural dynamic (similar to the convergence of best-response dynamics to a pure Nash in congestion games).

\newpar{Discriminative power} The model predicts a small but non-empty set of possible outcomes (sometimes called predictive power). More specifically, it predicts a small set of possible winners, as there may be multiple voting profiles that have the same winner. 

\newpar{Broad scope} The model applies (or can be easily adapted) to various scenarios such as simultaneous, sequential or iterative voting, and to the use of different voting rules.

\medskip
In addition, we put forward two less formal requirements, that are nevertheless important. First, that predicted outcomes should not include outcomes that are obviously unreasonable or absurd. Second, we would like our model to be grounded in familiar concepts from decision theory, game theory, and voting theory; it will thus be easier understand, and to compare with other models.

\newsubsec{Behavioral criteria}
By behavioral criteria, we mean what implicit or explicit assumptions the model makes on the behavior of voters in the game.

\newpar{Voters' knowledge} Voters' behavior in the model should not be based on information that they are unlikely to have, or that is hard to obtain. 

\newpar{Voters' capabilities} The decision of the voter should not rely on complex computations, non-trivial probabilistic reasoning, etc.

\medskip
In addition, we would like the behavioral assumptions, whether implicit or explicit, to be supported by (or at least not to directly contradict) studies in human decision making. 
	
\newsubsec{Scientific criteria}

\newpar{Robustness} We expect the model to give similar predictions even if some voters do not exactly follow their prescribed behavior, if we slightly modify the available information, if we change order of players, etc. Except in a few threshold cases, we would not expect every small perturbation to change the identity of the winner.
	
\newpar{Few parameters} If the model has parameters, we would like it to have as few as possible, and we would like each of them to be meaningful (e.g., voters' memory). A related requirement is that the model will be \emph{easy to fit}, given empirical data on voting behavior. 
	
\newpar{Reproduction} When simulating the model on generated or real preferences, we would like it to reproduce common phenomena such as Duverger's law.
	
\newpar{Experiments} The hardest test for a model is to try and predict the behavior of human voters based on their real preferences. By comparing the predicted and real votes (or even just outcomes), we can measure the accuracy of the model. 
	

\medskip
The behavioral criteria together with the rationality requirement can be thought of as criteria of \emph{bounded rationality}. 

Lastly, Some voting models explain how strategic behavior is \emph{better for society}. For example, equilibrium outcomes in Plurality with a particular voter behavior may have a better Borda score, or coincide more with choosing a Condorcet winner. 
\rmr{removed the paradoxes part for now, as we do not deal with them yet}
Although this is not exactly a criterion for a good model (a real strategic behavior may not increase welfare), we are interested in the conditions under which the theory predicts an increased welfare, as these may be useful for design purposes. 

\section{Literature Review}

This section is a critical review of some prominent models from the literature of voting under the Plurality rule. For each model, we point out some criteria by which it excels or fails. This is not an exhaustive list, and our purpose is not to criticize other authors, but rather to identify pitfalls of which to beware, and also to identify the positive properties that we would like any new model to preserve. Outside the scope of this review are theories that take into account social utilities, sense of duty, and other incentives that do not follow directly from the voter's preferences (such as the Calculus of Voting~\cite{riker1968theory}). 


\newpar{The Leader Rule}
Before we start, we would like to highlight a model that fairs nicely in almost all of the above criteria, which is Laslier's \emph{leader rule} for Approval voting~\cite{laslier2009leader}. This is a simple parameter-free behavioral strategy, where a voter only needs to take into account a prospective ranking of the candidates (which can be available from a poll, a prior belief, or a previous voting round).\footnote{According to the leader rule, the voter approves all candidates that are preferred to the prospective leader, and approves the leader iff it is preferred to the runnerup.} The leader rule is behaviorally plausible, has attractive theoretical properties, makes minimal informational assumptions, and seems to explain human voting behavior at least in some contexts. Unfortunately, it does not seem to have a natural extension to Plurality voting. 


\newsubsec{Complete information}

The basic notion of a pure Nash equilibrium (PNE) in a normal form voting game is effectively useless, as almost any outcome (even one where all voters vote for their worst candidate) is a PNE. Consider voter $v$ in Example~\ref{ex1}. She is powerless to change the outcome, and therefore has no incentive to change her vote. 
Two refinements that have been suggested rely on plausible behavioral tendencies. 

\olev{These models, like the iterative model, are not complete information models. Why are they here? Shouldn't they move to the partial information section?} \rmr{they are complete info. a votes known exactly how others vote in eq.}\olev{Doesn't full info usually mean knowing everyone’s preferences, so I can tell who will manipulate how? I realize you mean full knowledge of the public knowledge (what's the final tally), but I think that's slightly different than complete knowledge. Maybe ``Certainty'' is a better term?} \rmr{think about you work with Dave. you took a game  defined by voters incentive. Now forget about that being a voting setting. You calculate all Nash equilibria of this normal-form game. There is no partial information, information sets, etc. Hence a public information game.}

\newpar{Truth bias} A truth-biased voter gains some negligible additional utility from reporting his true preferences (i.e., his top candidate)~\cite{meir2010convergence,dutta2012nash}. He will therefore be truthful, unless he can strictly gain by voting for a different candidate. Dutta and Sen focused on implementation rather than on how truth bias affects a particular voting rule. Nash equilibria under Plurality with truth-biased voters have been studied empirically by Thompson et al.~\shortcite{thompson2013empirical}, and analytically by Obraztsova et al.~\shortcite{obraztsova2013plurality}. Indeed, truth-bias significantly reduces the number of pure Nash equilibria (sometimes to zero), and in particular eliminates many unreasonable equilibria such as those where all voters vote for their least-preferred candidate. However if there is gap of at least 2 votes between the winner and the other candidates in the \emph{truthful profile}, then it is a Nash equilibrium with or without truth bias. This is the case in Example~\ref{ex1}, and clearly $v$ would still vote for $c$ under truth bias.

\newpar{Lazy bias} In many voting settings the voting action itself incurs some small cost or inconvenience to the vote. The conclusion that a ``rational'' voter would often rather abstain (as she is rarely pivotal) is typically referred to in the literature as the ``no-vote paradox'', see e.g.~\cite{downs1957economic,owen1984vote}. 
When voting is presented as a normal form game, we can add abstention as an additional allowed action. 
 A ``lazy'' voter would thus choose to abstain if she cannot affect the outcome. Pure Nash equilibria with lazy voters were studied, for example, in~\cite{desmedt2010equilibria}. These are typically highly degenerated voting profiles, where all voters except one abstain.

There are numerous other models that have been suggested for voting behavior with complete information. Solution concepts vary and include collusion~\cite{sertel2004strong,falik2012coalitions}, iterated removal of dominated strategies~\cite{dhillon04}, and specific decision diagrams tailored for three candidates~\cite{niemi1982sophisticated,felsenthal1988tacit}.
Crucially, most of these models assume that (apart from having access to the full preference profile) voters engage in complicated equilibrium computations, perform unlimited steps of iterated reasoning, and so on.
 That said, models that have been crafted based on empirical observations do manage to replicate interesting phenomena such as Duverger's law and the election of Condorcet winners~\cite{felsenthal1988tacit}.

\newsubsec{Voting under uncertainty}

Uncertainty partly solves the the problem of equilibria explosion, since any voter can become pivotal with some probability, and therefore cares about whom to vote for. In Example~\ref{ex1}, suppose that $v$ is unsure about the accurate gap between $a$ and $b$, but knows that it might be zero (whereas the gap between $a$ and $c$ is expected to be much larger). Then the rational thing would be to vote for $b$, since with positive probability the outcome for $v$ will improve. Uncertainty have also been proposed as a partial solution to the no-vote paradox~\cite{owen1984vote}.

One model that introduces uncertainty is trembling-hand perfection~\cite{messner02}, where each vote may be miscounted with some negligible probability. Messner and Polborn also assume some degree of voters' coordination, and prove that in every equilibrium in their model only two candidates get votes. That is, the model predicts a very strong version of Duverger's law.  They provide some additional results for the three-candidate case. 

Myerson and Weber's~\shortcite{myerson93} \emph{theory of voting equilibria} provides a different model of uncertainty, one that is closer to our approach. 
 An \emph{outcome} in this model is represented---as in our model---by a prospective vector of candidates' scores. From this vector they derive a distribution over possible scores, and the outcome is an \emph{equilibrium} if there is a voting profile supporting it, s.t. every voter is maximizing her expected utility w.r.t.\ this distribution.
Myerson and Weber prove that an equilibrium always exists for a broad class of voting rules including Plurality. Focusing on a few examples with three candidates under Plurality, they show that their model gives reasonable results, and that some equilibria replicate Duverger's law. 

While the Myerson and Weber model is highly attractive in many respects, it suffers from some severe shortcomings.
One drawback is that while an equilibrium exists, this is proved by a non-constructive (if elegant) fixed-point argument, and it is not clear how to compute such an equilibrium---let alone how the voters are supposed to find it. 

Another restriction  is that the model is only defined for non-atomic voters, whose influence on the outcome is infinitesimally small.

However the main issue we find  problematic is that voters must engage in non-trivial probabilistic reasoning, even if just to verify that they are playing an equilibrium strategy. The assumption that voters maximize some expected utility function is of course not limited to the Myerson and Weber paper, and is prevalent in the political science literature, see e.g.~\cite{silberman1975rational,palfrey1983strategic,alvarez2000new}, as well as in \cite{messner02} which was mentioned above. 

From a behavioral perspective such an assumption weakens the model, as people are notoriously bad at estimating probabilities, and are known to employ various heuristics instead~\cite{tversky1974judgment}.

An additional disadvantage of the expected utility maximization approach, is that voters preferences must be cardinal and cannot be described as a permutation over candidates. 

\newpar{Strict uncertainty} Voting with strict uncertainty (without probabilities) was considered by Ferejohn and Fiorina~\shortcite{ferejohn1974paradox}, who assumed voters each try to minimize their \emph{maximal regret} over all possible outcomes. However their model (like probability-based models) heavily relies on voters having cardinal utilities. Also, they take an extreme approach where voters do not use \emph{any} available information (similarly to the dominance-based approach in \cite{dhillon04}), and thus all states are considered possible (as also pointed out in a critique on the minimax approach by Aldrich~\shortcite{aldrich1993rational}). Another regret-based model was suggested in \cite{merrill1982strategic}.

A different approach to strict uncertainty was taken by Conitzer et al.~\shortcite{conitzer2011dominating} who considered manipulations that are weakly helpful in all possible states and strictly helpful in some. 
Reijngoud and Endriss~\shortcite{reijngoud2012voter} looked at a special case where the set of possible states is derived from a poll. Our notion of local dominance is based on similar reasoning, 
 however the mentioned papers restricted their analysis to a single manipulator, whereas we study equilibria. A similar notion of equilibrium under strict uncertainty was defined (without existence or convergence results) by van Ditmarsch et al.~\shortcite{van2013strategic}. We elaborate on the similarities an differences between these models and ours in Section~\ref{sec:epistemic}.

\newsubsec{Iterative and sequential games}

In sequential voting games voters report their preferences one at a time, where every voter can see all of the previous votes (as in Doodle and Facebook polls, and in some internal corporate e-mail surveys). The standard solution concept for sequential games is subgame perfect Nash equilibrium. Subgame perfect voting equilibria, with and without abstention, have been analyzed by multiple researchers~\cite{Farquharson:69,McKelvey:78,dekel2000sequential,desmedt2010equilibria}. However, subgame perfection is a highly sophisticated behavior that requires a voter to know exactly the preferences of all of her peers. It also requires multiple steps of backward induction, at which human players typically fail~\cite{johnson2002detecting}. 

\medskip
Iterative voting sounds like a similar setting, but has generated a very different type of voting models. In an iterative setting, voters start from some given voting profile, and in each turn one or more voters may change their vote~\cite{meir2010convergence,reijngoud2012voter}. For example, voters can be members of a committee sitting in the same room. The game ends when no voter wants to change her vote. Meir et al.~\shortcite{meir2010convergence} proved that if voters play one at a time and adopt a myopic best-response strategy they are guaranteed to converge to a Nash equilibrium of the stage game from any initial state. The main problem with this approach is that it does not solve equilibria explosion. In particular, in Example~\ref{ex1} voter $v$ does not have a response that is better than his current action ($c$). More recent papers on the iterative setting suggested other myopic strategies~\cite{grandi2013restricted}, which suffer from similar problems. 

When some or all voters are allowed to change their votes simultaneously, we essentially have repeated polls~\cite{Chopra:04,reijngoud2012voter}. A particular model based on uncertainty with iterated polls was suggested by Reyhani et al.~\shortcite{reyhani2012coordination}. According to this model, each voter considers some set of \emph{possible winners} based on the poll and on some internal parameter called \emph{inertia}, and votes for her most-preferred candidate in this set. 
However the paper provides few results (focused on three candidates), and applies some arbitrary considerations in the construction of the possible winner set.

\section{The Formal Model}

\paragraph{Basic notations}
We denote $[x]=\{1,2,\ldots,x\}$.
The sets of candidates and voters are denoted by $M$ and $N$, respectively, where $m=|M|,n=|N|$. The Plurality voting rule $f$ allows voters to submit their preferences over the candidates by selecting an action from the set $M$. Then, $f$ chooses the candidate with the highest score, breaking ties lexicographically. 

Let $\pi(M)$ be the set of all orders over $M$.
We denote a preference profile by $\vec Q\in (\pi(M))^n$. 
The preferences of voter $i$ are denoted by the total order $Q_i\in \pi(M)$, where $Q_i(a)\in[m]$ is the rank of candidate $a\in M$, and $q_i=Q^{-1}_i(1)$ is the most-preferred candidate. We denote $a \succ_i b$ if $Q_i(a) < Q_i(b)$. Let $\ol Q$ be the lexicographic order over candidates. 
Each voter announces his vote publicly. Thus the action of a voter is $a_{i}\in M$. The profile of all voters' votes is denoted as $\vec a \in M^n$, and the profile of all voters except $i$ is denoted by $\vec a_{-i}$. When abstention is allowed, we have $a_i\in M\cup \{\bot\}$, where $\bot$ denotes abstaining.

Note that every preference profile $\vec Q$ induces a game, where the (ordinal) utility of player $i$ in strategy profile $\vec a$ is $Q_i(f(\vec a))$. We refer to this game as the \emph{base game}.

If $a_i=q_i$ we say that $i$ is \emph{truthful} in $\vec a$, and voter $i$ is called a \emph{core supporter} of $a_i$. Otherwise, we say that $i$ is a \emph{strategic supporter} of $a_i$. 

The scoring profile $\vec s_{\vec a}\in \mathbb N^m$ that corresponds to $\vec a$ assigns a score to every candidate, taking tie-breaking into account. 
Formally, we refer to $s_{\vec a}(c)$ as equal to the number $|\{i\in N:a_i=c\}|$. When comparing two scores, we write $s_{\vec a}(c) >_{\ol Q} s_{\vec a}(c)$ if either $|\{i\in N:a_i=c\}| > |\{i\in N:a_i=c'\}|$, or $c,c'$ have the same number of votes and $c \succ_{\ol Q} c'$. We usually omit the subscript from $>_{\ol Q}$ as it is clear from the context. 

%

We will use $\vec a$ and $\vec s_{\vec a}$ interchangeably where possible, sometimes omitting the subscript $\vec a$ (note that we may only use $\vec s$ in a context where voters' identities are not important). Given a state $\vec s$ and an additional vote $a_i$, in the concatenated state $\vec s'=(\vec s,a_i)$ we have $s'(a_i) = s(a_i)+1$, and $s'(a)=s(a)$ for all $a\neq a_i$. 

\newsubsec{An intuitive description of voter response}
\label{sec:intuition}

While the notation we will introduce momentarily is somewhat elaborate and is intended to enable rigorous analysis, the main idea is very simple and intuitive.
We lean on two key concepts that are featured in previous models: \emph{dominated strategies}, and \emph{best-response}. From the perspective of voter $i$, a candidate $a$ \emph{dominates} candidate $b$ if $f(\vec s, a) \succ_i f(\vec s, b)$ for \emph{all} $\vec s$. In contrast, $a$ is a \emph{better-response} for a voter voting for $b$ \emph{in a particular state} $\vec s^*$, if $f(\vec s^*, a) \succ_i f(\vec s^*, b)$.

In our model, we will relax both concepts in a way that takes into account voters' uncertainty over the actual outcome. 
We assume that voters have a common \emph{estimated}, uncertain, view of the current state $\vec s$. In any given state, a voter considers a set of multiple ``close'' states which might be realized \textbf{without assigning probabilities to them}. We say that $a$ \emph{locally dominates} $b$ in $\vec s$ if $f(\vec s', a) \succ_i f(\vec s', b)$ in all $\vec s'$ that are considered ``possible'' in $\vec s$. Our key behavioral assumption is that a voter will avoid voting for candidates that are locally dominated (a standard assumption in strict uncertainty models, see Section~\ref{sec:epistemic}). In an iterative setting, a voter will vote for her most preferred candidate---among those who locally dominate her current action. Our key epistemic assumption (which is new) is that the possible states are those that are ``close'' to $\vec s$ according to some reasonable metric over vote counts. 

Tying our model back to Fisher's definition of tactical voting, a voter's \emph{belief} is captured by the estimated state $\vec s$, whereas her \emph{influence} is reflected by local dominance.
The different sets of states that voters consider are part of their type, and can account for diverse voter behavior, yet ones that are bounded-rational. 
Consider Example~\ref{ex1}, where the estimated counts are $\vec s=(45,40,15)$. If voter $v$ also considers as possible states where scores vary by $\pm 10$, he will be better off voting for $b$. Voting for $b$ might influence the outcome, whereas voting for $c$ is futile (unless $v$ considers an even higher variability in scores). 

One assumption that requires justification is the existence of the publicly known, estimated state $\vec s$. In an iterative and sequential settings the shared common view is easy to explain, as the estimated state is the actual current voting state. Uncertainty still exists since some voters might change their vote in later rounds.
In a simultaneous-vote game, the estimated state might be due to prior acquaintance with the other voters or due to polls, and uncertainty is due to polls' inaccuracy. 
While the assumption that voters know the (estimated) current state is slightly stronger than the one made by Laslier's heuristics for Approval voting (where only the identity of the leader is important), it is much weaker than the assumption that a voter knows the entire preference profile, or the exact distribution over preferences. 
%
%
%

\newsubsec{Local dominance}
Let $S\subseteq \mathbb N^m$ be a set of states. 
\begin{definition}
\label{def:beat}
We say that action $a_i$ $S$-beats $a'_i$ (w.r.t.\ voter $i$) if there is at least one state $\vec s\in S$ s.t.\ $f(\vec s,a_i) \succ_i f(\vec s, a'_i)$. That is where $i$ strictly prefers $f(\vec s,a_i)$ over $f(s,a'_i)$.
\end{definition}
We can think of $S$ as states that $i$ \emph{believes} to be possible (where these states do not include the action of $i$ himself).
The definition, however, does not depend on this interpretation.

\begin{definition}[Local dominance]
\label{def:local}
We say that action $a_i$ $S$-dominates $a'_i$ (w.r.t.\ voter $i$) if (I) $a_i$ $S$-beats $a'_i$; and (II) $a'_i$ does \emph{not} $S$-beat $a_i$.
\end{definition}
 Note that $S$-dominance is a transitive and antisymmetric relation (but not complete). See more on epistemic interpretation in Section~\ref{sec:epistemic}, where we also compare our definition of local dominance with previous work. In particular our definition coincides with the definition of dominance in \cite{conitzer2011dominating} and with similar definitions in \cite{reijngoud2012voter,van2013strategic}. The novelty  comes from the way $S$ is constructed, which is explained next. 

%

\subsection{Distance-based dominance}
%
Recall that every full or partial profile $\vec a \in M^n$ corresponds to state $\vec s_{\vec a}\in \mathbb N^m$. 
Suppose we have some distance metric for states, denoted by $\delta(\vec s,\vec s')$.
For voter $i$ and $\vec a\in M^n$, let $S_i(\vec a,x)\subseteq \mathbb N^m$ be  the set of states that are at distance at most $x$ from $\vec a_{-i}$. Formally, $S_i(\vec a,x)=\{\vec s' : \delta(\vec s',\vec s_{\vec a_{-i}}) \leq x\}$.

The $\delta$ distance may be a $\ell_d$ norm for some $d\geq 1$. Thus $\delta_{\ell_1}(\vec s',\vec s) \leq x$ means that we can attain $\vec s'$ by adding/removing a total of $x$ voters to $\vec a_{-i}$ (think of $x$ an an integer, and note that the total number of votes in $\vec s,\vec s'$ may be different). Similarly, the $\ell_\infty$ norm means that we can add or remove at most $x$ votes for each candidate.

Another distance we can consider is the \emph{multiplicative distance}, where $\delta_{M}(\vec s',\vec s ) \leq x$ if for all $a\in M$, both $s'(a) \leq s(a)(1+x)$ and $s(a) \leq s'(a)(1+x)$. Intuitively, this means that the score of each candidate can change (either increase or decrease) by a factor of at most $(1+x)$.

A third natural distance is the \emph{Earth Mover distance} (EM), where $\delta_{EM}(\vec s',\vec s) \leq x$ if $\vec s'$ can be attained from $\vec s$ by changing the vote of at most $x$ voters (similar to the $\ell_1$ norm, but with the constraint that the number of votes remains the same).
 
We can also consider more elaborate ways of defining possible world states, e.g., where a voter is optimistic or pessimistic regarding the score of his favorite candidates; EM distance where voters only transfer their votes to higher-ranked candidates,\footnote{A similar assumption, together with a variation of the multiplicative metric, was used in \cite{reyhani2012coordination} to determine the set of possible winners.} and so on. 
For ease of presentation, we will consider the $\ell_1$ norm, unless explicitly stated otherwise.

\newsubsec{Strategic voting and equilibria}

Let $g_i: M^n \rightarrow M$ be a \emph{response function}, i.e. a mapping from voting profiles to actions (which implicitly depends on the preferences of voter $i$). Any set of response function $(g_i)_{i\in N}$ induces a (deterministic) dynamic in the normal form game corresponding to a particular preference profile under Plurality. In particular, it determines all equilibria of this game, which are simply the states $\vec a$ where no voter has a response that changes the state. We refer to the response function of a voter as her \emph{type}. We emphasize that the set of voting equilibria depends only on voters' preferences and response functions, and not on whether they vote iteratively or simultaneously.

	
\begin{definition}
\label{def:eq} Let $N$ be a set of voters with response functions $(g_i)_{i\in N}$. 
A \emph{voting equilibrium} is a state $\vec a$, where $a_i=g_i(\vec a)$ for all $i\in N$.
\end{definition}
	
We next define the primary response function that strategic voters in our model apply, which is based on local dominance.
\begin{definition}
\label{def:strategic}
A \emph{strategic voter of type $r$} (or, in short, an $r$ voter) acts as follows in state $\vec a$.
Let $D\subseteq M$ be the set of candidates that $S_i(\vec a,r)$-dominate $a_i$. If $D$ is non-empty, then $i$ votes for his most preferred candidate in $D$. Formally, $g_i(\vec a) = \argmin_{d\in D} Q_i(d)$ if $D\neq \emptyset$, and $g_i(\vec a)=a_i$ otherwise.
\end{definition}
	
We refer to $r$ (or $r_i$ if types differ) as the \emph{uncertainty parameter} of the voter. We denote such a strategic step by $a_i \step{i} a'_i$, where $a'_i=g_i(\vec a)$.
We observe that:

\begin{itemize}
 \item If $a'_i$ $S_i(\vec a,r)$-dominates $a_i$, then $a'_i$ $S_i(\vec a,r')$-beats $a_i$, for any $r'\geq r$.
 \item For $r=0$, the voter knows the current voting profile exactly, and thus his response function is simple best-response, as in \cite{meir2010convergence}. 
	\item For $r=n$, a voter does not know anything about the current voting profile. Thus an action $a'_i$ locally dominates $a_i$ if and only if it weakly \emph{globally dominates} $a_i$. Note that this typically means $a_i$ is $i$'s least preferred candidate.\footnote{This may not hold if $n<m$. E.g. if $i$ is the only voter, then her top preference globally dominates all other candidates.}
\end{itemize}

Different definitions of strategic responses (distance metrics, value of $r$) may induce different sets of voting equilibria. However, the assumption that $i$ votes for the most-preferred candidate in $D$ is irrelevant to the set of induced equilibria.  The following is an immediate observation.
	
\begin{proposition}
\label{th:eq_dominated}
Let $N$ be a set of voters with preferences $\vec Q$ and following Def.~\ref{def:strategic} (voters may be of heterogeneous types). A voting profile $\vec a$ is a voting equilibrium, if and only if no voter votes for a locally dominated candidate. Formally, if $\forall i\in N, !\exists a'_i\in M,$ such that $a'_i$ $S_i(\vec a,r_i)$-dominates $a_i$.  
\end{proposition}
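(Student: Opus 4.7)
The plan is essentially to unpack the two relevant definitions (Def.~\ref{def:eq} and Def.~\ref{def:strategic}) and observe that they line up tightly. By Def.~\ref{def:eq}, $\vec a$ is a voting equilibrium exactly when $a_i = g_i(\vec a)$ for every $i$. By Def.~\ref{def:strategic}, the right-hand side equals $a_i$ precisely when the dominance set $D = \{d \in M : d \text{ } S_i(\vec a, r_i)\text{-dominates } a_i\}$ is empty, since otherwise $g_i(\vec a) = \argmin_{d \in D} Q_i(d)$, a candidate strictly different from $a_i$. So the proposition reduces to verifying that $a_i \notin D$ (so that $D \neq \emptyset$ really forces a change of vote) and then stating the biconditional in each direction.

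The one substantive step is this small lemma-like fact: an action never $S$-dominates itself. I would argue it directly from Def.~\ref{def:beat} and Def.~\ref{def:local}: $S$-beating requires a state $\vec s \in S$ with $f(\vec s, a_i) \succ_i f(\vec s, a_i)$, and since $\succ_i$ is a strict order this is impossible, so $a_i$ does not $S$-beat itself, hence does not $S$-dominate itself. Therefore $a_i \notin D$, and the case split in Def.~\ref{def:strategic} gives $g_i(\vec a) = a_i$ iff $D = \emptyset$.

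With that in place, both directions are immediate. For the ``only if'' direction, assume $\vec a$ is a voting equilibrium and fix any $i$; then $g_i(\vec a) = a_i$ forces $D = \emptyset$, i.e.\ no candidate $S_i(\vec a, r_i)$-dominates $a_i$. For the ``if'' direction, assume that for every $i$ no candidate $S_i(\vec a, r_i)$-dominates $a_i$; then $D = \emptyset$ for all $i$, so $g_i(\vec a) = a_i$ for all $i$, which is exactly the equilibrium condition of Def.~\ref{def:eq}.

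I do not expect any real obstacle: the only place where one must be careful is in noting that $D$ consists of candidates strictly different from $a_i$ (so that $D \neq \emptyset$ genuinely rules out equilibrium), and that hinges only on the strictness of $\succ_i$ in Def.~\ref{def:beat}. Since the statement allows heterogeneous types, I would keep $r_i$ indexed by $i$ throughout, but nothing else in the argument changes. This matches the author's description of the result as an ``immediate observation.''
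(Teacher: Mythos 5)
Your proof is correct and matches the paper's reasoning exactly: the paper gives no explicit proof, calling the result ``an immediate observation,'' and your unpacking of Definitions~\ref{def:eq} and~\ref{def:strategic} is precisely the intended argument. Your one substantive check---that $a_i$ cannot $S$-beat (hence cannot $S$-dominate) itself by strictness of $\succ_i$, so $D \neq \emptyset$ genuinely forces $g_i(\vec a) \neq a_i$---is the right point to make explicit and is consistent with the paper's remark that $S$-dominance is antisymmetric.
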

We assume of course, that the same parameters are used for defining the dominance relation and the strategic response of each voter.
For example, under Definition~\ref{def:strategic} with $r=0$, a strategic move coincides with best-response, and voting equilibria coincide with pure Nash equilibria of the base game $\vec Q$. 

	
\section{Convergence with Strategic Voters}
	
For any $w\in \mathbb N$, let $H_w(\vec s)\subseteq M$ be the set of candidates that need exactly $w$ more votes to become the winner. Thus $H_0(\vec s)=\{f(\vec s)\}$, $H_1(\vec s)=\{c: s(f(\vec s)) > s(c) \geq s(f(\vec s))-1\}$ (either have the same score as the winner and lose by tie-breaking, or $c$ wins in the tie-breaking but have one vote less), etc. Let $\ol H_w(\vec s) = \bigcup_{w'\leq w} H_{w'}(\vec s) = \{c : s(c) \geq s(f(\vec s))-w\}$. 

\newsubsec{Strategic responses and possible winners}

We say that candidate $c$ is a \emph{possible winner for $i$ in state $\vec a$} if there is an accessible state where $c$ wins. Formally, $W_i(\vec a,r) = \{c\in M : \exists \vec s'\in S_i(\vec a, r), f(\vec s',c)=c\}$. 

In contrast with $\ol H_w(\vec s)$, the definition of $W_i(\vec a,r)$ depends on the identity of the voter, and not only on her type.

We first show that in every strategic response, a voter always votes for her favorite possible winner.
\begin{lemma}
\label{lemma:best_winner}
Consider a strategic response $a_i \step{i} a'_i$ s.t.\ $a_i \notin W_i(\vec a,r)$. Then $a'_i = \argmin_{c\in W_i(\vec a,r)}Q_i(c)$. 
\end{lemma}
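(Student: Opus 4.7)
The plan is to show $a'_i = c^*$, where $c^* := \argmin_{c \in W_i(\vec a, r)} Q_i(c)$ is voter $i$'s most preferred possible winner, by sandwiching $a'_i$: since $a'_i$ is the most preferred member of the set $D$ of $S_i(\vec a, r)$-dominators of $a_i$, it suffices to prove $D \subseteq W_i(\vec a, r)$ (giving $c^* \succeq_i a'_i$) and $c^* \in D$ (giving $a'_i \succeq_i c^*$).

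The containment $D \subseteq W_i(\vec a, r)$ is the quick part and relies crucially on the hypothesis $a_i \notin W_i(\vec a, r)$: this forces $f(\vec s, a_i) = f(\vec s)$ for every $\vec s \in S_i(\vec a, r)$, because an extra vote for $a_i$ can either make $a_i$ win (ruled out) or leave the winner unchanged. Consequently, if $c \in D$ then its $S$-beating witness yields $f(\vec s, c) \succ_i f(\vec s)$, and since $f(\vec s, c)$ can only be $c$ or $f(\vec s)$, the strict inequality forces $f(\vec s, c) = c$, placing $c$ in $W_i(\vec a, r)$.

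For $c^* \in D$, I would verify the two conditions of $S$-dominance. The first, that $a_i$ does not $S$-beat $c^*$, is immediate from maximality: any state in which $a_i$ $S$-beat $c^*$ would require $f(\vec s) \succ_i c^*$ with $c^*$ winning on $i$'s vote, yet every default winner is itself a possible winner (an extra vote keeps it on top), contradicting $c^*$'s maximality in $W_i(\vec a, r)$. The second, that $c^*$ $S$-beats $a_i$, is straightforward when $c^*$ wins as a non-default in some state $\vec s^* \in S_i(\vec a, r)$: maximality gives $c^* \succ_i f(\vec s^*)$ and $\vec s^*$ is the witness.

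The main obstacle is the residual sub-case in which $c^*$ is only ever a default winner (never flipping on $i$'s vote). Here I would use any $c \in D$ with flipping witness $\vec s^{**}$ and default $d := f(\vec s^{**}) \neq c^*$, and invoke the $\ell_1$-geometry of $S_i(\vec a, r)$ to construct a nearby state where $c^*$ does flip: transferring one vote from $d$ to $c^*$ inside $\vec s^{**}$ (which is feasible in most sign regimes of $\vec s^{**} - \vec a_{-i}$ at the coordinates $c^*$ and $d$) together with the score inequalities $\vec s^{**}(c^*) < \vec s^{**}(d) \leq \vec s^{**}(c) + 1$ should produce a state in $S_i(\vec a, r)$ at which $c^*$'s vote flips the outcome, contradicting the sub-case hypothesis. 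The awkward corner regime (where the transfer would push $\vec s^{**}$ outside $S_i(\vec a, r)$) can be addressed either by performing an analogous transfer starting from the default-winner witness $\vec s^*$ for $c^*$, or by combining both witnesses through a midpoint/convexity argument within the $\ell_1$ ball. Once this geometric step is justified, $c^* \in D$ follows and we conclude $a'_i = c^*$.
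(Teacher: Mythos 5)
Your decomposition is exactly the paper's: prove (I) $D \subseteq W_i(\vec a,r)$ and (II) $c^* := \argmin_{c\in W_i(\vec a,r)}Q_i(c) \in D$, then sandwich $a'_i$. Your proof of (I), your argument that $a_i$ cannot $S_i(\vec a,r)$-beat $c^*$ (via the observation that every default winner is itself a possible winner, hence $\preceq_i c^*$), and your easy subcase of the beating direction (where $c^*$ wins as a non-default somewhere) are all correct, and in places more explicit than the paper's own write-up. So the only question is the residual subcase, which is also where the paper has to work.

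There, however, your construction has a genuine gap: the claim that transferring a single vote from $d$ to $c^*$ inside $\vec s^{**}$ produces a state where $c^*$ flips is false in general, because your inequalities $s^{**}(c^*) < s^{**}(d) \leq s^{**}(c)+1$ give \emph{no lower bound} on $s^{**}(c^*)$; the deficit of $c^*$ against $d$ can be of order $r$. Concretely, with center scores $(d{:}10,\, c{:}9,\, c^*{:}7,\, z{:}0)$, $a_i=z$, $r=4$, tie-breaking $c \succ_{\ol Q} d \succ_{\ol Q} c^*$, and preferences $c^*\succ_i c\succ_i d\succ_i z$, the state $\vec s^{**}=(12,11,7,0)$ lies in the ball, has default $d$, and lets $c$ flip, yet $c^*$ trails $d$ by $5$; one transfer leaves $c^*$ at $9$ against $11$. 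Since a single transfer moves the relevant margin by at most two, closing the deficit forces a multi-step path, and then staying inside $S_i(\vec a,r)$ becomes the actual difficulty --- which your fallbacks do not resolve: convexity puts the midpoint of the two witnesses in the ball but confers no knife-edge property on it. What is needed (and what the paper supplies, by pumping $b$'s score up to the tie point against the default for $\ell_1$, and in a footnote by a vote-shifting path between the two witness states for the other metrics) is an intermediate-value argument along a unit-step path, with deviation-decreasing steps scheduled before deviation-increasing ones so that every intermediate state remains in the ball, plus the check that at the \emph{first} state where $c^*$ wins with $i$'s vote it is not yet the default (so the default there, being a possible winner distinct from $c^*$, is strictly worse for $i$, yielding the beating witness). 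With that step supplied your proof closes; as written, $c^*\in D$ is asserted rather than proven.
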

The lemma holds for any $\ell_d$ norm, $d\geq 1$, the multiplicative distance, and EM distance.
\begin{proof}
Consider the set of candidates $D$ in Def.~\ref{def:strategic}. Since $a'_i\in D$, it is non-empty. We will show that (I) $D \subseteq W_i(\vec a,r)$; and (II) $b = \argmin_{c\in W_i(\vec a,r)}Q_i(c)$ is in $D$. This implies $b=a'_i$. 

For (I), every candidate in $D$ in particular $S_i(\vec a,r)$-beats $a_i$, and therefore must be a possible winner (transferring $i$'s vote from one non-possible winner to another does not change the outcome in any accessible state). 

For (II), first note that since $a'_i\in D$, there is a state $\vec s' \in S_i(\vec a,r)$ where $i$ prefers $f(\vec s',a'_i)=a'_i$ over $f(\vec s', a_i)=a_i$, thus $a'_i \succ_i a_i$. Also, by (I) $a'_i\in W_i(\vec a,r)$ and thus $b \succeq_i a'_i \succ_i a_i$. Thus $a_i$ cannot $S_i(\vec a,r)$-beat $b$. It remains to prove that $b$ $S_i(\vec a,r)$-beats $a_i$ (and thus locally dominates it).

We consider two cases. Suppose first that for all $\vec s'\in S_i(\vec a,r)$, $f(\vec s',a_i)=b$. Then for any $c\in M$, and any $\vec s'$, $f(\vec s',c)\in \{b,c\}$. Moreover, by definition of $b$, if $f(\vec s',c)=c$ then $c\preceq_i b$. Thus no candidate $S_i(\vec a,r)$-beats $a_i$, and $D=\emptyset$. 
Thus suppose there is a state $\vec s'\in S_i(\vec a,r)$, $f(\vec s',a_i)=c\prec_i b$, and there is also a state (since $b$ is a possible winner) $\vec s'' \in S_i(\vec a,r)$, $f(\vec s'',b)= b$. It can be verified for each of the metrics we consider ($\ell_d$ norms, multiplicative, EM) that there must be a state $\vec s^* \in S_i(\vec a,r)$ where $f(\vec s^*,b)= b$ but $f(\vec s^*,a_i)= c$. For the $\ell_1$ norm, we get $\vec s^*$ from $\vec s'$ by adding votes to $b$ until $c,b$ are tied (or until there is a difference of one for $c$, if $b$ beats $c$ in $\ol Q$).\footnote{For other $\ell_d$ norms and multiplicative distance, we can consider any path of states between $\vec s',\vec s''$ that is contained in $S_i(\vec a,r)$, by gradually removing votes from $c$ and from other candidates, and adding votes to $b$. The critical state $\vec s^*$ will be along this path. For EM distance some paths may fail if we transfer votes directly from $c$ to $b$, but we can use a path where we transfer votes first from winners to $a_i$, and then from $a_i$ to $b$.} Thus $b$ $S_i(\vec a,r)$-beats $a_i$.
\end{proof}

Lemma~\ref{lemma:best_winner} does not mean that our dynamics coincides with ``always vote for the most preferred possible winner''. 
It only holds when the current vote of $i$ is \emph{not} a possible winner, and when there are at least two possible outcomes. 
If there is only one possible outcome, $i$ will not move (which makes sense). Also, if $a_i$ is a possible winner, then typically $i$ will not move, and if he does move it may be to a non-possible winner. For example, if $a_i$ is the least-preferred possible winner, than \emph{any} other candidate locally dominates $a_i$. We will later see that while these situations are hard to analyze, they can be avoided in some paths to equilibrium.

\newpar{Threshold for possible winners}
We continue with the following lemma, which shows that for some simple metrics, the set of possible winners is exactly all candidates whose score is above a certain threshold. 
Denote by $f^*_i = f(\vec a_{-i})$ the candidate that would win if $i$ would not vote.
	
\begin{lemma}
	\label{lemma:threshold}
	Each of the metrics $\delta$ from ($\ell_1,\ell_\infty$, multiplicative) induces a function $\beta= \beta_{\delta,r}:\mathbb N \rightarrow \mathbb N$, where
\begin{itemize}
\item For every $\vec a, i\in N$, if $a_i\neq f^*_i$, then $W_i(\vec a,r)=\{c: s_{\vec a_{-i}}(c) \geq_{\ol Q} \beta( s(f^*_i))\}$. That is, possible candidates are all those whose score is above the threshold, which is a function of the score of the winner.\footnote{We break ties with $\beta(s(f^*_i))$ as if we break ties with $f^*_i$.} 
\item $\beta(s)$ is weakly increasing in $s$. 
\end{itemize}
 In particular, for $\delta_{\ell_1}$, $W_i(\vec a,r) = \ol H_{r+1}(\vec s_{\vec a})$ for any $i$ s.t. $a_i\notin \ol H_{r+1}(\vec s_{\vec a})$.\footnote{Similarly, for $\delta_{\ell_\infty}$, $W_i(\vec a,r) = \ol H_{2r+1}(\vec s_{\vec a})$ for any $i$ s.t. $a_i\notin H_{2r+1}(\vec s_{\vec a})$. See appendix for the full proof.}
\end{lemma}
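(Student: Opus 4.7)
The plan is to fix a voter $i$ with $a_i \neq f^*_i$, write $\vec t = \vec s_{\vec a_{-i}}$ and $s^* = t(f^*_i)$, and compute, for each candidate $c$ and each metric, the minimum ``cost'' $d_c$ of reaching a state $\vec s' \in S_i(\vec a,r)$ in which $c$ wins after $i$ adds her vote (i.e., $f(\vec s', c) = c$). The structure of the three admissible metrics forces $d_c$ to depend only on $t(c)$, $s^*$, and the tie-breaking comparison of $c$ with $f^*_i$ (not on the rest of $\vec t$), from which the threshold characterization of $W_i(\vec a, r) = \{c : d_c \leq r\}$ follows immediately.

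The central step is showing, for each metric, that the cheapest way to turn $c$ into a winner is canonical. Write $\tau_c = s^* - t(c) - \mathbf{1}[c \succ_{\ol Q} f^*_i]$, the score gap that $c$ must close to tie-or-beat $f^*_i$ after $i$'s extra vote. For $\delta_{\ell_1}$, I would argue that adding $\max\{0,\tau_c\}$ votes to $c$ alone is optimal: any alternative modification (removing votes from $f^*_i$, adding to a third candidate, transferring votes) either costs at least as much per unit of closed gap or risks creating a winner other than $c$. This yields $d_c = \max\{0,\tau_c\}$. For $\delta_{\ell_\infty}$, one can simultaneously push $c$'s score up by $r$ and every other candidate's score down by $r$, so the effective gap shrinks by $2r$, yielding $d_c = \lceil \max\{0,\tau_c\}/2\rceil$ after a short optimization argument that also handles any other candidates tied with $f^*_i$ at $s^*$. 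For the multiplicative metric, $c$'s score can be scaled up by $(1+r)$ and $f^*_i$'s down by $(1+r)$, giving an implicit formula for $d_c$ as the smallest $x$ with $t(c)(1+x) + 1 \geq_{\ol Q} s^*/(1+x)$. In every case, $d_c$ is non-increasing in $t(c)$ and non-decreasing in $s^*$.

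With $d_c$ in hand, define $\beta(s^*)$ to be the smallest threshold (with tie-breaks as if tied with $f^*_i$) such that $d_c \leq r$ whenever $t(c) \geq_{\ol Q} \beta(s^*)$. The formulas above show $\beta$ is well-defined, independent of $i$, and weakly increasing in $s^*$, giving the general bullet. To obtain the $\ell_1$ refinement $W_i(\vec a,r) = \ol H_{r+1}(\vec s_{\vec a})$, observe that the hypothesis $a_i \notin \ol H_{r+1}(\vec s_{\vec a})$ forces $f(\vec s_{\vec a}) = f^*_i$ and $\vec s_{\vec a}(c) = t(c)$ for every $c \neq a_i$; the $\ell_1$ formula gives $\beta(s^*) = s^* - r$, and substituting this into the threshold matches the definition of $\ol H_{r+1}$ (under the $\geq_{\ol Q}$ tie-break convention attached to $f^*_i$) verbatim.

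The main obstacle will be the optimality argument in the second step: one must rule out clever ways of distributing distance across multiple coordinates. For $\ell_1$ this is a short case analysis (``each unit of distance closes at most one unit of the gap to $f^*_i$''), but for $\ell_\infty$ it requires carefully handling the case of multiple candidates tied with $f^*_i$ at $s^*$, since all of them must be simultaneously decremented --- which the $\ell_\infty$ ball permits at no extra cost but which an $\ell_1$-style analysis would overcharge for. For the multiplicative metric the monotonicity of $\beta$ is least obvious and requires direct computation from the implicit formula, since both sides of the defining inequality move with $s^*$.
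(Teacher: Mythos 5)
Your proposal follows essentially the same route as the paper's proof: for each metric you exhibit the canonical extremal state (push $c$ up by the full radius and all rivals down, which is exactly the paper's ``critical state''), argue no other state in the ball does better, and read off a score threshold, recovering the paper's $\beta_{\ell_1}(s)=s-r-1$, $\beta_{\ell_\infty}(s)=s-2r-1$, and the ceiling formula for the multiplicative metric. One minor slip worth fixing: in your final paragraph you write $\beta(s^*)=s^*-r$ for $\ell_1$, whereas your own condition $d_c\le r$ with $\tau_c=s^*-t(c)-\mathbf{1}[c\succ_{\ol Q} f^*_i]$ yields the paper's $\beta(s^*)=s^*-r-1$ under the tie-breaking convention of the footnote, and with $\beta=s^*-r$ a candidate at score exactly $s^*-r$ losing the tie-break to $f^*_i$ would be wrongly excluded even though adding $r$ votes plus $i$'s vote makes it win outright.
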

\rmr{We can actually define the threshold based on $\vec a_{-i}$ and remove the requirement that $a_i\neq f^*_i$.}
A similar result can be proved for EM and other $\ell_d$ norms, but the threshold would depend on the score of all candidates and not just the winner. 
\begin{proof}[Proof for the $\ell_1$ metric]
	Consider first the $\ell_1$ norm. We set $\beta(s) =\beta_{\ell_1}(s)= s-r-1$. Clearly, $f^*_i$ is a possible winner and $s(f^*_i)>\beta(\vec a)$. Assume $c\neq f^*_i$, then $s(c)\neq_{\ol Q} \beta(s(f^*_i))$.
	If $s(c) > \beta(s(f^*_i))$, consider the state $\vec s'\in S_i(\vec a,r)$ where $c$ has $r$ additional votes. We have that 
	$s'(c)+1= s(c)+r+1 > \beta(s(f^*_i)) +r+1 = s(f^*_i)$,
	thus $f(\vec s',c)=c$.
	In contrast, if $s(c) <\beta(s^*_i)$, then in any $\vec s' \in S_i(\vec a,r)$, $s'(f^*_i) - s'(c) >1$ and thus $c$ cannot win. 
	Finally, since $a_i\notin W_i(\vec a,r)$, then $s_{\vec a}(c)=s_{\vec a_{-i}}(c)$ for all $c\in W_i(\vec a,r)$. Thus $W_i(\vec a,r) = \{c: s(c)\geq \beta_{\ell_1}(s(f^*_i))\} = \{c : s(c) \geq s(f^*_i)-(r+1)\}= \ol H_{r+1}(\vec s)$.
	\end{proof}
	Note that Lemmas~\ref{lemma:threshold} and \ref{lemma:best_winner} together entail that the strategic decision of the voter is greatly simplified from both a behavioral and a computational perspective. There is no need to consider all possible world states. Only to check which candidates are sufficiently close to the winner in terms of their prospective score, and select the one that is most preferred.

	\rmr{In the Myerson and Weber model, they show that a voter only needs to consider the probabilities that pair of candidates are tied. The above lemma shows something similar in our model: a voter only considers which candidates may be tied.}


\newsubsec{Existence of equilibrium and convergence from the truthful state}
In what follows, we will only consider the $\ell_1$ norm for simplicity. However most results hold for other metrics as well. We also highlight that as in \cite{meir2010convergence}, we allow any number of non-strategic voters as part of the input. To allow continuous reading, some of the proofs were deferred the appendix.

\newpar{Best-response graphs and schedulers}
Given a game, any dynamic induces a directed graph whose vertices are the states of the game ($M^n$ in the case of the Plurality game). There is an edge from a state $\vec a$ to a state $\vec a'$, if (1) $\vec a,\vec a'$ differ only by the action of one player $i$; and (2) $g_i(\vec a)=a'_i$. We call this graph the \emph{best-response} graph. We can similarly create a \emph{group best-response graph} where edges correspond to the (non-coordinated) actions of subsets of agents. That is, there is an edge $(\vec a,\vec a')$ iff there is a subset $N'\subseteq N$ s.t.\ $a'_i=g_i(\vec a)$ for all $i\in N'$, and $a'_i=a_i$ for $i\notin N'$. 
	
	A \emph{scheduler} selects which voters play at any step of the game.  
	A scheduler $\phi$ is a \emph{singleton scheduler} if it always selects a single voter, and otherwise it is a \emph{group scheduler}.\footnote{We emphasize that a group of voters moving at the same time \emph{does not} coincide with a coalitional manipulation. Each player acts as if he is the only one moving, and it may well be that the result of a move by a group is worse for all of its members.}
	 We assume that the order of players is determined by an arbitrary singleton scheduler (see~\cite{apt2012classification}). Equivalently, a scheduler can be thought of as a \emph{tie-breaking} mechanism for turns, when more than one voter or set of voters have a strategic move. 
	
	We next show that when starting from the truthful state, a singleton scheduler guarantees convergence to an equilibrium. In particular, an equilibrium must exist.
	Proposition~\ref{th:basic} also follows from more general convergence results that we will show later. However, we provide a simple and detailed proof that reveals the natural structure of the equilibrium that is reached. In the appendix, we show how to extend the proof to other distance metrics (Theorem~\ref{th:basic_distances}).
	\begin{theorem}
	\label{th:basic}
Suppose that all voters are of type $r$. Then a voting equilibrium exists. Moreover, in an iterative setting where voters start from the truthful state, for any singleton scheduler, they will converge to an equilibrium in at most $n(m-1)$ steps. 
\end{theorem}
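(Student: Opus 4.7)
The plan is to exhibit a bounded, strictly monotone potential. Define $\Phi(\vec a) := \sum_{i \in N} Q_i(a_i)$; in the truthful state $\Phi(\vec a^0) = n$ (every voter casts rank $1$), while always $\Phi(\vec a) \le nm$. I would show that every strategic step in a singleton-scheduled trajectory starting from truthful strictly increases $\Phi$ by at least one, so the trajectory terminates in at most $n(m-1)$ steps; by Proposition~\ref{th:eq_dominated} the terminal state is a voting equilibrium, yielding both existence and the convergence bound.

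The monotonicity of $\Phi$ reduces to the claim that every moving voter switches to a strictly less-preferred candidate. I would prove this by induction on $t$, writing $f^t := f(\vec s^t)$ and carrying two auxiliary invariants: (I) the winning score $s^t(f^t)$ is non-decreasing in $t$, and (II) the possible-winner set $W^t := \ol H_{r+1}(\vec s^t)$ is weakly shrinking, $W^{t+1} \subseteq W^t$. Granting (I)--(II), consider a voter $i$ moving at step $t+1$. The current $a_i^t$ equals either $q_i$ (if $i$ has never moved) or $\argmin_{c \in W^{t'-1}} Q_i(c)$ for the most recent earlier move $t'$; either way, (II) guarantees that every candidate strictly preferred by $i$ to $a_i^t$ lies outside $W^t$. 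If $a_i^t$ itself lay in $W^t$, it would then be $i$'s most-preferred candidate in $W^t$, and a direct pivotal-state argument from the definitions of $S$-beats and $S$-dominance shows that such a vote is not dominated, contradicting that $i$ moves. Hence $a_i^t \notin W^t$; Lemma~\ref{lemma:threshold} gives $W_i(\vec a^t, r) = W^t$, and Lemma~\ref{lemma:best_winner} pins the target to $a_i^{t+1} = \argmin_{c \in W^t} Q_i(c)$, which is strictly less preferred than $a_i^t$ by the same invariant.

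To close the induction I propagate (I) and (II) to step $t+1$. For (I), since $f^t \in W^t$ but $a_i^t \notin W^t$, the moving voter has $a_i^t \neq f^t$, so $s^{t+1}(f^t) = s^t(f^t)$ and hence $s^{t+1}(f^{t+1}) \geq s^t(f^t)$. For (II), the target $a_i^{t+1}$ lies in $W^t$ directly by Lemma~\ref{lemma:best_winner}, and any other $c \in W^{t+1}$ satisfies $s^{t+1}(c) \leq s^t(c)$, so $s^t(c) \geq s^{t+1}(c) \geq s^{t+1}(f^{t+1}) - r - 1 \geq s^t(f^t) - r - 1$ by (I), placing $c \in W^t$. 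The main obstacle is the interlocking of the invariants with the monotone-move claim---ruling out $a_i^t \in W^t$ uses (II) at step $t$, which itself rests on (I) at step $t$---but processing the steps in temporal order makes the induction well-founded.
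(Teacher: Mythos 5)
Your proposal is correct and follows essentially the same route as the paper's own proof: your invariants (I) and (II) are the paper's invariants (3) and (4), your pivotal-state contradiction showing the mover's current candidate lies outside $W^t = \ol H_{r+1}(\vec s^t)$ is the paper's invariant (1), and the strict-compromise claim is its invariant (2), all resting on the same two lemmas (Lemmas~\ref{lemma:best_winner} and~\ref{lemma:threshold}) and the same temporal induction. The potential $\Phi(\vec a)=\sum_{i\in N} Q_i(a_i)$ is only a cosmetic repackaging of the paper's observation that each voter compromises at most $m-1$ times, giving the identical $n(m-1)$ bound.
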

%
%

\begin{figure}
\centering
\begin{subfigure}[]{0.4\textwidth}

\begin{tikzpicture}[scale=0.7,transform shape]

\tikzstyle{vote1}=[draw=black,fill=white,
inner sep=0pt,minimum size=5mm]
\tikzstyle{vote2}=[draw=black,fill=black!10!white,
inner sep=0pt,minimum size=5mm]
\tikzstyle{vote3}=[draw=black,fill=black!30!white,
inner sep=0pt,minimum size=5mm]
\tikzstyle{vote4}=[draw=black,fill=black!50!white,
inner sep=0pt,minimum size=5mm]
\tikzstyle{vote5}=[draw=black,fill=black!70!white,text=white,
inner sep=0pt,minimum size=5mm]

	\draw[thick] (1,0.2) -- (6,0.2);
	\node at (1,0) {a};
	\node at (2,0) {b};
	\node at (3,0) {c};
	\node at (4,0) {d};
	\node at (5,0) {e};
	\draw[dashed] (0,1.5) -- (6,1.5);
	\node at (5,4) {Initial state};
	\node at (1,0.5) [vote1] {};
		\node at (1,1) [vote1] {};
			\node at (1,1.5) [vote1] {};
				\node at (1,2) [vote1] {};
				\node at (1,2.5) [vote1] {};
	
	\node at (2,0.5) [vote2] {$a$};
	\node at (2,1) [vote2] {$a$};
	\node at (2,1.5) [vote2] {$a$};
	\node at (2,2) [vote2] {$a$};

	\node at (3,0.5) [vote3] {};
		\node at (3,1) [vote3] {};
			\node at (3,1.5) [vote3] {};
			\node at (3,2) [vote3] {};
  
	\node at (4,0.5) [vote4] {$a$};
		\node at (4,1) [vote4] {$c$};
	
	\node at (5,0.5) [vote5] {$a$};
		\node at (5,1) [vote5] {$c$};
		\node at (5,1.5) [vote5] {$c$};

\end{tikzpicture}

\end{subfigure}\quad
\begin{subfigure}[]{0.4\textwidth}

\begin{tikzpicture}[scale=0.7,transform shape]

\tikzstyle{vote1}=[draw=black,fill=white,
inner sep=0pt,minimum size=5mm]
\tikzstyle{vote2}=[draw=black,fill=black!10!white,
inner sep=0pt,minimum size=5mm]
\tikzstyle{vote3}=[draw=black,fill=black!30!white,
inner sep=0pt,minimum size=5mm]
\tikzstyle{vote4}=[draw=black,fill=black!50!white,
inner sep=0pt,minimum size=5mm]
\tikzstyle{vote5}=[draw=black,fill=black!70!white,text=white,
inner sep=0pt,minimum size=5mm]

	\draw[thick] (0,0.2) -- (6,0.2);
	\node at (1,0) {a};
	\node at (2,0) {b};
	\node at (3,0) {c};
	\node at (4,0) {d};
	\node at (5,0) {e};
	\draw[dashed] (0,2) -- (3,2);
	\draw[dashed] (3,2.5) -- (6,2.5);
	\node at (5,4) {Mid game};
	\node at (1,0.5) [vote1] {};
		\node at (1,1) [vote1] {};
			\node at (1,1.5) [vote1] {};
				\node at (1,2) [vote1] {};
				\node at (1,2.5) [vote1] {};
	
	\node at (2,0.5) [vote2] {$a$};
	\node at (2,1) [vote2] {$a$};
	\node at (2,1.5) [vote2] {$a$};
	\node at (2,2) [vote2] {$a$};

	\node at (3,0.5) [vote3] {};
		\node at (3,1) [vote3] {};
			\node at (3,1.5) [vote3] {};
			\node at (3,2) [vote3] {};
			\node at (3,2.5) [vote5] {$c$};
		\node at (3,3) [vote5] {$c$};
		\node at (3,3.5) [vote4] {$c$};

	\node at (4,0.5) [vote4] {$a$};

	\node at (5,0.5) [vote5] {$a$};

\end{tikzpicture}
\end{subfigure}\\
\begin{subfigure}[]{0.4\textwidth}

\begin{tikzpicture}[scale=0.7,transform shape]

\tikzstyle{vote1}=[draw=black,fill=white,
inner sep=0pt,minimum size=5mm]
\tikzstyle{vote2}=[draw=black,fill=black!10!white,
inner sep=0pt,minimum size=5mm]
\tikzstyle{vote3}=[draw=black,fill=black!30!white,
inner sep=0pt,minimum size=5mm]
\tikzstyle{vote4}=[draw=black,fill=black!50!white,
inner sep=0pt,minimum size=5mm]
\tikzstyle{vote5}=[draw=black,fill=black!70!white,text=white,
inner sep=0pt,minimum size=5mm]	

\draw[thick] (0,0.2) -- (6,0.2);
	\node at (1,0) {a};
	\node at (2,0) {b};
	\node at (3,0) {c};
	\node at (4,0) {d};
	\node at (5,0) {e};
	\draw[dashed] (1,3) -- (6,3);
	\node at (5,4) {Mid game 2};
	\node at (1,0.5) [vote1] {};
		\node at (1,1) [vote1] {};
			\node at (1,1.5) [vote1] {};
				\node at (1,2) [vote1] {};
				\node at (1,2.5) [vote1] {};
				\node at (1,3) [vote2] {$a$};
	\node at (1,3.5) [vote2] {$a$};
	\node at (1,4) [vote2] {$a$};
					\node at (1,5) {};
	
	\node at (2,0.5) [vote2] {$a$};

	\node at (3,0.5) [vote3] {};
		\node at (3,1) [vote3] {};
			\node at (3,1.5) [vote3] {};
			\node at (3,2) [vote3] {};
			\node at (3,2.5) [vote5] {$c$};
		\node at (3,3) [vote5] {$c$};
		\node at (3,3.5) [vote4] {$c$};

	\node at (4,0.5) [vote4] {$a$};

	\node at (5,0.5) [vote5] {$a$};

\end{tikzpicture}
\end{subfigure}\quad\quad
\begin{subfigure}[]{0.4\textwidth}

\begin{tikzpicture}[scale=0.7,transform shape]

\tikzstyle{vote1}=[draw=black,fill=white,
inner sep=0pt,minimum size=5mm]
\tikzstyle{vote2}=[draw=black,fill=black!10!white,
inner sep=0pt,minimum size=5mm]
\tikzstyle{vote3}=[draw=black,fill=black!30!white,
inner sep=0pt,minimum size=5mm]
\tikzstyle{vote4}=[draw=black,fill=black!50!white,
inner sep=0pt,minimum size=5mm]
\tikzstyle{vote5}=[draw=black,fill=black!70!white,text=white,
inner sep=0pt,minimum size=5mm]	

\draw[thick] (0,0.2) -- (6,0.2);
	\node at (1,0) {a};
	\node at (2,0) {b};
	\node at (3,0) {c};
	\node at (4,0) {d};
	\node at (5,0) {e};
	\draw[dashed] (1,4) -- (6,4);
	\node at (5,4.5) {Equilibrium};
	\node at (1,0.5) [vote1] {};
		\node at (1,1) [vote1] {};
			\node at (1,1.5) [vote1] {};
				\node at (1,2) [vote1] {};
				\node at (1,2.5) [vote1] {};
				\node at (1,3) [vote2] {$a$};
	\node at (1,3.5) [vote2] {$a$};
	\node at (1,4) [vote2] {$a$};
	\node at (1,4.5) [vote4] {$a$};
		\node at (1,5) [vote5] {$a$};
	
	\node at (2,0.5) [vote2] {$a$};

	\node at (3,0.5) [vote3] {};
		\node at (3,1) [vote3] {};
			\node at (3,1.5) [vote3] {};
			\node at (3,2) [vote3] {};
			\node at (3,2.5) [vote5] {$c$};
		\node at (3,3) [vote5] {$c$};
		\node at (3,3.5) [vote4] {$c$};

\end{tikzpicture}

\end{subfigure}

\caption{
\label{fig:converge}\small{An example of 18 voters voting over candidates $\{a,b,c,d,e\}$. The top left figure shows the initial (truthful) state of the game.  The letter inside a voter is his second preference. The dashed line marks the threshold $\beta_{\ell_1}$ of possible winners for voters of type $r=2$.  Thus candidates on or above the threshold are the set $\ol H_3(\vec a)$. Candidates that are on the dashed line (in $H_3(\vec a)$) are considered possible winners only by voters that do not currently vote for them.  Note that due to tie breaking it is not the same for all candidates. For example, since $a$ beats $b$ in tie-breaking, $b$ needs $2$ more votes to win in the initial state.
In the next two figures we can see voters leaving their candidates (who are not possible winners for them) to join one of the leaders.  The last figure shows an equilibrium that was reached. Note that although both of $a,c$ are possible winners for the last supporter of $b$, he has no strategic move. A different equilibrium may have been reached with a different scheduler.} }
\end{figure}
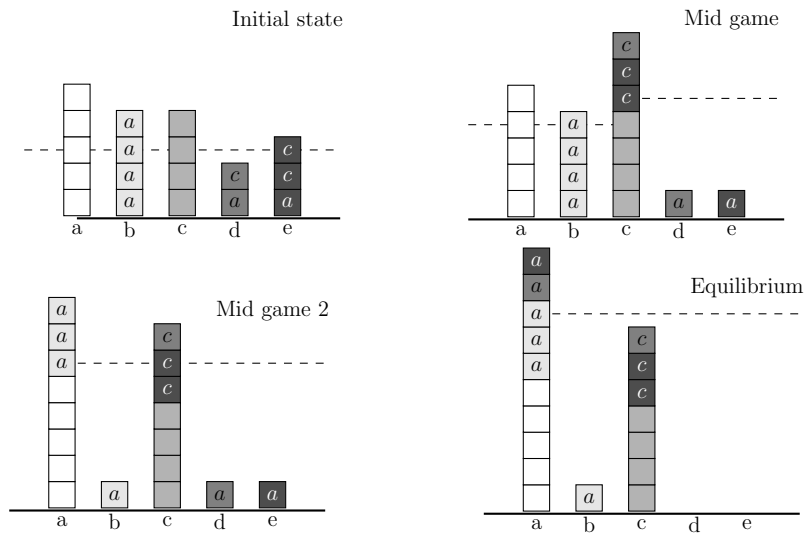
\begin{proof}
If the truthful state $\vec q$ is stable, then we are done. Thus assume it is not.
Let $\vec a^t$ (and $\vec s^t$) be the voting profile after $t$ steps from the initial truthful vote $\vec a^0 = \vec q$.
Let $a_i\step{i} a'_i$ be a move of voter $i$ at state $\vec s=\vec s_{\vec a^t}$ to state $s'=\vec s_{\vec a^{t+1}}$. 
%
We claim that the following hold throughout the game.
Recall that by Lemma~\ref{lemma:threshold}, $\ol H_{r+1}$ is the set of possible winners for all voters who are voting for other candidates.
\begin{enumerate}
	\item $a_i\notin \ol H_{r+1}(\vec s')$, i.e., once a candidate is deserted, it is no longer a possible winner.
	\item $a'_i \prec_i a_i$, i.e., voters always ``compromise'' by voting for a less-preferred candidate.
	\item $\max_{a\in M} s'(a) \geq \max_{a \in M} s(a)$, i.e., the score of the winner never decreases.
	\item $\ol H_{r+1}(\vec s') \subseteq \ol H_{r+1}(\vec s)$, i.e., the set of possible winners can only shrink.
\end{enumerate}
We prove this by a complete induction. 
\begin{enumerate}
	\item If this is the first move of $i$ then $a_i=q_i$. Otherwise, by Lemma~\ref{lemma:best_winner}, $a_i$ is the most-preferred candidate of $i$ in $\ol H_{r+1}(\vec s^{t'})$ where $t'$ is the time when $i$ last moved. By induction on (4), $\ol H_{r+1}(\vec s) \subseteq \ol H_{r+1}(\vec s^{t'})$. So if $a_i \in \ol H_{r+1}(\vec s) $, it must be the most-preferred candidate in the set. Assume, towards a contradiction, that $a_i \in \ol H_{r+1}(\vec s)$; then there is a state $\hat{\vec s} \in S_i(\vec a,r)$ where $i$ is pivotal between $a_i$ and $f(\vec a)\prec_i a_i$. For any $c\neq a_i$, $f(\hat{\vec s},c)=f(\vec a)$, and in particular for $c=a'_i$. Therefore $a_i$ $S_i(\vec a,r)$-beats $a'_i$, which means $a'_i$ does not $S_i(\vec a,r)$-dominate $a_i$. A contradiction. 
	
	\item If this is the first move of $i$ then this is immediate. Otherwise, by induction on Lemma~\ref{lemma:best_winner} and (4), if $a'_i \succ_i a_i$, then $i$ would prefer to vote for $a'_i$ in his previous move, rather than for $a_i$.
	 
	\item As in (1), if $i$ votes for $a_i=f(\vec a)\in \ol H_{r+1}(\vec a)$, then $a_i$ is $i$'s most-preferred possible winner. Thus it cannot be locally dominated by any other candidate.
	 \item Since by (3) the score of the winner never decreases, the only way to expand $\ol H_{r+1}$ is to add a vote to a candidate not in $\ol H_{r+1}$. By Lemma~\ref{lemma:best_winner} this never occurs. 
\end{enumerate}
Finally, by property (2), each voter moves at most $m-1$ times before the game converges. 
\end{proof}

The proof not only shows that an equilibrium exists, it also describes exactly the way in which such equilibria are reached from the truthful state. There is always a set of ``leaders'' ($\ol H_{r+1}$ in the case of the $\ell_1$ norm). Strategic voters vote for their favorite candidate in this set, if their current candidate is not a possible winner. At some point candidates may ``drop out'' of the race as their gap from the winner increases, and the set $\ol H_{r+1}$ shrinks. This continues; in the reached equilibrium, all strategic voters vote for their best possible winners (which is in $\ol H_{r+1}$). As we will see next, the only case where there are voters $N'\subseteq N$ not voting for a possible winner, is when the gap between the winner and the runner-up is exactly $r+1$, and all of $N'$ prefer the current winner. In particular, the gap can never grow larger than $r+1$, and thus there are always at least two leaders whenever there is some strategic interaction (see Figure~\ref{fig:converge}). 

\begin{lemma}
\label{lemma:two_winners}
Under the conditions of Theorem~\ref{th:basic}, either $\vec q$ is stable, or in every state $\vec s^t$ we have $|\ol H_{r+1}(\vec s^t)|>1$. Also, in the last state $\vec a$ either $|\ol H_{r}(\vec s_{\vec a})|=1$, or all voters vote for possible winners. Any voter voting for $c\notin \ol H_{r+1}(\vec s^t)$ prefers $f(\vec a)$ over any other candidate in $\ol H_{r+1}(\vec s^t)$.
\end{lemma}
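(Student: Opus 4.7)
The plan is to prove the three claims in the order they appear in the lemma: the invariant $|\ol H_{r+1}(\vec s^t)|\geq 2$ throughout the dynamic, then the preference characterization in the terminal state, and finally the equilibrium dichotomy (which relies on the preference claim). Throughout I rely on Lemma~\ref{lemma:best_winner}, Lemma~\ref{lemma:threshold}, and the monotonicity facts (1)--(4) from the proof of Theorem~\ref{th:basic}.

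For the invariant, I first check that $|\ol H_{r+1}(\vec q)|=1$ implies $\vec q$ is stable: by Lemma~\ref{lemma:threshold} every voter has $W_i=\{f(\vec q)\}$, and a short $\ell_1$ calculation shows that no competitor of $f(\vec q)$ can top any state in $S_i(\vec q,r)$, so no candidate $S_i$-beats $a_i$. For the inductive step, I consider a single move $a_i \step{i} b$ and split on whether $b=f(\vec s^t)$. If $b\neq f(\vec s^t)$, a direct score bookkeeping places both $f$ and $b$ in $\ol H_{r+1}(\vec s^{t+1})$. If $b=f(\vec s^t)$, I note that $f$ $S_i$-beating $a_i$ must be witnessed by a state $\vec s^*$ in which some candidate $c'\neq f$ tops $\vec s^*$; this forces $c'\in \ol H_r(\vec s^t)$ (another $\ell_1$ step) and $c'\prec_i f$, hence $|\ol H_r(\vec s^t)|\geq 2$. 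The move shifts the new threshold to $s(f)-r$, yielding $\ol H_{r+1}(\vec s^{t+1})\supseteq \ol H_r(\vec s^t)$, again of size at least two.

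For the preference claim, let $\vec a$ be the terminal state, fix voter $i$ with $a_i=c\notin \ol H_{r+1}(\vec a)$, and let $b$ be $i$'s favorite candidate in $\ol H_{r+1}(\vec a)$; I suppose $b\neq f(\vec a)$, so $b\succ_i f(\vec a)$, and derive a contradiction by showing $b$ $S_i$-dominates $c$. For the ``beat'', I construct $\vec s^*\in S_i(\vec a,r)$ by boosting $b$ just enough that $b$ wins with $i$'s vote while $f(\vec a)$ wins without it, which is feasible since $b\in \ol H_{r+1}$. For the ``not beaten'' part, I observe that in every $\vec s^*\in S_i$ the winners $f(\vec s^*,c)$ and $f(\vec s^*,b)$ must lie in $\{b\}\cup \ol H_r(\vec a)$, because a candidate topping a distance-$\leq r$ state without $i$'s extra vote must have gap at most $r$ from $f(\vec a)$; since $b$ is the favorite in $\ol H_{r+1}\supseteq \ol H_r$, every such alternative winner is $\preceq_i b$. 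Hence $b$ $S_i$-dominates $c$, a contradiction, so $b=f(\vec a)$.

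For the equilibrium dichotomy, assume $|\ol H_r(\vec s_{\vec a})|>1$ yet voter $i$ votes for $c\notin \ol H_{r+1}(\vec a)$; the preference claim pins $f(\vec a)$ as $i$'s favorite in $\ol H_{r+1}(\vec a)$, and the hypothesis gives $w\in \ol H_r(\vec a)\setminus\{f(\vec a)\}$ with $f(\vec a)\succ_i w$. I construct $\vec s^*\in S_i(\vec a,r)$ by transferring up to $r$ votes from $f(\vec a)$ to $w$ so that $f(\vec s^*,c)=w$ while $f(\vec s^*,f(\vec a))=f(\vec a)$, establishing $f(\vec a)$ $S_i$-beats $c$; the non-beat direction mirrors the preceding paragraph, using that every alternative winner sits in $\ol H_r\subseteq \ol H_{r+1}$. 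The resulting dominating move contradicts equilibrium. The central technical obstacle in both the preference and dichotomy arguments is the pivotal-vote subcase, in which $f(\vec s^*,c)$ and $f(\vec s^*,b)$ (or $f(\vec s^*,f(\vec a))$) genuinely differ; the key observation is that both alternative winners are then confined to $\ol H_r$, so the favorite-possible-winner property still controls the preferences, modulo a short tie-breaking check that collapses the residual case where two distinct candidates would have to be strictly tied at the very top of $\vec s^*$.
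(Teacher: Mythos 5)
Your first and third claims are handled correctly, and by essentially the paper's own route: the paper proves the invariant by contradiction (if $|\ol H_{r}(\vec s)|=1$, a move onto the current winner cannot be strategic, since $f(\vec a)$ then wins in every accessible state and so does not $S_i(\vec a,r)$-beat $a_i$), which is exactly the contrapositive of your witness-state argument, and your explicit dominance construction for the preference claim fleshes out the paper's terser ``potentially pivotal between the current winner and a better possible winner'' reasoning. Both also rely, as you do, on the invariants from the proof of Theorem~\ref{th:basic}.

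There is, however, a genuine gap in your treatment of the second claim. You prove that when $|\ol H_r(\vec s_{\vec a})|>1$ no voter votes for a candidate outside $\ol H_{r+1}(\vec a)$, but the lemma asserts the strictly stronger statement that every voter votes for a \emph{possible winner}, i.e., $a_i\in W_i(\vec a,r)$ for all $i$. These are not the same: $W_i$ is computed from $\vec a_{-i}$, so a candidate $c\in H_{r+1}(\vec s_{\vec a})$ --- at gap exactly $r+1$ --- is a possible winner for every voter \emph{except} its own supporters. For a supporter $i$ of $c$ we have $s_{\vec a_{-i}}(c)=s_{\vec a}(c)-1$, and even after spending the full budget $r$ plus $i$'s own vote, $c$ falls one vote short of $f(\vec a)$ in every state of $S_i(\vec a,r)$; the paper flags precisely this asymmetry in the caption of Figure~\ref{fig:converge}, and it is why Lemma~\ref{lemma:threshold} is stated only for voters with $a_i\notin \ol H_{r+1}(\vec s_{\vec a})$. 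Your third paragraph never touches such boundary voters, so it does not exclude a putative equilibrium with $|\ol H_r(\vec s_{\vec a})|>1$ in which some voter sits on a candidate in $H_{r+1}(\vec s_{\vec a})$ that she herself cannot make win. The repair is the paper's own argument, which applies to \emph{any} voter with $a_i\notin W_i(\vec a,r)$, regardless of whether $a_i\in\ol H_{r+1}$: since $|\ol H_r(\vec s_{\vec a})|>1$, both $f(\vec a)$ and a runner-up can win in states of $S_i(\vec a,r)$ while $i$'s vote stays on $a_i$, so the outcomes are not constant; both lie in $W_i(\vec a,r)$, so at least one is strictly worse for $i$ than her favorite possible winner $b$, and the two-outcome case in the proof of Lemma~\ref{lemma:best_winner} then shows that $b$ $S_i(\vec a,r)$-dominates $a_i$, contradicting stability. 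With that substitution your proof of the second claim goes through, and the rest of your write-up stands.
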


\newsubsec{Convergence under broader conditions}

To show robust convergence results, there seem to be two main extensions. First, we would like the game to converge from any initial state, and not just from the truthful one. Second, we would like convergence to occur even if more than one voter moves between states. In other words, we would like to see convergence under any group scheduler. 

Unfortunately, under arbitrary group schedulers, convergence is not guaranteed even from the truthful state. A simple example for $r=0$ appears in \cite{meir2010convergence}, where there are two runner-ups that are preferred to the winner by all of their supporters, but the supporters fail to coordinate on a runner-up to promote. 
We conjecture that as in the case of best-responses ($r=0$), a \emph{singleton scheduler} would guarantee convergence from any initial state. 

We show that if we make two mild restrictions on the scheduler, then convergence from any state is guaranteed even for groups. We say that a step $a\step{i} a'$ is of type~1 if $a' \prec_i a$, and type~2 if $a' \succ_i a$. We call type~1 steps \emph{compromise steps}, and type~2 \emph{opportunity steps}.

\begin{proposition}
\label{th:group}
Suppose that all voters are of type $r$. 
Consider any group scheduler such that (1) any voter has some chance of playing as a singleton (i.e., this will occur eventually); (2) the scheduler always selects (an arbitrary subset of) voters with type~2 moves, if such exist. 
Then convergence is guaranteed from any initial state after at most $O(nm)$ singleton steps occur.
\end{proposition}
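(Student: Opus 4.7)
My plan is to adapt the invariant-based argument of Theorem~\ref{th:basic} to the more general setting of arbitrary initial states and group dynamics, leveraging the scheduler's two conditions to bound singleton opportunity (type~2) and compromise (type~1) moves separately.

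First, I would use Lemma~\ref{lemma:best_winner} to reduce the counting to changes in the sets $W_i$. Whenever voter $i$ performs a singleton strategic move from a state $\vec a$ with $a_i\notin W_i(\vec a,r)$, the new vote $a'_i$ equals $\argmin_{c\in W_i(\vec a,r)}Q_i(c)$. Since $W_i(\vec a,r)$ depends only on $\vec a_{-i}$, which is unaltered by $i$'s own move, the candidate $a'_i$ is still $i$'s favorite possible winner immediately after the move. Consequently, $i$ cannot move again until some other voter's action changes the set $W_i$. Thus the number of singleton moves by $i$ is at most one more than the number of changes to $W_i(\vec a^t,r)$ along the trajectory.

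Second, I would bound singleton type~1 moves. By scheduler condition~(2) these occur only when no type~2 move is available. In this regime I would adapt invariants (1)--(4) of the proof of Theorem~\ref{th:basic}: the current winner's score is non-decreasing, $\ol H_{r+1}(\vec s)$ can only shrink, once a candidate is deserted via a type~1 move it never re-enters $\ol H_{r+1}$, and every voter compromises at most $m-1$ times. This yields a bound of at most $n(m-1)$ singleton type~1 moves.

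Third, I would bound singleton type~2 moves via the potential $\Phi(\vec a)=\sum_i Q_i(a_i)\in[n,nm]$, which strictly decreases on every (singleton or group) type~2 step and strictly increases on every (singleton or group) type~1 step. Combined with the second step, the total $\Phi$-decrease along type~2 moves is at most the total $\Phi$-increase along type~1 moves plus the range $nm-n$; since both of the latter are $O(nm)$, so is the number of singleton type~2 moves.

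The main obstacle is showing that the monotonicity invariants of Theorem~\ref{th:basic} carry over to arbitrary starting states and to group moves. A type~1 group move might momentarily inflate $\ol H_{r+1}$ by depleting the current winner's vote count, or create type~2 opportunities for voters outside the moving group by shifting their $W_j$ sets. Scheduler condition~(1), which guarantees that singleton moves eventually occur, will be crucial in controlling these transient disruptions, and I expect the final accounting to rely on a composite potential that tracks both $|\ol H_{r+1}(\vec a)|$ and $\Phi(\vec a)$ along the trajectory.
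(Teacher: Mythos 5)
There is a genuine gap, and it sits exactly where your closing paragraph senses trouble. Under a group scheduler the dynamics can actually \emph{cycle}: when several voters in a type~1 group step cyclically swap among candidates in $\ol H_{r+1}$ (voter $j$ joins the very candidate that voter $i$ just deserted), the deserted candidate remains a possible winner for its deserter, which creates type~2 ``roll-back'' moves that can undo the entire compromise and return the game to the previous state, ad infinitum (this is precisely the scenario in Figure~\ref{fig:group}). This kills both of your middle steps. Your bound on singleton type~1 moves imports invariants from Theorem~\ref{th:basic} that fail here: the invariant ``once deserted, never again a possible winner'' is exactly what the roll-back phenomenon violates, and the per-voter ``at most $m-1$ compromises'' bound fails because a voter can re-ascend via type~2 moves and compromise again, unboundedly often. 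Consequently your $\Phi(\vec a)=\sum_i Q_i(a_i)$ accounting is circular: the total $\Phi$-decrease along type~2 moves is controlled by the total $\Phi$-increase along type~1 moves, but the latter is unbounded on a cycling trajectory --- it is only bounded if step two holds, which it does not. Likewise, your first step's reduction (``$i$ moves at most once more than $W_i$ changes'') is vacuous, since $W_i$ changes unboundedly often along a cycle. No potential that makes strict progress on every step, or on every type~1/type~2 step with matched bookkeeping, can exist for this dynamic; the statement only bounds \emph{singleton} steps precisely because group steps may recur forever.

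The missing mechanism, which is the heart of the paper's proof, is as follows. After an initial phase of at most $n(m-1)$ type~2 steps, the trajectory is decomposed into \emph{chunks}: one type~1 group step followed by the maximal run of type~2 steps. The potential is $\alpha(\vec a) = -n\cdot|\ol H_{r+1}(\vec a)| + |N_{r+1}(\vec a)|$, where $N_{r+1}(\vec a)$ is the set of voters supporting candidates in $\ol H_{r+1}(\vec a)$; shrinkage of the possible-winner set dominates, and for fixed size the supporter mass breaks ties. One then shows that the type~2 steps inside a chunk can only roll back portions of the chunk's opening type~1 step --- a new possible winner can arise only for a voter who just moved, and can only be the candidate she just deserted --- so each chunk either strictly increases $\alpha$ or is the identity (a complete roll-back). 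Crucially, a complete roll-back requires the desertion pattern of the type~1 step to decompose into disjoint cycles, each undone in its entirety, and a \emph{singleton} type~2 step can never complete such a cycle. Hence every singleton step forces a strict increase of $\alpha$, whose range is $O(nm)$, which yields the claimed bound while tolerating arbitrarily many group steps in between. This chunk decomposition and the cycle/roll-back combinatorics are the key ideas your plan would need to supply; the ``composite potential'' you gesture at in your final paragraph is essentially $\alpha$, but without the roll-back analysis the accounting cannot close.
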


The proof shows that in particular, for singleton schedulers there is a path of best-responses from any state to an equilibrium (any singleton path where type~2 steps precede type~1 steps). 
Starting from the truthful state is a special case, where there are no type~2 moves before the first type~1 move. 
The assumption that type~2 moves are played first can be justified to some extent, since type~1 moves are ``compromises'' and thus voters may be more reluctant to carry them out. 

%

\newsubsec{Truth-bias and Lazy-bias}

The basic strategic behavior from Definition~\ref{def:strategic} still allows for some counter intuitive actions. For example if there is only one possible winner, then a voter will not change her vote regardless of how much she dislikes her current action. In particular we can still construct an equilibrium where all voters vote for their least preferred candidate!

However, the notion of local dominance is very flexible, and allows us to define more subtle and plausible behaviors. Specifically, by adding a negligible utility $\eps$ to a favorite action, such as truth-telling or abstaining, we get that this action locally dominates any other action where the outcome is the same (that is, the same in all accessible world states). We can thus define \emph{truth-biased} or \emph{lazy} voters, who prefer to tell the truth or to abstain whenever they do not see themselves as pivotal. We highlight that the local neighborhood considered by the voter when deciding whether to cast a strategic vote and when applying truth-bias/lazyness, is not necessarily the same neighborhood.
	
	\begin{definition}
	\label{def:truth_bias}
	A {strategic truth-biased voter of type $(r,k)$} (or, in short, a $T(r,k)$ voter) acts as follows in state $\vec a$.
\begin{enumerate}
	\item (strategic move) Let $D\subseteq M$ be the set of candidates that $S_i(\vec a,r)$-dominate $a_i$. If $D$ is non-empty, then $i$ votes for his most preferred candidate in $D$. 
	\item if $a_i$ $S_i(\vec a,k)$-beats $q_i$, then $i$ keeps current vote $a_i$.
	\item (truth bias move) otherwise, $i$ moves to $q_i$.
 \end{enumerate}
\end{definition}
A \emph{strategic lazy voter of type $(r,k)$} (an $L(r,k)$ voter) can be similarly defined, replacing $q_i$ with the action $\bot$ (abstain).

%

We make the following observations.

\begin{itemize}
 \item A $T(r,n)$ or $L(r,n)$ voter is just an $r$ voter. This is since truth/lazy-bias will never be applied. 
 \item $T(0,0)$ voters follow the truth-bias model in \cite{thompson2013empirical,obraztsova2013plurality}.
\item $L(0,0)$ voters follow the (simultaneous) ``lazy'' model of Desmedt and Elkind~\shortcite{desmedt2010equilibria}.
\end{itemize}
Intuitively, the parameters $r$ and $k$ reflect two uncertainty thresholds that determine how much a voter is inclined to make a strategic \underline{r}esponse, and how much to \underline{k}eep his current vote. 
 We argue that it is natural to assume $k>r$, which entails that a voter requires a lower uncertainty level in order to make a new strategic step, than to merely keep his current strategic vote. From a behavioral perspective, such an assumption accounts for \emph{default-bias}: decision makers have a higher tendency to stay with their current decision, than to adopt a new one~\cite{kahneman1991anomalies}. Lower values of $k$ (i.e., closer to $r$) correspond to a stronger truth-bias, whereas higher values correspond to a stronger default bias.
\begin{proposition}
\label{th:truth_bias}
Suppose that each voter $i$ is of type $L(r,k_i)$ or $T(r,k_i)$, where $k_i>r$. Then a voting equilibrium exists. 
Moreover, in an iterative setting where voters start from the truthful state, they will always converge to an equilibrium in at most $3nm$ steps.
\end{proposition}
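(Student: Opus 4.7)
My plan is to extend the invariant argument of Theorem~\ref{th:basic} to accommodate the additional Step~3 of Definition~\ref{def:truth_bias}. A key initial observation is that in the truthful state $\vec a = \vec q$, no voter is eligible for a restoration move (truth-bias or lazy), since the condition ``$a_i$ does not $S_i(\vec a, k_i)$-beat $q_i$'' is vacuously false when $a_i = q_i$ (the two actions give identical outcomes in every state). Hence the dynamics begins as a pure strategic phase governed by Definition~\ref{def:strategic}, and Theorem~\ref{th:basic} applies verbatim: strategic moves are compromise moves ($a'_i \prec_i a_i$), the set $\ol H_{r+1}(\vec s)$ shrinks monotonically, and after at most $n(m-1)$ steps the system either reaches a full equilibrium or enters a state where some voter is eligible to restore.

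Next, I would analyze what happens once restoration moves become available. The assumption $k_i > r$ is pivotal: a voter restores only when her current vote $a_i$ is wasted in the \emph{larger} neighborhood $S_i(\vec a, k_i)$, not merely at the level $r$ used for strategic dominance. I intend to leverage this extra slack to prove a containment property: a restoration move $a_i \to q_i$ does not add $q_i$ to $\ol H_{r+1}(\vec s)$. The argument would combine Lemma~\ref{lemma:threshold} (possible winners form a threshold set determined by $s(f^*_i)$) with the fact that $k_i$-wastedness of $a_i$ forces a strict score gap between $q_i$ and the top scorer, large enough that gaining one vote cannot bring $q_i$ across the threshold. Given this, the key invariants from Theorem~\ref{th:basic} continue to hold: no previously deserted candidate is revived, and $\ol H_{r+1}(\vec s)$ keeps shrinking (weakly) across restoration events as well.

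With the containment property in hand, the final step is a counting argument. Between consecutive restorations of voter $i$, her strategic moves are compromise moves by Lemma~\ref{lemma:best_winner} and can therefore number at most $m-1$. Each restoration resets her vote to $q_i$ (or $\bot$), but a monotone argument on the shrinking $\ol H_{r+1}(\vec s)$ bounds the number of restoration episodes of any fixed voter by $O(m)$. Summing over voters yields the advertised $3nm$ bound, with a loose factor of $3$ absorbing the alternation of strategic and restoration phases. The step I expect to be the main obstacle is the containment property itself: formally ruling out cascading revivals caused by restoration. This will require a careful joint use of the strict gap $k_i > r$, the threshold form of $W_i(\vec a, r)$, and case analysis on whether $a_i$ lies in $\ol H_{r+1}(\vec s)$ at the moment of restoration.
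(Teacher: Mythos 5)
Your overall blueprint matches the paper's (extend the invariants of Theorem~\ref{th:basic}, use $k_i>r$ to control restorations, then count $3nm$ steps, and your per-voter accounting is essentially the paper's $K_i\leq m-1$, $R_i\leq K_i+m$), but the step you yourself flag as the main obstacle --- the containment property --- is justified by the wrong quantity, and this is a genuine gap. The condition that triggers a restoration, namely that $a_i$ does not $S_i(\vec a,k_i)$-beat $q_i$, constrains the pivotalness of the \emph{current vote} $a_i$ and says nothing about the score of $q_i$. Indeed, if $a_i$ is hopeless within radius $k_i$, then $f(\vec s',a_i)$ is the baseline winner in every accessible state, while $f(\vec s',q_i)\in\{f(\vec s'),q_i\}$ with $q_i$ being $i$'s top choice, so $a_i$ never beats $q_i$ \emph{regardless of how close $q_i$ sits to the threshold $\beta$} of Lemma~\ref{lemma:threshold}; taken in isolation, the restoring vote could then push $q_i$ into $\ol H_{r+1}$. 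What actually prevents this is history, not the restoration condition: every core supporter of $q_i$ deserted it only at a moment when $q_i$ was not a possible winner at level $r$ (its score at least $r+2$ below the then-current winner), and since the winner's score never decreases along the dynamics from the truthful state, even if all deserters return the gap stays too large. That historical argument is exactly the paper's treatment of what it calls type-b moves, and it cannot be replaced by a pointwise deduction from ``$k_i$-wastedness.'' Relatedly, your claim that restorations occur only when the vote is wasted needs its own proof: the paper separately rules out restorations by voters currently voting inside $\ol H_{r+1}$ (type-a moves) by constructing, using $k_i>r$ and the guarantee $|\ol H_{r+1}(\vec s)|>1$ from Lemma~\ref{lemma:two_winners}, a state in $S_i(\vec a,k_i)$ where $a_i$ is pivotal, so that $a_i$ $S_i(\vec a,k_i)$-beats $q_i$; without this, the invariant that the winner never loses votes is not secured under truth-bias.

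The second gap is that lazy voters are not addressed at all, and they cannot be handled by simply substituting $\bot$ for $q_i$. When $|\ol H_{r+1}(\vec q)|=1$ there are no strategic moves, yet lazy supporters of the runaway winner will abstain, so the winner's score \emph{decreases} --- breaking the monotonicity on which your entire induction rests, since a dropping winner score can make $\ol H_{r+1}$ expand rather than shrink. The paper devotes a separate argument to this regime: it tracks the gap $k^t$ between the winner and its closest runner-up, shows a supporter abstains only while $k_i\leq k^t-2$ (hence $k^t\geq r+3$ before such a step and $k^{t+1}\geq r+2$ after), concludes by induction that $k^t$ never falls below $r+2$, so $\ol H_{r+1}$ remains a singleton, no strategic moves are ever triggered, each lazy voter moves at most once, and convergence follows within $n$ additional steps. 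Without this case analysis, the claimed bound is unproven for $L(r,k_i)$ voters.
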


\newsubsec{Other considerations}

\paragraph{General voting equilibria}
What is the structure of voting equilibria that are not obtained by starting from the truthful state? They can be more diverse, but no voter votes for a locally dominated candidate (Prop.~\ref{th:eq_dominated}).

\begin{corollary} \label{th:general_eq} In any voting equilibrium $\vec a$, for any set of heterogeneous voters (i.e. voter $i$ is of type $T(r_i,k_i)$ or $L(r_i,k_i)$, it holds that:
\begin{enumerate}
	\item Every voter is either truthful, or votes for a candidate in $\ol H_{k_i}(\vec a)$.
	\item No voter votes for his least-preferred candidate in $W_i(\vec a,r_i)$.
\end{enumerate}
\end{corollary}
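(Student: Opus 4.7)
The plan is to prove the two parts of Corollary~\ref{th:general_eq} separately, in each case exploiting the equilibrium condition from Definition~\ref{def:truth_bias}: neither the strategic move nor the default-restoring move may fire for any voter.

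For Part~1, write $q^*_i$ for the voter's default action ($q_i$ for a $T$ voter, $\bot$ for an $L$ voter) and suppose $a_i \neq q^*_i$. Since the default move does not fire at equilibrium, $a_i$ must $S_i(\vec a, k_i)$-beat $q^*_i$, so there is some $\vec s \in S_i(\vec a, k_i)$ with $f(\vec s, a_i) \succ_i f(\vec s, q^*_i)$. A short case analysis forces $f(\vec s, a_i) = a_i$, so that $a_i$ is a possible winner in the $k_i$-neighborhood: for a $T$ voter, $f(\vec s, q_i) \in \{q_i, f(\vec s)\}$ and since $q_i$ is $i$'s top choice the only consistent situation is that $a_i$ is pivotal in $\vec s$; for an $L$ voter the same conclusion follows from $f(\vec s, \bot) = f(\vec s)$ together with the requirement that $f(\vec s, a_i) \neq f(\vec s)$. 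A quantitative $\ell_1$-budget computation (of the same flavour as the proof of Lemma~\ref{lemma:threshold}, but accounting for the fact that $i$'s vote already sits on $a_i$ so no extra vote needs to be added to make $a_i$ win) then yields $s_{\vec a}(a_i) \geq s_{\vec a}(f(\vec a)) - k_i$, i.e., $a_i \in \ol H_{k_i}(\vec a)$.

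For Part~2, I would argue by contradiction. Suppose voter $i$ votes for $a_i$ where $a_i$ is the strict least-preferred candidate in $W := W_i(\vec a, r_i)$, and let $b$ be the most-preferred candidate in $W$, so $b \succ_i a_i$ and $|W| \geq 2$. The goal is to show that $b$ $S_i(\vec a, r_i)$-dominates $a_i$, contradicting Proposition~\ref{th:eq_dominated}. Dominance splits into two sub-claims.

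The easier sub-claim is that $a_i$ does not $S_i(\vec a, r_i)$-beat $b$. For any $\vec s \in S_i(\vec a, r_i)$, both $f(\vec s, a_i)$ and $f(\vec s, b)$ lie in $\{a_i, b, f(\vec s)\}$; moreover $f(\vec s)$ itself belongs to $W$ because $f(\vec s, f(\vec s)) = f(\vec s)$. Using that $b$ is most-preferred in $W$ and $a_i$ is least-preferred in $W$, a case check on the resulting four combinations rules out $f(\vec s, a_i) \succ_i f(\vec s, b)$. The second sub-claim, that $b$ $S_i(\vec a, r_i)$-beats $a_i$, is the main technical obstacle. I plan to mimic the witness construction used in the proof of Lemma~\ref{lemma:best_winner}: starting from any $\vec s^b \in S_i(\vec a, r_i)$ with $f(\vec s^b, b) = b$, modify it within the $\ell_1$ budget to obtain a state $\vec s^* \in S_i(\vec a, r_i)$ where $f(\vec s^*, b) = b$ while $f(\vec s^*, a_i) \prec_i b$. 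The delicate subcase is $a_i \in W$, where the naive witness may also leave $a_i$ pivotal and so yield $f(\vec s^b, a_i) = a_i$ or $= b$ with no strict gain for $b$. There I would redistribute votes among $\{a_i, b, f(\vec a)\}$, using the quantitative characterizations $b \in \ol H_{r_i+1}(\vec a)$ and $a_i \in \ol H_{r_i}(\vec a) \cup \{f(\vec a)\}$, to place $b$ at least two votes ahead of $a_i$ in $\vec s^*$ so that $i$'s lone vote cannot bring $a_i$ to the lead while $b$ still wins $(\vec s^*, b)$. Once $b$ $S$-beats $a_i$, combining with the first sub-claim gives $b$ dominates $a_i$, producing the contradiction.
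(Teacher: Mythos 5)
Your Part~1 is correct and is exactly the intended derivation (the paper states the corollary without proof, treating it as immediate from Proposition~\ref{th:eq_dominated}, the bias step in Definition~\ref{def:truth_bias}, and the remark after Lemma~\ref{lemma:best_winner}); sub-claim~A of your Part~2 is also sound, and your insistence on the strict/$|W|\geq 2$ reading is in fact necessary (a truthful voter with $W_i(\vec a,r_i)=\{q_i\}$ ``votes for his least-preferred possible winner'' vacuously). The genuine gap is in your witness construction for sub-claim~B in the delicate subcase $a_i \in W := W_i(\vec a,r_i)$. You propose to place $b$ at least two votes ahead of $a_i$ in $\vec s^*$ ``so that $i$'s lone vote cannot bring $a_i$ to the lead while $b$ still wins $(\vec s^*,b)$.'' This blocks $f(\vec s^*,a_i)=a_i$ but does nothing to prevent $f(\vec s^*,a_i)=b$: if $b$ leads $\vec s^*$ outright --- which pushing $b$ up tends to produce --- then $b$ wins under \emph{both} ballots, there is no strict gain, and no dominance. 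A beat witness requires $i$'s single vote to be decisive: either (a) $a_i$ and $b$ effectively tied at the top (outcome pair $(a_i,b)$); or (b) $b$ exactly one vote short of some $d\prec_i b$ while $a_i$ cannot reach the lead (pair $(d,b)$); or (c) $a_i$ exactly pivotal while $b$ cannot win (pair $(a_i,d)$ with $d\succ_i a_i$, using that all realized outcomes lie in $W$ and $a_i$ is its strict minimum). Your construction aims at (b) but never arranges the ``exactly one short'' condition, and your quantitative data do not always permit (a): if $s(f(\vec a))-s_{\vec a}(a_i)=r_i$ and $s(f(\vec a))-s_{\vec a_{-i}}(b)=r_i+1$, tying both at the top simultaneously costs roughly $2r_i+1$, exceeding the $\ell_1$ budget $r_i$.

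Moreover, route (b) silently presumes a suitable $d\prec_i b$ exists near the top; when $b=f(\vec a)$ (the voter's favorite possible winner is the current leader) the natural $d=f(\vec a)$ collapses, and the runner-up may be unreachable within budget $r_i$, so the case analysis cannot be skipped. The repair is the intermediate-state argument the paper itself invokes in the proof of Lemma~\ref{lemma:best_winner}: since $a_i,b\in W$, the ball $S_i(\vec a,r_i)$ contains a state where $f(\cdot,a_i)=a_i$ and a state where $f(\cdot,b)=b$, and it is connected under single-vote changes; walking between the two states, at the first step where the outcome under one ballot flips, $i$'s own vote still preserves the outcome under the other ballot (a one-vote change cannot undo a two-vote swing), yielding one of the decisive pairs (a)--(c) and hence $f(\vec s^*,b)\succ_i f(\vec s^*,a_i)$. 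With sub-claim~B repaired this way, your combination with sub-claim~A and Proposition~\ref{th:eq_dominated} completes the proof as intended.
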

In particular, when $k_i$ is not too high, this rules out ``crazy'' equilibria where voters vote for their least preferred candidate.
It is possible, though, that all voters vote for their second-least-preferred possible winner, or their second-least-preferred candidate (consider a profile where candidates $a,b$ are ranked last by all voters, where roughly half rank $a$ above $b$). 

\newpar{Unique best-response}
Meir et al.~\shortcite{meir2010convergence} argued for a ``unique best-response''. They enforced a requirement that voters only move to candidates that, as a result, become winners. We show that when $r>0$, the notion of local dominance provides some natural justification for this requirement.\footnote{We thank Greg Stoddard for a discussion leading to this observation.}
\begin{observation}
Let $r>0$, $S=S_i(\vec a,r)$. Suppose that $c$ is the most preferred possible winner for $i$, and $b$ is not a possible winner for $i$. Then $c$ $S_i(\vec a,r)$-dominates $b$. 
\end{observation}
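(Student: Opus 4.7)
The plan is to verify both halves of the $S$-dominance definition separately: (I) $c$ $S_i(\vec a,r)$-beats $b$, and (II) $b$ does not $S_i(\vec a,r)$-beat $c$. Throughout I write $S := S_i(\vec a,r)$ and $W_i := W_i(\vec a,r)$.

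For Part~(II) I would argue by contradiction. Suppose some $\vec s \in S$ satisfies $f(\vec s,b) \succ_i f(\vec s,c)$ and set $w_b := f(\vec s,b)$. Since $b \notin W_i$ we have $w_b \neq b$. A routine score check then places $w_b$ in $W_i$: because $w_b$ is ahead of every candidate in the profile $(\vec s,b)$ (with lexicographic tie-break) and $w_b \neq b$, transferring $i$'s ballot from $b$ to $w_b$ only widens that lead, so $f(\vec s,w_b) = w_b$. Because $c$ is the \emph{most preferred} possible winner, either $c = w_b$ or $c \succ_i w_b$. The key observation is that $(\vec s,b)$ and $(\vec s,c)$ differ only in the scores of $b$ and $c$, so the winner $w_c := f(\vec s,c)$ must be either $w_b$ (if $w_b$ still dominates in $(\vec s,c)$) or $c$ (if $c$'s extra vote lets it overtake $w_b$). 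The case $w_c = w_b$ forces $w_b \succ_i w_b$; the case $w_c = c$ forces $w_b \succ_i c$, contradicting $c = w_b$ or $c \succ_i w_b$. The sub-case $w_b = c$ is absorbed: the same score argument then forces $f(\vec s,c) = c$, yielding $c \succ_i c$.

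For Part~(I) I would pick a witness $\vec s^* \in S$ for $c \in W_i$, so $f(\vec s^*,c) = c$, and examine $w := f(\vec s^*,b)$, which satisfies $w \neq b$ by $b \notin W_i$. If $w \neq c$, the argument of Part~(II) shows $w \in W_i$, so $c \succ_i w$ (as $c$ is most preferred and $w \neq c$), and $\vec s^*$ already witnesses $f(\vec s^*,c) = c \succ_i w = f(\vec s^*,b)$. The remaining sub-case $w = c$ is the genuine obstacle: in $\vec s^*$ the candidate $c$ wins even when $i$ diverts her vote to $b$, so this state by itself does not witness $S$-beats. Here the hypothesis $r \geq 1$ is used non-trivially. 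The plan is to shift mass within the $\ell_1$-ball of radius $r$ toward a runner-up possible winner $c' \neq c$ --- whose existence and proximity to $c$ in $\vec s_0 := \vec s_{\vec a_{-i}}$ are controlled by Lemma~\ref{lemma:threshold} --- so as to produce a state $\vec s' \in S$ in which $c$'s victory becomes contingent on $i$'s own ballot, namely one with $s'(c)+1 \geq_{\ol Q} s'(c') >_{\ol Q} s'(c)$. In such a state $f(\vec s',c) = c$ while $f(\vec s',b) = c'$, and $c \succ_i c'$ yields the required strict preference. The delicate part will be verifying feasibility of this shift against the gap $s_0(c) - s_0(c')$ and the fixed lexicographic tie-break, and this is the main technical step of the proof.
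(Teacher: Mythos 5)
Your Part~(II) is correct and complete: the observation that $(\vec s,b)$ and $(\vec s,c)$ differ only in the scores of $b$ and $c$, so that $f(\vec s,c)\in\{f(\vec s,b),\,c\}$, combined with $f(\vec s,b)\in W_i(\vec a,r)$ and the maximality of $c$, disposes of all cases, including $w_b=c$. The branch $w\neq c$ of Part~(I) is also sound. The problem is the step you defer as ``the main technical step'': it is not merely delicate, it is impossible in general, and the statement as written is false exactly there. First, Lemma~\ref{lemma:threshold} does not ``control the existence'' of a runner-up possible winner $c'\neq c$: nothing prevents $W_i(\vec a,r)=\{c\}$ (a leader whose margin over everyone exceeds $r+1$), in which case $f(\vec s',c)=f(\vec s',b)=c$ for every $\vec s'\in S$, so $c$ never $S$-beats $b$ and cannot dominate it. Second, even $|W_i(\vec a,r)|\geq 2$ does not make your shift feasible. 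Take $r=1$, candidates $w,c,b$ with $w\succ_{\ol Q}c$, scores $s_{\vec a_{-i}}(c)=10$, $s_{\vec a_{-i}}(w)=8$, $s_{\vec a_{-i}}(b)=0$, and $c\succ_i w\succ_i b$ with $a_i=b$. Then $w\in W_i(\vec a,1)$: in the accessible state where $w$ gains one vote, $i$'s ballot for $w$ creates a $10$--$10$ tie that $w$ wins lexicographically. So $c$ is the most preferred possible winner and $b$ is not a possible winner, yet in every state at $\ell_1$-distance at most $1$ we have $s'(c)>s'(w)$ (closing the gap of $2=r+1$ needs budget $2$), hence the winner is $c$ under both ballots, and $c$ does not $S$-beat $b$. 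This is precisely the failure mode you flagged: $c'$ clears the possible-winner threshold $\beta_{\ell_1}$ only by spending $i$'s own ballot plus the entire budget $r$, so no accessible state makes $i$ pivotal between $c$ and a rival.

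For comparison: the paper never actually proves this observation; it follows it only with a one-sentence intuition (``strictly better in the states where $i$ is pivotal between $c$ and another possible winner''), which tacitly \emph{assumes} such a pivotal state exists --- the very point at issue. Your Parts~(II) and (I, $w\neq c$) are a correct elaboration going beyond what the paper records, and your honestly flagged hole coincides with the hole in the paper's intuition. Tellingly, the paper's proof of Lemma~\ref{lemma:best_winner} meets the same degeneracy and survives only because it is conditioned on a strategic response having occurred: it explicitly notes that when $f(\vec s',a_i)$ is identical across all accessible states, $D=\emptyset$ and no move happens. The Observation carries no such conditioning, and your counterexample above is one where $D=\emptyset$. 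The repair is to add a pivotality hypothesis, e.g., that some candidate $S_i(\vec a,r)$-beats $b$; this implies there is $\vec s_1\in S$ with $f(\vec s_1,b)\neq c$ (otherwise the beating candidate's outcome would have to be strictly preferred to $c$, contradicting maximality of $c$ in $W_i(\vec a,r)$). Given that, Part~(I) closes cleanly: either some witness $\vec s^*$ of $c\in W_i(\vec a,r)$ already has $f(\vec s^*,b)\neq c$ (your first branch applies), or one walks by unit steps inside the $\ell_1$-ball from a state whose ballot-$b$ winner is $c$ to $\vec s_1$, and at the first state $\vec s^{\dagger}$ where the ballot-$b$ winner flips to some $u\neq c$, a one-vote change separates it from a state $c$ won, so $i$'s extra ballot restores $c$'s win: $f(\vec s^{\dagger},c)=c\succ_i u=f(\vec s^{\dagger},b)$. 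That first-flip argument replaces your threshold arithmetic and sidesteps the infeasible gap entirely.
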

\rmr{However, $i$ might still vote for $b$ if $b$ dominates $a_i$. This could be a starting point to a finer definition of the strategic response, where $i$ only votes for candidates that are maximal in the dominance relation. That is, if $b$ dominates $a_i$, but $c$ dominates $b$, then $i$ would not vote for $b$ even if $b\succ_i c$. In our current model what may happen is that $i$ will vote for $b$, and then (if no other agent interrupts) will change his vote to $c$. }
\olev{In your above example in which $b$ would be voted for, I think it also requires that only $a_{i}$ and some $c\succ_{i} a_{i}$ are the only elements in $S_{i}(\vec a, r)$} \rmr{changed "`will happen'' to "`may happen"'}
Intuitively, while voting for $b$ may improve the outcome for $i$ (over $a_i$) in some world states, voting for $c$ will be at least as good in those states, and strictly better in the states where $i$ is pivotal between $c$ and another possible winner. 
\section{Simulations of strategic voting}\label{simulationSection}
We explore via extensive simulations how employing local-dominance affects the result of the voting process. These simulations have two primary goals. First, we want to understand better the effect of different parameters on the technical level (for example, how long does it take to reach an equilibrium if we vary the uncertainty level?). More importantly, we use simulations to test the properties of our strategic model with respect to the desiderata listed on Section~\ref{sec:desiderata}. For example, what is its discriminative power, is it robust to small changes, and whether it replicates common phenomena. See Section~\ref{sec:des_eval} for a summary of our findings in light of the desiderata.

\newpar{Preferences} As the outcome depends not just on our strategic model but also, and perhaps mainly, on the preference profile, we need to specify appropriate distributions on preferences for our simulations. 
 We generate preference profiles from a set of distributions which have been examined in the research literature, with a focus on distributions that are claimed to resemble preferences of human societies: The Uniform (or impartial culture) distribution; a uniform Single-peaked distribution; Polya-Eggenberger Urn model (with 2 urns and with 3 urns); a Riffle distribution; and Placket-Luce distribution. Urn models were particularly designed to resemble preference structures in human societies, whereas in Placket-Luce distributions each voter is assumed to have a noisy signal of some ground truth. See \ref{sec:distributions} for details.

\newpar{Methods} 
 We generated profiles from all distribution types for various numbers of voters and candidates, which resulted in 108 distinct distributions. From each distribution we sampled 200 instances.\footnote{We also used three datasets from German pre-election polls, with 100 voters, 3 candidates, and no sampling, as well as all 225 currently available full preferences from PrefLib (\url{http://preflib.org}).} Then, we simulated strategic voting on each instance varying the distance metric ($\ell_1$, multiplicative), the voters' types (basic, truth-biased, lazy) and the uncertainty parameters $r$ and $k$.

We simulated voting in an iterative setting, where voters start from a an initial  state, and then iteratively make strategic moves until convergence. Simulations all started from the truthful outcome, except for one batch on which we will elaborate later.
We repeated each simulation 100 times (as the scheduler may pick a different path each time), and collected multiple statistics on the equilibrium outcomes. See \ref{sec:methods} for details. All of the collected data can be downloaded from \url{http://tinyurl.com/k2b775e}.

\newsubsec{Results}
\paragraph{Meaningful parameters and robustness}
We observed that results with the multiplicative metric were generally very similar to those with the additive ($\ell_1$) metric. Also, while voting with lazy and truth-biased voters resulted in a somewhat different dispersion of the votes in equilibrium, all of the properties that we measured remained largely the same, regardless of the value of $k$. We conclude that these parameters have little effect on the model when the initial state is truthful, and focus on results for voters with additive distance metric and without truth- or lazy-bias.\footnote{
With lazy-bias simulations typically took more steps to converge.}

The choice of scheduler type also turned out to be an immaterial one. While a group scheduler typically converged much faster, there was only a negligible difference in the equilibrium outcomes for the vast majority of preference profiles. 


The most meaningful parameter in the simulations was the uncertainty level $r$. As we vary the value of $r$ from $0$ to $15$, there is an increase and then a decrease in the amount of strategic behavior, with a ``peak value'' for $r$. We can see the effect of more strategic behavior by looking at the number of steps to convergence (Figure~\ref{fig:step_num}), the higher dispersion of equilibrium states, and the (lower) agreement with the Plurality winner (Figure~\ref{fig:riffle_plurality} in appendix). This pattern makes sense, as with low $r$ the voter knows the current state exactly, and often realizes he is not pivotal. As uncertainty grows the voter considers himself pivotal more often, but beyond the peak $r$ uncertainty is sufficiently large for all voters to believe that their truthful vote is also a possible winner (and then the initial state is stable). 

This pattern repeats in all 108 distributions. The effect of $r$ and in particular its peak value are determined mainly by the type of the distribution and the number of voters, where the peak $r$ increases with $n$ (for distributions with $n=10$, $r$ typically peaks at $0$). The number of candidates may affect the strength of the strategic effect, but not the peak $r$. 


\begin{figure}[ht]
\begin{center}
\includegraphics[scale=0.26]{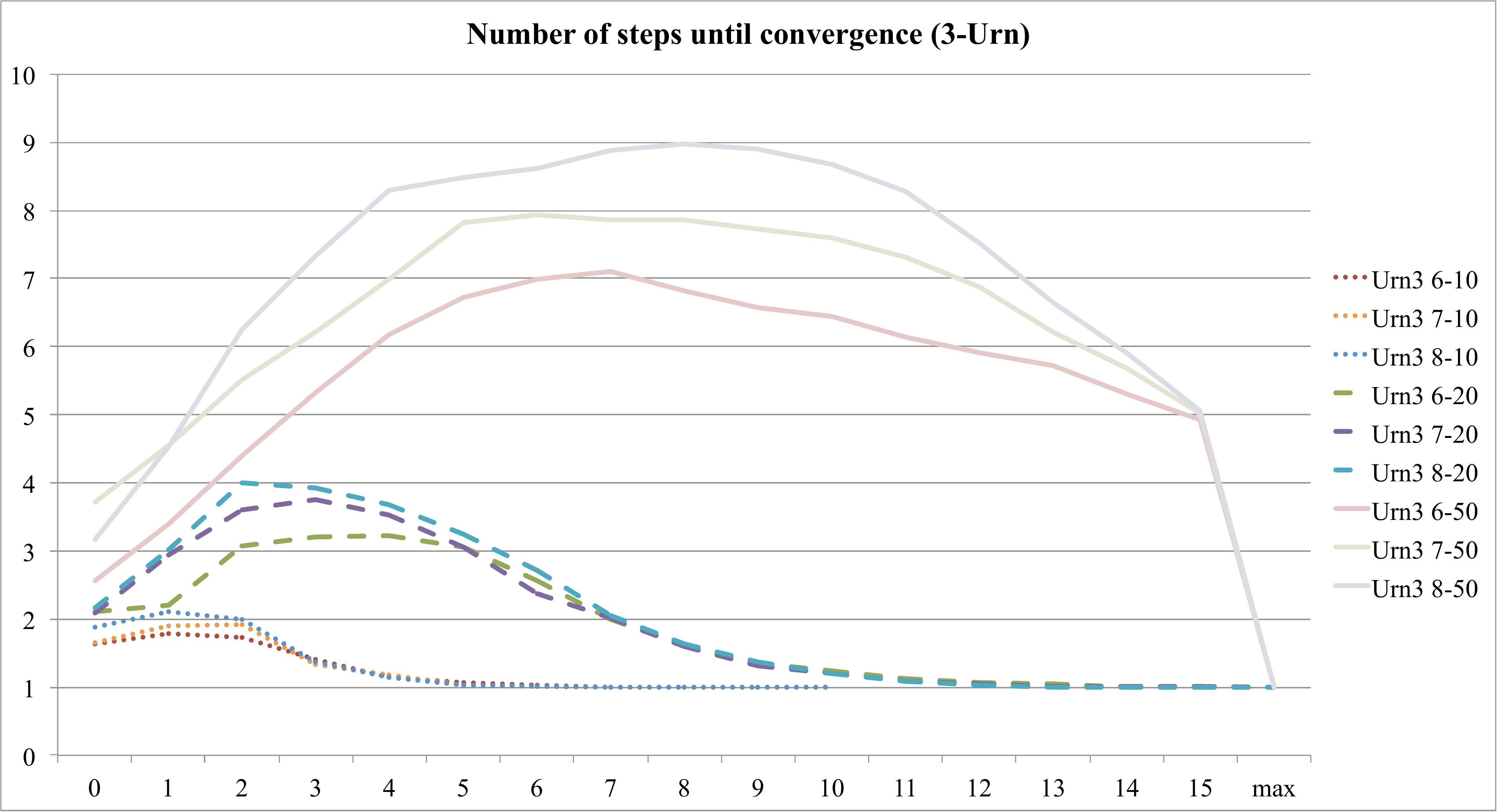}
 \end{center}
\caption {Average number of steps to reach equilibrium for the 2-urn distribution (\emph{NumStep}), as a function of the uncertainty parameter $r$. Each group of lines marks a different number of voters, and peaks at a different value of $r$. Recall that for $r=n$ (denoted by `max'), we have no strategic steps. This serves as our baseline.\label{fig:step_num}}
\end{figure}

\newpar{Quality of winner}
In the Placket-Luce distribution, the quality of a winner can be determined according to its rank in the ground truth used to generate the profile. In the other distributions there is no notion of ground truth, and hence we measured how often the equilibrium winner agreed with the (truthful) winner of another common voting system, which takes the entire preference profile into account (Borda, Copland, Maximin). We also measured how often the winner was a Condorcet winner (out of cases where one exists), and the social welfare of voters (assuming Borda utilities). 
 
According to Borda, Copland, Condorcet consistency, social welfare (see Figure~\ref{fig:peaked_welfare}) and the ground truth, a clear pattern was observed almost invariably across all distributions. As strategic activity increases, so does the winner quality.\footnote{There was typically no higher agreement with the Maximin winner. Also, in the Urn models strategic behavior did not in general improve consistency with Borda.} Best winner quality is attained at peak $r$ or very close to it (see Figures~\ref{fig:Luce_ground},\ref{fig:peaked_median} in the appendix).

In particular, these results are interesting for the Single-Peaked profiles. In such profiles there is always a Condorcet winner, which is the median candidate. As voters strategize more under Plurality, they in fact get closer to the outcome of the strategy-proof median mechanism.

\begin{figure}[ht]
\begin{center}
\includegraphics[scale=0.28]{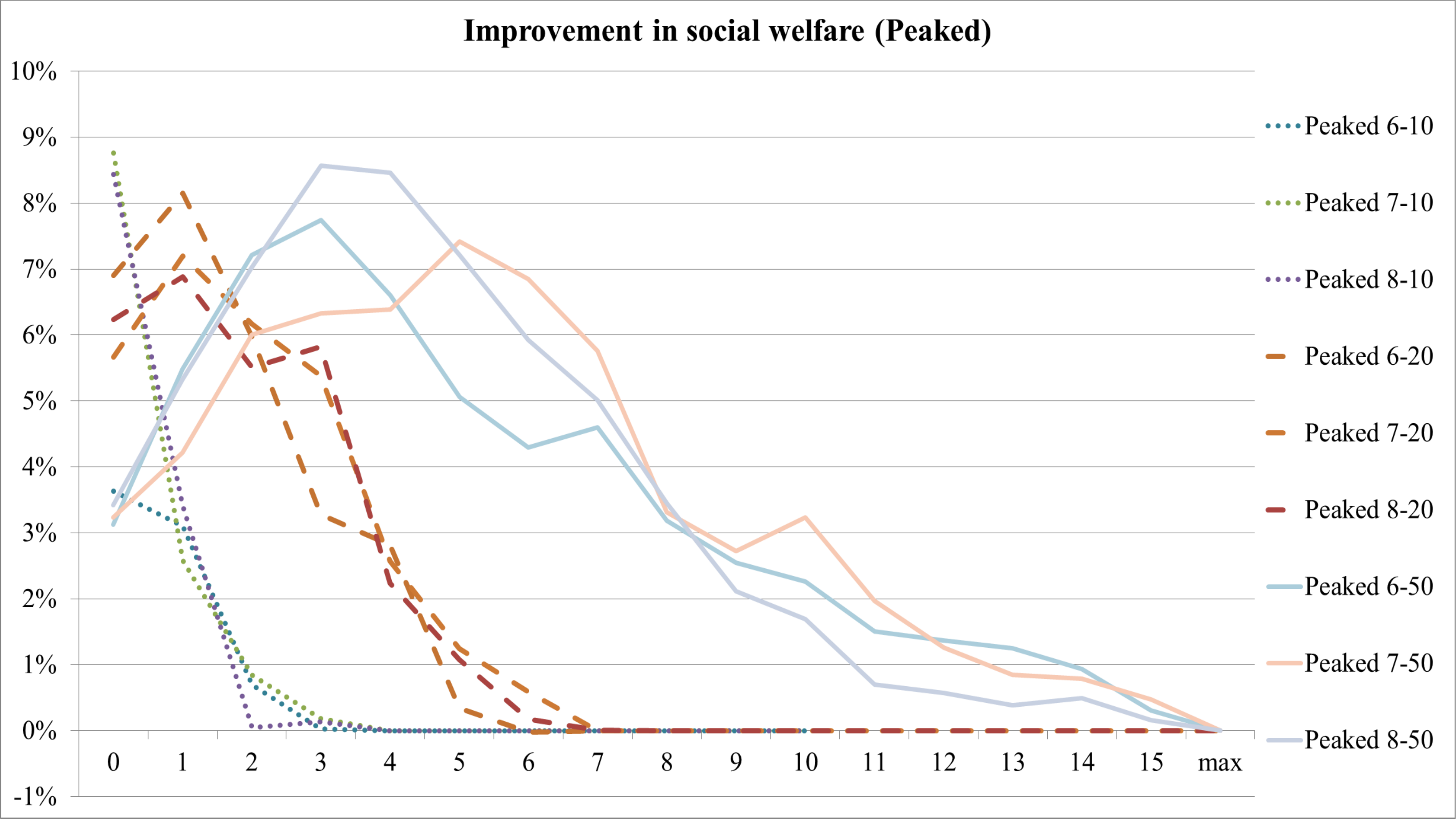}
 \end{center}
\caption {The increase in voters' social welfare, compared to the truthful Plurality outcome.} 
\label{fig:peaked_welfare}
\end{figure}



\newpar{Duverger Law}
The positive effect of more strategic behavior on the concentration of votes was remarkably clear across all distributions. As $r$ gets closer to the peak value, over 75\% of the voters (all voters in some distributions) end up voting for only two candidates. This holds both in distributions like Uniform and Placket-Luce where no candidate initially has a strong advantage (see Figure~\ref{fig:luce_duv} in appendix), and in distributions like 2-Urn where there are two leading candidates to begin with (Figure~\ref{fig:duv_all}).


\begin{figure}[ht]
\begin{center}
\includegraphics[scale=0.3]{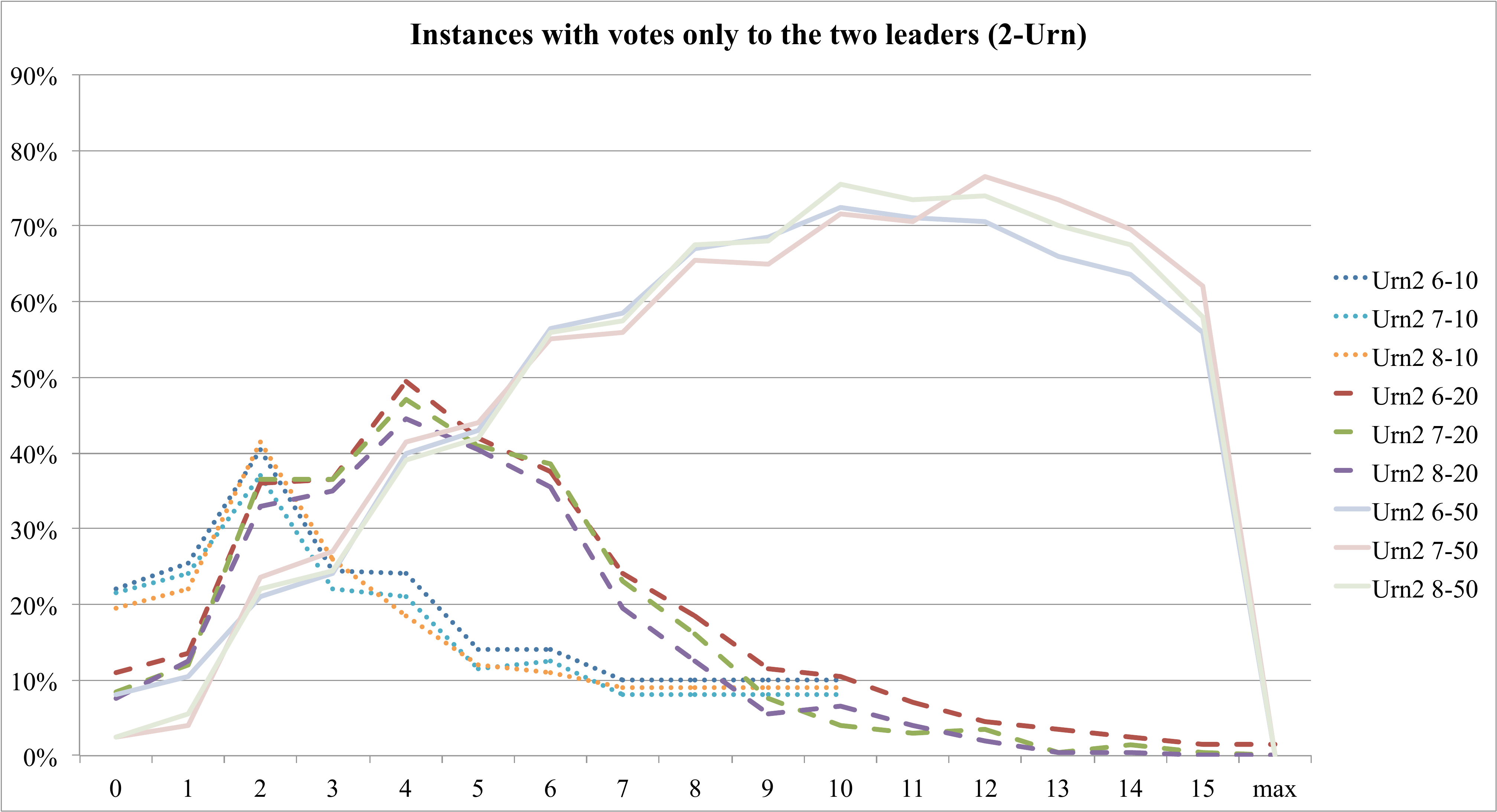}
 \end{center}
\caption {The fraction of simulations in 2-Urn distribution in which \emph{all} voters ended up voting for only two candidates (\emph{TotalDuverger}).\label{fig:duv_all}}
\end{figure}

\newpar{Real preference datasets}
In general, all of our empirical findings were  replicated on the real preference data, but since each instance has a different number of voters and candidates, results are more qualitative.
 
In all three German election datasets we observed similar patterns as above: nearly all supporters of the third candidate deserted it to join one of the leaders. In two datasets this did not change the identity of the Plurality winner. In the third dataset the strategic behavior (for any $r$ between $1$ and $30$) replaced the Plurality winner with the Condorcet winner. 

Similarly, in most of the PrefLib datasets there was a clear winner, and thus there were none-to-few  strategic moves. Votes were typically already quite concentrated for the two leaders in the initial truthful profile, but this concentration increased with strategic activity (Duverger's law).  In the few instances where the identity of the winner changed, it usually replaced the Plurality winner with the Condorcet winner.

\newpar{Non-truthful starting profile and Discriminative power}
 We ran a batch of simulations starting from a voting profile chosen uniformly at random. While we observed voters sometimes switching back and forth between candidates, and despite having no formal guarantee of convergence, all simulations eventually converged to an equilibrium.

We generally observe that the dependency of various attributes in $r$ is more complex (there is no clear ``peak $r$''), but $r$ and $n$ are still the most significant factors. For example, we see much more strategic activity with $r=0$ as in a random state there are more likely to be many pivotal voters than in the truthful state.
 
Other properties observed above such as  Duverger's law, and an increase in winner quality are replicated when starting from a random profile (see e.g. Figure~\ref{fig:Luce_ground_random} in appendix).

More importantly, simulations with random initial states enable us to test the discriminative power of the model.\footnote{Looking only on simulations from the truthful profile does not mean much in that respect. Note for example that with $r=n$, we would get ``perfect'' discriminative power as the truthful Plurality winner is always selected.} Without strategic behavior, we would not expect any candidate to win in a large fraction of the instances (e.g., in the uniform distribution every candidate should win in $\sim 1/m$ of the instances). However when voters are strategic we get that most of the simulations select the same winner regardless of the initial state, which indicates high discriminative power (Figure~\ref{fig:riffle_discriminate}).      

\begin{figure}[ht]
\begin{center}
\includegraphics[scale=0.35]{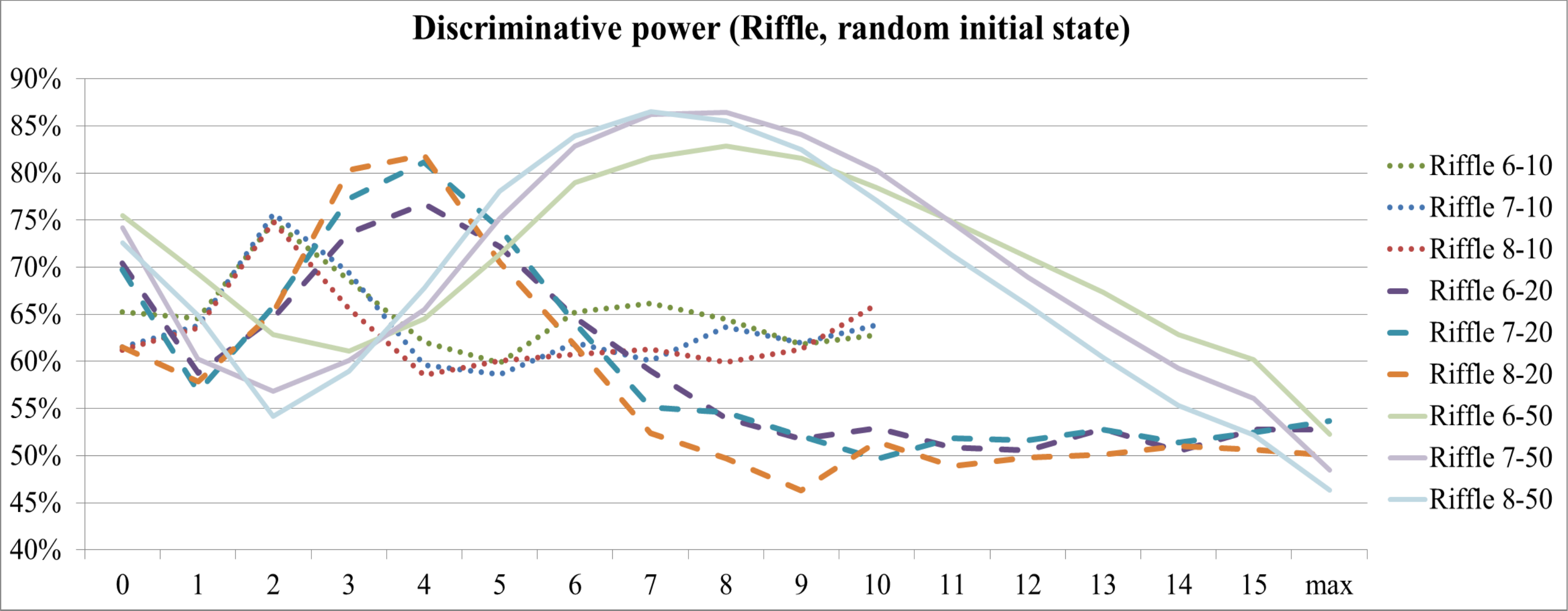}
 \end{center}
\caption {The fraction of simulations (out of 100 random initial states for each preference profile) that ended with the same winner (\emph{WinnerConsistency}). Note that around peak $r$ we have $75\%-85\%$ of the starting points leading to the same winner, regardless of the number of candidates. Very similar patterns were observed for the Urn models and Placket-Luce.} 
\label{fig:riffle_discriminate}
\end{figure}

\newpar{Diverse population}Finally, we repeated some of our simulations with heterogeneous voters, where $r_i$ are sampled uniformly i.i.d. from $\{0,1,\ldots,n/m\}$.
Despite the fact that our convergence proofs do not cover heterogeneous populations, convergence was just as robust. Not only that all simulations converged, typically all invariants that we proved for the homogenous case (e.g., that a voter always compromises for less desirable candidates) also hold in the diverse case.\footnote{There were as few as 4 violations of the invariant out of $\sim 120,000$ strategic steps. \rmr{Omer: please revise this footnote}} 

The diverse simulations replicated nearly all the patterns of strategic voting across all distributions. Notably, although we used the same simple distribution of $r_i$ values in all simulations, effects of strategic behavior were always approximately as strong as in the peak $r$ value of every profile distribution and across most measured properties. 

Winner quality was also generally comparable to peak $r$.\footnote{Interestingly, the Placket-Luce distribution is an exception, where diverse population led to degradation in the winner quality according to the ground truth, but not according to the other measures.}
 
Regarding Duverger's law, while the number of votes to the top 2 candidates was generally quite similar to the one in peak $r$ (sometimes even higher), with diverse population there were much \emph{fewer} instances where \emph{only} two candidates received votes. See Figure~\ref{fig:divS3_all}. Looking at a typical equilibrium profile reveals that it has a much more ``natural'' dispersion, with many voters voting for the two leaders, but also some voters (with either very high or very low uncertainty values) voting for other candidates. 

\begin{figure}[ht]
\begin{center}
\includegraphics[scale=0.38]{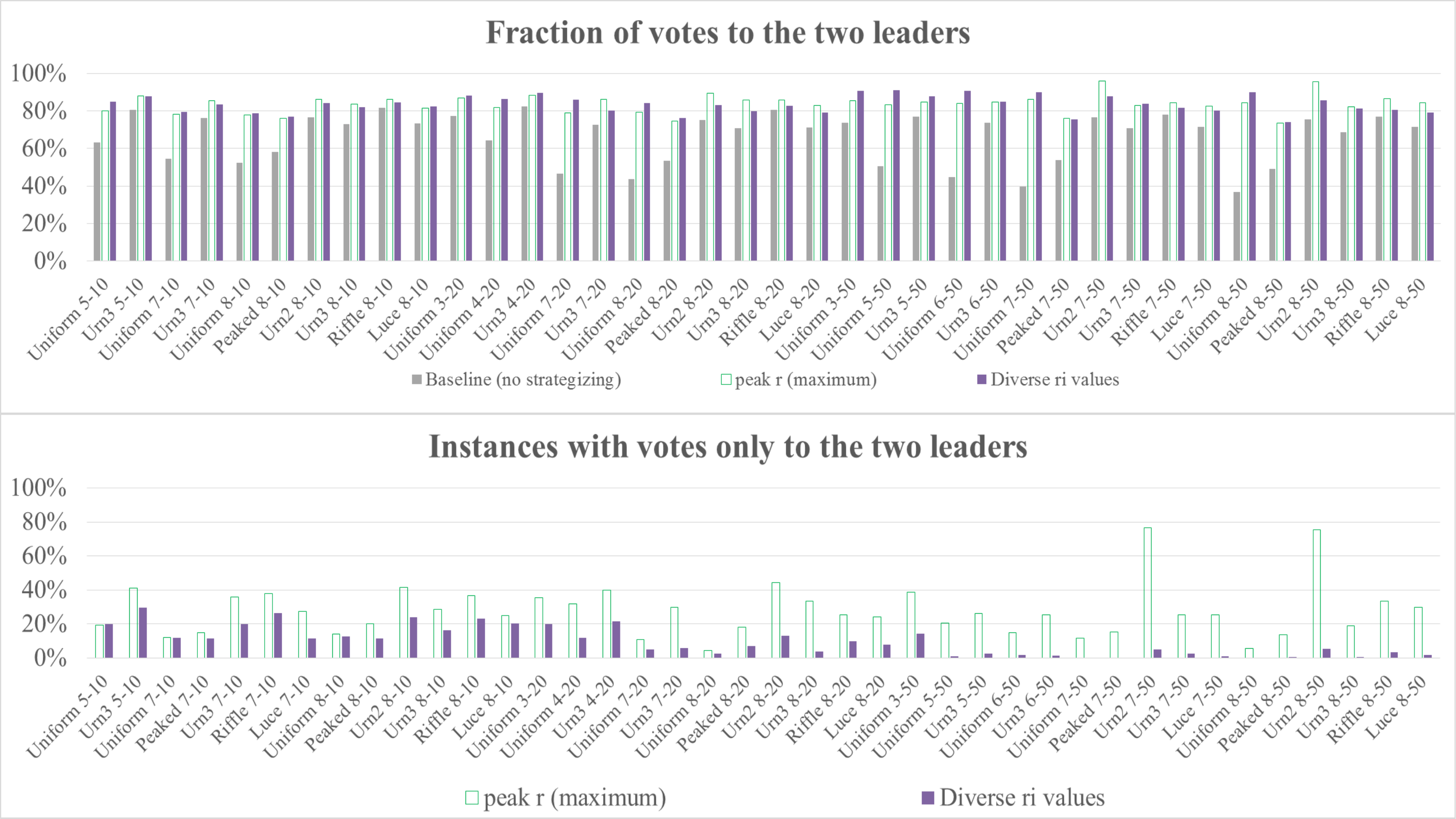}
 \end{center}
\caption {In the top figure we can see that with diverse population, votes were just as concentrated as with fixed population with peak $r$, across all distributions. The bottom figure shows that with fixed $r$, this concentration is due to many instances where only two candidates get votes, while this is not the case with diverse population.
\label{fig:divS3_all}}
\end{figure}

\section{Discussion}
In \cite{abramson1992sophisticated}, sophisticated (strategic) voting based on expected utility maximization is defended on the grounds that it ``...is a simplification of reality that 
seeks to capture the most salient features of actual 
situations. Many voters may see some candidates as 
having real chances of winning and others as likely 
losers, and they may weigh these perceptions against 
the relative attractiveness of the candidates.''

Our theory is also a simplification of reality, and applies similar logic to explain and justify strategic voting. However, the local-dominance approach allows voters to take into account both ``chances of winning'' and ``relative attractiveness'', without regressing to probabilistic calculations and expected utility maximization. 

\subsection{The model and the desiderata}
\label{sec:des_eval}
We summarize by showing how model of local dominance answers to the desiderata we presented in Section~\ref{sec:desiderata}.

\begin{itemize}
	\item Looking at \emph{theoretical criteria}, our model is grounded in traditional game-theoretic concepts: voters are trying to maximize their utility, and results are in equilibrium. Further links to decision theory and classical notions of rationality are detailed in Section~\ref{sec:epistemic}. 
	
When all voters are of the same type,  an equilibrium always exists, and convergence of local-dominance dynamics is guaranteed under rather week conditions. 
	Our simulations show existence and convergence even without these conditions, and demonstrate high discriminative power.
	The model is broad enough to encompass different scenarios such as simultaneous, sequential and iterative voting, and to account for behaviors such as truth-bias and lazy-bias. 
	Our definitions could be easily extended to other positional scoring rules, although it is an open question whether our results would still hold.\footnote{Extensions to other voting rules are simple with metrics like EM distance, but may be ill-defined with other metrics.}
%
Furthermore, as Proposition~\ref{th:general_eq} shows, preposterous equilibria are unlikely. 


\item As argued above, voters in our model fit the \emph{behavioral criteria} we posed, as they avoid complex complex computations. Moreover, as Lemma~\ref{lemma:threshold} shows, voters do not even need to consider the entire space of possible states, but merely to check which candidates have sufficient score to become possible winners. Our informational assumptions are rather weak and plausible, as we argue in the end of Section~\ref{sec:intuition}.


\item Regarding the \emph{scientific criteria}, once we set the distance metric, every voter can be described by a single parameter (two in the case of lazy or truth-biased voters), which has a clear interpretation as her certainty level. 
Our extensive simulations demonstrate robustness to the order in which voters play (including whether they act simultaneously or not), and that changing the parameters results in a rather smooth transition.
Simulations also show that the model replicates patterns that are common in the real world such as the Duverger Law, and resulting equilibria, especially with diverse population, seem reasonable.
Experimental validation as outside the scope of this work.
\end{itemize}

Finally, it is shown that strategic behavior yields a better winner for the society according to various measures of quality (compared to the truthful Plurality winner).

\subsection{Epistemic foundations and rationality}
\label{sec:epistemic}
We can phrase dominance relations in terms of modal logic. Consider a Kripke structure over states where $S$ are the states accessible from $\vec s=\vec s_{\vec a}$. Then ``$b_i$ $S$-beats $a_i$ in state $\vec s$'' can be written as $\vec s \models\diamond (f(b_i,\vec s) \succ_i f(a_i,\vec s))$. Similarly, ``$b_i$ $S$-dominates $a_i$'' means $\vec s \models\square (f(b_i,\vec s) \succeq_i f(a_i,\vec s)) \wedge \diamond (f(b_i,\vec s) \succ_i f(a_i,\vec s))$ ($b_i$ is necessarily at least as good, and possible a better action than $a_i$). We note that $S_i(\vec a,k)$ defines a Kripke structure that is reflexive and symmetric but non-transitive.

A common semantic interpretation of the modal operator $\square P$ is ``$P$ is \emph{known}''.\footnote{An alternative notation $K_i P$ is sometimes used for the statement ``$P$ is known to agent~$i$''. See, for example, \cite{aumann1999interactive}.} According to this, we can naturally interpret ``$b_i$ $S$-beats $a_i$ in state $\vec s$'' as ``in state $s$, $i$ does not know that voting for $a_i$ is at least as good as voting for $b_i$''. Thus ``$b_i$ $S$-dominates $a_i$ in state $\vec s$'' means that $i$ knows that $b_i$ is at least as good as $a_i$, but does not know that $a_i$ is at least as good as $b_i$.


In our model, $S(\vec a)$ is not uniquely defined, and in fact even the same voter uses both $S_i(\vec a,r), S_i(\vec a,k)$, where $r<k$. Since $S_i(\vec a,r) \subseteq S_i(\vec a,k)$, a straight-forward extension of the epistemic interpretation is to add \emph{certainty levels}, where a larger (in terms of containment) set of accessible states indicates lower certainty. 

\newpar{Local dominance and rationality}
According to the standard non-Bayesian incomplete information model (due to Aumann~\shortcite{aumann1995backward,aumann1999interactive}), a player $i$ playing strategy $a_i$ at some world state $\vec \pi$ is \emph{rational}, if there is no other strategy $a'_i$ that yields a same or better outcome in all states accessible from $\pi$ (and in some states strictly better).

In other words, rationality under strict uncertainty according to Aumann simply means that players avoid locally dominated strategies. Voting equilibria in our model are therefore rational (Prop.~\ref{th:eq_dominated}). 
Our model is more specific in that it specifies a particular dynamic of how voters act when their current strategy is dominated. 

Another difference is that in Aumann's models the accessibility relation is a partition, and in particular transitive. Other papers such as \cite{Bicchieri95} do not make any assumptions on the accessibility relation other than consistency. In our model the relation is based on distance, and in particular it is non-transitive (if $\pi$ is close to $\pi'$, and $\pi'$ is close to $\pi''$, then it may not hold that $\pi$ is close to $\pi''$).

\newpar{Local dominance and voting}
Dominance within a restricted set of states was considered by several recent papers.  
In \cite{reijngoud2012voter,van2013strategic} the assumption is that voters information sets can be described as a partition $\Pi$, as in the Aumann model. Reijngoud and Endriss say that a voter \emph{has an incentive to $\Pi$-manipulate using ballot $P'_i$} (under profile $\vec P$), if she weakly gains by voting $P'_i$ in every state that is ``equivalent'' to $\vec P$ according to $\Pi$.\footnote{The definition is for arbitrary voting rules.} In the special case of Plurality, the definition coincides local dominance: Consider Def.~\ref{def:local}, where we set $S$ to be all states equivalent to $\vec a$ under $\Pi$. Then $a'_i$ $S$-dominates $a_i$ iff $i$ has an incentive to $\Pi$-manipulate using ballot $a'_i$. In the terminology of \cite{van2013strategic}, voter $i$ \emph{knows `de re' that she can weakly successfully manipulate}.
Our definition of local dominance also coincides with the definition of dominance in \cite{conitzer2011dominating}, which do not make any assumption on the ``information set'' $S$. 

 In our work the accessibility relation is defined by a distance metric and is \emph{not} a partition. Still, many of the definitions in \cite{reijngoud2012voter,van2013strategic} can be applied just the same in our case. In particular, a combination of these works can be used to extend the notion of local dominance to other voting rules. 

\newsubsec{Conclusion and future work}

We see a unifying theory as the one we present as a productive step in the quest to understand voting. We hope that future researchers will find our theoretical framework useful for formulating new, more specific, voting behaviors. Furthermore, our particular distance-based model can serve as a strong baseline for competing theories. Experiments with human voters will be important to settle how close each of these theories comes in adequately describing human voting behavior. 

On the technical level, we conjecture that stronger convergence properties can be proved; in particular, that there are no cycles in voting games with voters of the same type, and that a voting equilibrium exists even in games with heterogeneous voters.

We also believe that distance-based local dominance, with the necessary adaptations, can provide a useful non-probabilistic framework for uncertainty in other classes of games where there are natural distance metrics over states, such as congestion games. 

Finally, insights based on our theory, for example on how voters' uncertainty level affects quality of the outcome, can be useful in designing better voting mechanisms. 

\section*{Acknowledgments}
We thank Joe Halpern, David Parkes, Yaron Singer, Greg Stoddard, and several anonymous reviewers for their valuable feedback.

\appendix

\section{Proofs}
\label{sec:proofs}

\begin{rlemma}{lemma:threshold}
	Each of the metrics $\delta$ from ($\ell_1,\ell_\infty$, multiplicative) induces a function $\beta= \beta_{\delta,r}:\mathbb N \rightarrow \mathbb N$, where
	
	\begin{itemize}
	\item For every $\vec a, i\in N$, if $a_i\neq f^*_i$, then $W_i(\vec a,r)=\{c: s_{\vec a_{-i}}(c) \geq_{\ol Q} \beta( s(f^*_i))\}$. That is, possible candidates are all those whose score is above the threshold, which is a function of the score of the winner.\footnote{We break ties with $\beta(s(f^*_i))$ as if we break ties with $f^*_i$.} 
\item $\beta(s)$ is weakly increasing in $s$. 
\end{itemize}
 In particular, for $\delta_{\ell_1}$, $W_i(\vec a,r) = \ol H_{r+1}(\vec s_{\vec a})$ for any $i$ s.t. $a_i\notin \ol H_{r+1}(\vec s_{\vec a})$. Similarly, for $\delta_{\ell_\infty}$, $W_i(\vec a,r) = \ol H_{2r+1}(\vec s_{\vec a})$ for any $i$ s.t. $a_i\notin H_{2r+1}(\vec s_{\vec a})$.
	\end{rlemma}
	A similar result can be proved for EMD and other $\ell_d$ norms, but the threshold would depend on the score of all candidates and not just the winner. 
	\begin{proof}
		Consider first the $\ell_1$ norm. We set $\beta(s) =\beta_{\ell_1}(s)= s-r-1$. Clearly, $f^*_i$ is a possible winner and $s(f^*_i)>\beta(\vec a)$. Assume $c\neq f^*_i$, then $s(c)\neq_{\ol Q} \beta(s(f^*_i))$.
	If $s(c) > \beta(s(f^*_i))$, consider the state $\vec s'\in S_i(\vec a,r)$ where $c$ has $r$ additional votes. We have that 
	$$s'(c)+1= s(c)+r+1 > \beta(s(f^*_i)) +r+1 = s(f^*_i),$$
	thus $f(\vec s',c)=c$.
	In contrast, if $s(c) <\beta(s^*_i)$, then in any $\vec s' \in S_i(\vec a,r)$, $s'(f^*_i) - s'(c) >1$ and thus $c$ cannot win. 
	Finally, 
	$$W_i(\vec a,r) = \{c: s(c)\geq \beta_{\ell_1}(s(f^*_i))\} = \{c : s(c) \geq s(f^*_i)-(r+1)\}= \ol H_{r+1}(\vec s).$$
	
	Similarly, for $\ell_\infty$ we set $\beta(s) = \beta_{\ell_\infty}(s) = s-2r -1$. The critical state $\vec s'$ is where $c$ gets $r$ additional votes, and we subtract $r$ votes from all other candidates (including $f^*_i$).
	
	For the multiplicative distance, we set $\beta(s) = \ceil{ \ceil{\frac{s}{1+r}} / (1+r) } -1 $. In the critical state we multiply the score of $c$ by $(1+r)$, getting $s'(c)= \floor{s(c)(1+r)}$, and divide the score of all other candidates by $(1+r)$, so e.g. for $f^*_i$, we get $s'(f^*_i) = \ceil{\frac{s(f^*_i)}{1+r}}$. Thus 
	\begin{align*}
	&f(\vec s',c)=c &\iff\\
	&s'(c) +1 > s'(f^*_i) &\iff \\
	&\floor{s(c)(1+r)} +1 > \ceil{\frac{s(f^*_i)}{1+r}} &\iff \\
	&s(c) > \ceil{ \ceil{\frac{s(f^*_i)}{1+r}} / (1+r) } -1 = \beta(s(f^*_i)).
	\end{align*}
	\end{proof}

\begin{theorem}
	\label{th:basic_distances}
Suppose that all voters are of type $r$ and using any norm for which Lemmas~\ref{lemma:best_winner},\ref{lemma:threshold} apply. Then a voting equilibrium exists. Moreover, in an iterative setting where voters start from the truthful state, for any singleton scheduler, they will always converge to an equilibrium in at most $n(m-1)$ steps. 
\end{theorem}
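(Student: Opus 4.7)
The plan is to adapt the proof of Theorem~\ref{th:basic} almost verbatim, replacing the specific set $\ol H_{r+1}(\vec s)$ with the abstract possible-winner set $W(\vec s) := \{c : s(c) \geq_{\ol Q} \beta(s(f(\vec s)))\}$ provided by Lemma~\ref{lemma:threshold}. The original argument used $\ol H_{r+1}$ only through two structural facts: (i) it characterizes possible winners for any voter not currently voting inside it, and (ii) it can only shrink as the game progresses from the truthful state. By assumption, Lemmas~\ref{lemma:best_winner} and \ref{lemma:threshold} apply to the chosen metric, so (i) is immediate; (ii) will follow from the weak monotonicity of $\beta$ combined with a new proof of the ``winner's score does not decrease'' invariant.

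Concretely, I would assume $\vec q$ is not already stable and prove by complete induction on the step count $t$ the same four invariants as in the original proof: (1) once $a_i$ is deserted it leaves $W(\vec s')$ permanently; (2) voters always compromise, i.e.\ $a'_i \prec_i a_i$; (3) $\max_a s'(a) \geq \max_a s(a)$; and (4) $W(\vec s') \subseteq W(\vec s)$. The arguments for (1), (2), and (3) carry over essentially unchanged, since they appeal only to Lemma~\ref{lemma:best_winner} and the inductive hypothesis (4)---the fact that $a_i \in W$ would give a pivotal state where $a_i$ $S_i(\vec a,r)$-beats any proposed $a'_i\succ_i a_i$, contradicting dominance; and for (3), an $i$ currently voting for $f(\vec a)$ has it as her most preferred candidate in $W$ by induction on (4) and Lemma~\ref{lemma:best_winner}, so she will not move.

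The main obstacle, and the only step that genuinely needs the generalization, is invariant (4). In the additive case this was argued by an integer counting argument specific to $\ol H_{r+1}$; here I would instead combine invariant (3) with the weak monotonicity of $\beta$ from Lemma~\ref{lemma:threshold}: since $s'(f(\vec s')) \geq s(f(\vec s))$, one has $\beta(s'(f(\vec s'))) \geq \beta(s(f(\vec s)))$, so the membership threshold for $W$ can only rise. Any candidate $c\neq a'_i$ has $s'(c)=s(c)$, so if $c\notin W(\vec s)$ then $c\notin W(\vec s')$. The only score that grew is $s'(a'_i)$, but by Lemma~\ref{lemma:best_winner} we have $a'_i\in W_i(\vec a,r)$, which (together with (1) applied to $i$) means $a'_i$ was already in $W(\vec s)$; hence $W(\vec s')\subseteq W(\vec s)$. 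One subtle point to verify carefully here is the bookkeeping between the voter-indexed $W_i(\vec a,r)$ that Lemma~\ref{lemma:threshold} describes in terms of $\vec a_{-i}$ and the state-level set $W(\vec s_{\vec a})$ used in the invariants; the two agree exactly in the regime ($a_i\notin W$) where the argument uses them, so no genuine difficulty arises.

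Once the invariants are established, the step bound is immediate: invariant (2) forces each voter's successive votes to strictly descend in her preference order, so she can move at most $m-1$ times, giving at most $n(m-1)$ steps until a stable state is reached. Existence of equilibrium then follows from termination, exactly as in Theorem~\ref{th:basic}.
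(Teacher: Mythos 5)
Your overall plan---replacing $\ol H_{r+1}$ by a state-level threshold set $W(\vec s)$ and deriving its shrinkage from the winner-score invariant plus weak monotonicity of $\beta$---is a genuinely different bookkeeping from the paper's. The paper's appendix proof works throughout with the \emph{voter-indexed} sets $W_i(\vec a,r)$ and maintains five invariants, including a dedicated cross-voter one: after any step $a_i \step{i} a'_i$, the deserted candidate satisfies $a_i\notin W_j(\vec a^{t'},r)$ for \emph{every} voter $j$ at all later times, established by the one-vote-discrepancy computation $s_{\vec a_{-j}}(a_i)<\beta(s(f(\vec a)))+1$. The paper needs this because the voter-indexed sets do not simply shrink: a candidate sitting exactly at the threshold is a possible winner for non-supporters but not for its own supporters (the paper even flags this in the caption of Figure~\ref{fig:converge}), so a voter $i$ may legally move onto $a_j\in W_i(\vec a,r)\setminus W_j(\vec a,r)$ and thereby \emph{expand} $W_j$; accordingly the paper's shrinkage invariant is only claimed after each voter's first move (before it, $q_j$ can re-enter $W_j$). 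Your state-level set elegantly sidesteps that particular case, since such an $a_j$ lies in $W(\vec s)$ anyway, and your bridge for the move target is sound: a mover has $a'_i\neq a_i$ and $a_i\neq f^*_i$, so her threshold and scores coincide with the state-level ones there.

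The gap is in the dismissal ``the two agree exactly in the regime $a_i\notin W$, so no genuine difficulty arises.'' That agreement holds for movers, but your invariants must also control voters who should \emph{not} move, specifically supporters of the winner---and for a voter with $a_i=f^*_i$, Lemma~\ref{lemma:threshold} does not apply at all (its hypothesis is $a_i\neq f^*_i$). Such a voter evaluates possible winners against $s_{\vec a_{-i}}(f^*_i)=s(f^*_i)-1$, so her personal set is strictly \emph{larger} than $W(\vec s)$ (under $\ell_1$ it is $\ol H_{r+2}$-like rather than $\ol H_{r+1}$). Your argument for invariant (3)---that a supporter of $f(\vec a)$ has it as her most preferred candidate ``in $W$'' by induction on (4) and Lemma~\ref{lemma:best_winner}---therefore quantifies over the wrong set: what must be shown is that no candidate she prefers to the winner lies in her \emph{enlarged} personal set, and shrinkage of the state-level $W(\vec s)$ says nothing about that off-by-one window. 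The window can be closed: if some $c\succ_i f(\vec a^t)$ entered her enlarged set after her last move at $t'$, then $s^t(f)-r-2\le s^t(c)\le s^{t'}(c)\le s^{t'}(f)-r-2$ together with $s^t(f)\ge s^{t'}(f)$ forces equalities throughout, and the requirement that her candidate wins even without her own vote then forces the $t'$-winner to have shed supporters of a still-viable candidate, contradicting the invariants being proven. But this is an explicit interlocking induction in precisely the $a_i=f^*_i$ regime; as written, your proof of (3)---and hence (4), which rests on (3)---is incomplete exactly where the voter-dependence bites, which is the analogue of the care the paper invests in its invariants (1), (2) and (5).
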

\begin{proof}
If the truthful state $\vec q$ is stable, then we are done. Thus assume it is not.
Let $\vec a^t$ (and $\vec s^t$) be the voting profile after $t$ steps from the initial truthful vote $\vec a^0 = \vec q$.
Let $a_i\step{i} a'_i$ be a move of voter $i$ at state $\vec s=\vec s_{\vec a^t}$ to state $s'=\vec s_{\vec a^{t+1}}$. 

We claim that the following hold throughout the game.
\begin{enumerate}
	\item $a_i\notin W_{i}(\vec a,r)$. Voters only leave non-possible winners.
	\item After a step $a_i \step{i} a'_i$ at time $t$, $a_i \notin W_j(\vec a^{t'},r)$ at any later time $t'$, for any voter $j$.
	\item $a'_i \prec_i a_i$. I.e., voters always ``compromise'' by voting for a less preferred candidate.
		\item $\max_{a\in A} s'(a) \geq \max_{a \in A} s(a)$. I.e., the score of the winner never decreases.
	\item For all $j$, all $t'>t$ after the first step of $j$, $W_j(\vec a^{t'},r) \subseteq W_j(\vec a^{t},r)$. I.e., the set of possible winners can only shrink (after the first move).
\end{enumerate}
We prove this by a complete induction. 
\begin{enumerate}
	\item If this is the first move of $i$ then $a_i=q_i$. Otherwise, by Lemma~\ref{lemma:best_winner}, $a_i$ is the most preferred candidate of $i$ in $W_i(\vec a^{t'},r)$ where $t'$ is the time when $i$ last moved. By induction on (5), $W_i(\vec a,r) \subseteq W_i(\vec a^{t'},r)$. So either $a_i$ is not a possible winner in $\vec a$ (and the we are done), or it must be the most preferred candidate in $W_i(\vec a,r)$. Assume, toward a contradiction, that $a_i \in W_i(\vec a,r)$, then there is a state in $S_i(\vec a,r)$ where $i$ is pivotal between $a_i$ and $f(\vec a)$. With any other action, $f(\vec a)=c \prec_i a_i$ would win. Therefore $a_i$ $S_i(\vec a,r)$-beats any other candidate including $a'_i$. In particular, $a'_i$ does not $S_i(\vec a,r)$-dominate $a_i$, which is a contradiction. 
	\item The scores by which different voters determine possible winners are almost identical, and the score of $a_i$ may differ by at most $1$ vote between $\vec a_{-i}$ and $\vec a_{-j}$. When $i$ moves then by (1) and Lemma~\ref{lemma:threshold} $s_{\vec a_{-i}}(a_i)< \beta(s(f(\vec a)))$, and thus $s_{\vec a_{-j}}(a_i)< \beta(s(f(\vec a)))+1$. Thus while $j$ may still consider $a_i$ as a possible winner \emph{before} $i$ moves, $a_i$ is no longer a possible winner for $j$ after $i$ moves, as 
	$$s'_{\vec a_{-j}}(a_i) = s_{\vec a_{-j}}(a_i)-1 < \beta(s(f(\vec a))).$$
	
	\item If this is the first move of $i$ then this is immediate. Otherwise, by induction on Lemma~\ref{lemma:best_winner} and (5), if $a'_i \succ_i a_i$, then $i$ would prefer to vote for $a'_i$ in his previous move, rather than to $a_i$.
	 
\item As in (1), if $i$ votes for $a_i=f(\vec a)\in \ol H_{r+1}(\vec a)$, then $a_i$ is $i$'s most preferred possible winner. Thus it cannot be locally dominated by any other candidate.
	
	 \item Since by (3) the score of the winner never decreases, the only way to expand $W_j(\vec a,r)$ is if some voter $i$ added a vote to a candidate not in $W_j(\vec a,r)$. Recall that by Lemma~\ref{lemma:best_winner}, $i$ only votes to candidates in $W_i(\vec a,r)$. Since $W_j(\vec a,r),W_i(\vec a,r)$ differ only by the votes of $i$ and $j$, the only candidate in $W_i(\vec a,r) \setminus W_j(\vec a,r)$ can be $a_j$ (the current vote of $j$). 
	
	However if $j$ already moved once then $a_j$ was a possible winner for $j$. Since at time $t$ $a_j\notin W_j(\vec a,r)$, then some voter must have deserted $a_j$ before time $t$. Then by (2) no voter would consider $a_j$ a possible winner after time $t$, and by Lemma~\ref{lemma:best_winner}, no voter would vote for it before $t'$. 
	
	Note that if $j$ have never moved, then it is possible that $q_j$ is not a possible winner for $j$, but still gets a vote later from $i$. 
\end{enumerate}
Finally, by property (2), each voter moves at most $m-1$ times before the game converges. 
\end{proof}

\begin{rlemma}{lemma:two_winners}
Under the conditions of Theorem~\ref{th:basic}, either $\vec q$ is stable, or in every state $\vec s^t$ we have $|\ol H_{r+1}(\vec s^t)|>1$. Also, in the last state $\vec a$ either $|\ol H_{r}(\vec s_{\vec a})|=1$, or all voters vote for possible winners. Any voter voting for $c\notin \ol H_{r+1}(\vec s^t)$ prefers $f(\vec a)$ over any other candidate in $\ol H_{r+1}(\vec s^t)$.
\end{rlemma}
\begin{proof}
Note first that once $|\ol H_{r+1}(\vec s)|=1$, there are no strategic moves, as no candidate can challenge the winner. Thus a violation can occur only in the last step. Assume, toward a contradiction, that in the last step $a'_i=f(\vec a)$, and $|\ol H_{r}(\vec s)|=1$ (but $|\ol H_{r+1}(\vec s)|>1$). However, since by (1) $a_i\notin \ol H_{r+1}(\vec s)$, each candidate in $\ol H_{r+1}(\vec s)\setminus \{f(\vec a)\}$ has the same score with and without $i$, and this score is at most $s(f(\vec a))-(r+1)$. Therefore $f(\vec a)$ also wins in any state in $S_i(\vec a,r)$ (no candidate is a threat to the winner). This means that $f(\vec a)$ does not $S_i(\vec a,r)$-beat $a_i$, in contradiction to a strategic move where $a'_i=f(\vec a)$. 

Finally, suppose that there is more than one candidate in $\ol H_{r}(\vec s_{\vec a})$. Then any voter $i$ not voting for a possible winner sees himself potentially pivotal between the winner and the runner-up (there is a possible state where the runner-up wins if $i$ keeps voting for $a_i$). Since $i$ always strictly prefer on of them, this candidate will locally dominate $a_i$. 

We can further see that if $|\ol H_{r}(\vec s_{\vec a})|=1, |\ol H_{r+1}(\vec s_{\vec a})|>1$, then any voter not voting for $\ol H_{r+1}(\vec s)$ in equilibrium must favor the winner. Otherwise he would be potentially pivotal between the current winner and a better possible winner, and thus his most preferred possible winner would locally dominate his current vote.
\end{proof}

\begin{rproposition}{th:group}
Suppose that all voters are of type $r$. 
Consider any group scheduler such that (1) any voter has some chance of playing as a singleton (i.e. this will occur eventually); (2) The scheduler always selects (an arbitrary subset of) voters with type~2 moves, if such exist. 
Then convergence is guaranteed from any initial state after at most $O(nm)$ singleton steps occur.
\end{rproposition}
\begin{proof}
Denote $N_w(\vec a) = \bigcup_{a\in \ol H_{w}(\vec a)}\{i:a_i=a\}$ i.e. all voters voting for the top $w$ candidates.
 
 Note that there can be at most $n(m-1)$ consequent moves of type~2, regardless of the scheduler. Let $\vec a^{t^*}$ be the state first reached when no voter has a type~2 step.
If the state $\vec a^{t^*}$ is stable, then we are done. Thus assume it is not.

Consider a group $N'$ that moves at time $t$. By property (2) of the scheduler, either all of $N'$ has type~1 moves, or all of it has type~2 moves. Thus we can classify all (group) steps to type~1 and type~2.

We define a \emph{chunk potential function} $\alpha$, where
$\alpha(\vec a) = -n\times |\ol H_{r+1}(\vec a)| + |N_{r+1}(\vec a)|$.
That is, the potential increases as the set of possible winners shrinks, but for a fixed size it increases with the total mass of voters for such candidates.

Consider a sequence of moves after time $t^*$. A \emph{chunk} is composed of a type~1 step and all type-2 steps that follow until the next type~1 step (i.e. a type~1 step and then zero or more type~2 steps). We claim that (a) after every chunk $\alpha$ (as well as the score of the winner) does not decrease; (b) after a finite number of chunks $\alpha$ strictly increases.

We observe that in any type~1 move $\vec a \step{N'} \vec a'$, $a'_i = \argmin\{Q_i(a) : a \in\ol H_{r+1}(\vec s)\}$ for all $i\in N'$.
%
We will prove by induction that starting from $t^{*}$, after every chunk $\vec a \step{N'} \vec a' \step{N''} \cdots \step{\hat{N}} \hat{\vec a}$:
\begin{enumerate}
	\item $\max_{a\in M} \hat s(a) \geq \max_{a \in M} s(a)$. I.e., the score of the winner never decreases.
	\item $\alpha(\hat{\vec a}) \geq \alpha(\vec a)$.
\end{enumerate}
We prove this by a complete induction. 

Consider the type~1 move of the chunk. All voters vote for less preferred candidates, thus for all $i\in N'$, $s(a'_i) \geq s(a_i)$. Moreover, a voter only moves if $a_i\notin \ol H_r(\vec a)$, since otherwise by he is potentially pivotal (as we show in the proof of Th.~\ref{th:basic}). Case I: for any $i\in N'$, there is no $j$ s.t. $a'_j=a_i$. In this case all moves are essentially independent, and every $i$ s.t. $a_i\notin \ol H_{r+1}(\vec a)$ increases is added to $N_{r+1}(\vec a')$,thereby increasing $\alpha$ by $1$. Every voter $i$ s.t. $a_i\in H_{r+1}(\vec a)$ increases $\alpha$ by $n$, since $a_i \notin \ol H_{r+1}(\vec a')$. Thus in case~I the chunk potential strictly increases, and there are no followup type~2 steps (the chunk ends).

The problematic case is where there is $I\subseteq N'$, s.t. for all $i\in I$ there is $j\in N'$ with $a'_j = a_i$ (see Figure~\ref{fig:group}). If it also holds that $a_i\in H_{r+1}(\vec a)$, then after the move we have $a_i\in \ol H_{r+1}(\vec a')$. Then we have that while $a_i$ was not a possible winner for $i$ before the step, it is a possible winner for $i$ after the step. Note however that only voters who moved can have a new possible winner, and it can only be the candidate they deserted. 
It is then possible that voter $i$ now has a type~2 move. Thus some possibly empty subset $I'\subseteq I$ have type~2 moves, which is to return to their original candidates $a_i$ (deserting other candidates in $\ol H_{r+1}(\vec a')$). After the first type~2 move there may other voters $I''\subseteq I\setminus I'$ that want to return and so on. However no other voter has a new type~2 move since all of $N'\setminus I$ vote for their most preferred possible winner in $\vec a'$; and any $i'\in N\setminus N'$ does not have a type~2 move since they did not have one in $\vec a$ and there are no new possible winners. 

So every further step~2 in the chunk rolls back some of the first type~1 steps. At the end of the chunk we are left with a set $N''=N' \setminus (I'\cup I'' \cup \cdots)$, where $N''$ performed a type~1 step and all other voters vote as in $\vec a$. If $N''\neq \emptyset$ then by the previous case $\alpha$ strictly increases (and the score of the winner does not decrease). Clearly if $N''=\emptyset$ then $\hat{\vec a} = \vec a$ and thus $\alpha$ does not change. However note that the type~1 step is a set of disjoint cycles. For $N''$ to be empty, each of these cycles must be contained completely in $I'$ or $I''$, etc.: a voter that does not roll back his action $a_j \step{j} a'_j$ at the same time with the voter who joined $a_j$, will not be able to roll back at a later type~2 step, since $a_j$ will no longer be a possible winner for $j$. Thus a singleton type~2 move (which cannot be a cycle) means that $N''\neq \emptyset$.

Since eventually there will be singleton move (either type~1 or type~2), the same cycle repeat forever, and $\alpha$ must increase. Clearly it cannot increase more than $nm$ times. 
%
%
%
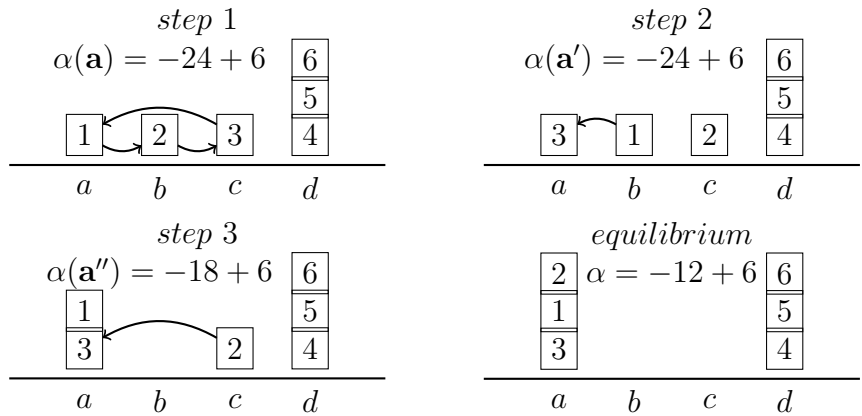
\begin{figure}
\centering
\begin{subfigure}[]{0.4\textwidth}

\begin{tikzpicture}[scale=1,transform shape]
	
	\draw[thick] (0,0.1) -- (5,0.1);
	
  \tikzstyle{VertexStyle}=[draw,black]
	\Vertex[x=1,y=0.5,L=$1$]{v1}
	\Vertex[x=2,y=0.5,L=$2$]{v2}
	\Vertex[x=3,y=0.5,L=$3$]{v3}
	
	\Vertex[x=4,y=0.5,L=$4$]{v4}
	\Vertex[x=4,y=1,L=$5$]{v5}
	\Vertex[x=4,y=1.5,L=$6$]{v6}
	
	 \tikzstyle{LabelStyle}=[fill=white,sloped]
	 \tikzstyle{EdgeStyle}=[black,bend right,->]
	\Edge[](v1)(v2)
	\Edge[](v2)(v3)
	\Edge[](v3)(v1)

  \tikzstyle{VertexStyle}=[]
	\Vertex[x=1,y=-0.2,L=$a$]{a}
	\Vertex[x=2,y=-0.2,L=$b$]{b}
	\Vertex[x=3,y=-0.2,L=$c$]{c}
	\Vertex[x=4,y=-0.2,L=$d$]{d}
  \Vertex[x=2.5,y=2,L=$step~1$]{step}
	\node at (2,1.5) {$\alpha(\vec a)=-24+6$};
\end{tikzpicture}

\end{subfigure}\quad\quad
\begin{subfigure}[]{0.4\textwidth}
\begin{tikzpicture}[scale=1,transform shape]
	
	\draw[thick] (0,0.1) -- (5,0.1);
	
  \tikzstyle{VertexStyle}=[draw,black]
	\Vertex[x=1,y=0.5,L=$3$]{v3}
	\Vertex[x=2,y=0.5,L=$1$]{v1}
	\Vertex[x=3,y=0.5,L=$2$]{v2}
	
	\Vertex[x=4,y=0.5,L=$4$]{v4}
	\Vertex[x=4,y=1,L=$5$]{v5}
	\Vertex[x=4,y=1.5,L=$6$]{v6}
	
	 \tikzstyle{LabelStyle}=[fill=white,sloped]
	 \tikzstyle{EdgeStyle}=[black,bend right,->]
	\Edge[](v1)(v3)

  \tikzstyle{VertexStyle}=[]
	\Vertex[x=1,y=-0.2,L=$a$]{a}
	\Vertex[x=2,y=-0.2,L=$b$]{b}
	\Vertex[x=3,y=-0.2,L=$c$]{c}
	\Vertex[x=4,y=-0.2,L=$d$]{d}
  \Vertex[x=2.5,y=2,L=$step~2$]{step}
	\node at (2,1.5) {$\alpha(\vec a')=-24+6$};

\end{tikzpicture}

\end{subfigure}\\
\begin{subfigure}[]{0.4\textwidth}
\begin{tikzpicture}[scale=1,transform shape]
	
	\draw[thick] (0,0.1) -- (5,0.1);
	
  \tikzstyle{VertexStyle}=[draw,black]
	\Vertex[x=1,y=0.5,L=$3$]{v3}
	\Vertex[x=1,y=1.0,L=$1$]{v1}
	\Vertex[x=3,y=0.5,L=$2$]{v2}
	
	\Vertex[x=4,y=0.5,L=$4$]{v4}
	\Vertex[x=4,y=1,L=$5$]{v5}
	\Vertex[x=4,y=1.5,L=$6$]{v6}

	 \tikzstyle{LabelStyle}=[fill=white,sloped]
	 \tikzstyle{EdgeStyle}=[black,bend right,->]
	\Edge[](v2)(v3)
	
  \tikzstyle{VertexStyle}=[]
	\Vertex[x=1,y=-0.2,L=$a$]{a}
	\Vertex[x=2,y=-0.2,L=$b$]{b}
	\Vertex[x=3,y=-0.2,L=$c$]{c}
	\Vertex[x=4,y=-0.2,L=$d$]{d}
  \Vertex[x=2.5,y=2,L=$step~3$]{step}
	\node at (2,1.5) {$\alpha(\vec a'')=-18+6$};
	

\end{tikzpicture}

\end{subfigure}\quad\quad
\begin{subfigure}[]{0.4\textwidth}
\begin{tikzpicture}[scale=1,transform shape]
	
	\draw[thick] (0,0.1) -- (5,0.1);
	
  \tikzstyle{VertexStyle}=[draw,black]
	\Vertex[x=1,y=0.5,L=$3$]{v3}
	\Vertex[x=1,y=1.0,L=$1$]{v1}
	\Vertex[x=1,y=1.5,L=$2$]{v2}
	
	\Vertex[x=4,y=0.5,L=$4$]{v4}
	\Vertex[x=4,y=1,L=$5$]{v5}
	\Vertex[x=4,y=1.5,L=$6$]{v6}

  \tikzstyle{VertexStyle}=[]
	\Vertex[x=1,y=-0.2,L=$a$]{a}
	\Vertex[x=2,y=-0.2,L=$b$]{b}
	\Vertex[x=3,y=-0.2,L=$c$]{c}
	\Vertex[x=4,y=-0.2,L=$d$]{d}
  \Vertex[x=2.5,y=2,L=$equilibrium$]{step}
		\node at (2.5,1.5) {$\alpha=-12+6$};

\end{tikzpicture}

\end{subfigure}

\caption{
\label{fig:group}An example of a group scheduler, with $r=1$. Voters~1,2,3 rank $d$ last, and rank their current vote first. In the initial state $\vec s = (1,1,1,3)$, we have $H_{r+1}(\vec s) = \{a,b,c\}$. Step~1 is a compromise move, where $N'=\{1,2,3\}$. After this move, $a,b$ and $c$ are all still possible winners. Now all three voters have an opportunity move, which is to go back to their previous vote. If the scheduler keeps selecting $\{1,2,3\}$ then we would have a cycle going forever. In this example the scheduler picks $I'=\{1\}$ (step~2), which ends the chunk. Now there is only one voter with a strategic move (voter~2), so the next chunk has just one step. After step~3 no voter has a strategic move, so the game is in equilibrium.}
\end{figure}
\end{proof}

\begin{rproposition}{th:truth_bias}
Suppose that each voter $i$ is of type $L(r,k_i)$ or $T(r,k_i)$, where $k_i>r$. Then a voting equilibrium exists. 
Moreover, in an iterative setting where voters start from the truthful state, they will always converge to an equilibrium in at most $3nm$ steps.
\end{rproposition}
\begin{proof}
We first prove for truth-biased voters.
Consider a $T(r,k_i)$ voter $i$. A truth-bias move can only occur when $i$ has no strategic moves.
By Lemma~\ref{lemma:two_winners}, $|\ol H_{r+1}(\vec s)|>1$, and by Lemma~\ref{lemma:threshold} this means there are at least two possible winners for $i$. If some of them locally dominate $a_i$, then $i$ would have a strategic move. Thus if $i$ makes a truth-bias move he is in one of two situations:
(a) when $i$ is already voting for his most preferred candidate in $\ol H_{r+1}(\vec s)$; (b) $a_i \notin \ol H_{r+1}(\vec s)$, but none of the possible winners $S_i(\vec a,r)$-dominates $a_i$. Denote such moves by type-a and type-b, respectively. 

We first argue that there are no type-a truth-bias moves. Indeed, we have $a_i\in \ol H_{r+1}(\vec s) \subseteq \ol H_{k_i}(\vec s)$. If $a_i\neq f(\vec a)$, there is a state $\vec s^*\in S_i(\vec a,k_i)$ where $a_i$ wins if getting the vote of $i$, and otherwise $f(\vec a)$ wins. If $a_i=f(\vec a)$, then consider some other $b\in \ol H_{r+1}(\vec s)$ (such a $b$ exists according to Lemma~\ref{lemma:two_winners}). Then there is a state $\vec s^*\in S_i(\vec a,k_i)$ where $a_i=f(\vec a)$ wins if getting the vote of $i$, and otherwise $b$ wins.
In either case, $a_i$ $S_i(\vec a,k_i)$-beats $q_i$, and thus there is no type-a move. Note that this is where we apply the assumption that $k_i>r$.

Type-b moves are possible, and thus we need to show that invariants (3) and (4) in the proof of Theorem~\ref{th:basic} still hold. I.e., that a truth-bias move cannot cause the winner to lose votes, and cannot expand the possible winners set. Clearly the winner cannot lose score, since $a_i\notin \ol H_{r+1}(\vec a)$. 
It is left to show that $a'_i=q_i$ cannot become a possible winner. Let $N'_{q_i}\subseteq N$ be all core supporters of $q_i$ that vote strategically in $\vec a$. For all $j\in N'_{q_i}$ (including $i$), when $j$ deserted $q_i$ then $q_i$ was no longer a possible winner for $j$ (i.e., its score was at least $r+2$ below the winner). Since then, the score of the winner does not decrease. Thus even if all of $N'_{q_i}$ return to $q_i$, the gap between $q_i$ and $f(\vec a)$ would be at least $r+2$. 

For the bound on the number of steps, denote by $R_i,K_i\in \mathbb N$ the number of strategic moves and Truth-bias moves of voter $i$, respectively. Observe that a strategic move can only occur after the set of possible winners shrinks---unless it is the first move or it follows a truth-bias move. Thus $R_i \leq (K_i+1) + (m-1) \leq K_i+m$. A truth-bias can only come after the set of possible winner shrinks as well, thus $K_i\leq m-1$. In total, $R_i+K_i\leq 2K_i +m \leq 3m$. So all voters together can do at most $3nm$ moves.

\medskip
We now turn to prove that convergence still holds when adding lazy-biased voters.
Assume first that $|\ol H_{r+1}(\vec q)|>1$.
Then the proof is essentially the same, replacing $q_i$ with $\bot$ for lazy-biased voters. 

However the proof breaks when $|\ol H_{r+1}(\vec q)|=1$, as ``lazy'' supporters of the winner may abstain. Thus this case requires a special treatment. We will show that every lazy voter plays at most once, and then the game ends. 

Let $k^t$ be the gap between $a^*=f(\vec q)$ and its closest runner-up (i.e., $k^t = \min\{k>0 : H_{k}(\vec s^t)\neq \emptyset\}$). By our assumption $k^0> r+1$. At time $t$, voter $i=i_t$ will choose to abstain if and only if $k_{i}\leq k^t-2$. To see why, if $k_{i} \geq k^t-1$, then $a^*$ $S_i(\vec a^t,k_i)$-beats $\bot$ (by removing $k_i$ votes from $a^*$, $i$ becomes pivotal between $a^*$ and the runner-up).
However, a voter who already abstains, will choose to enter only if $k^t \leq r+1$ (or $k^t \leq r$, for supporters of $a^*$). 
 Thus $k^t$ may either increase or decreases by $1$ in every step. However in any case $k^t$ cannot go below $r+2$. We can show this by induction: the base case follows since $|\ol H_{r+1}(\vec q)|=1$. In any later step, $k^{t+1} < k^t$ if and only if a supporter $i$ of $a^*$ abstains, which occurs iff $r<k_{i}\leq k^t-2$. Thus $k^t\geq r+3$, and $k^{t+1}\geq r+2$. 

Finally, after each voter made his decision whether to abstain or not, there are no strategic moves, as $k^t\geq r+2$ entails $|\ol H_{r+1}(\vec a^t)|=1$. In this case, the game will converge after at most $n$ moves.
\end{proof}

\section{Simulations}
\subsection{Distributions of preference profiles}
\label{sec:distributions}
\begin{itemize}
\item {\bf Uniform:} Also known as the \emph{impartial culture} distribution, this is the simplest distribution to study. While people's votes are rarely distributed at random, the uniform distribution allows more confidence that our results are not particular and specific to the distributions analyzed, and is thus often used in simulations of voting~\cite{nurmi1992assessment}.

\item {\bf Single-peaked:} This distribution assigns each candidate a point on the interval $[0,1]$, and each voter is randomly assigned a point on the interval, which defines its preferences --- it prefers candidates closer to its point. This distribution has been long used in sociological and political research (as resembling the common right-left political axis) \cite{Ked14}, but has also been widely examined in game theoretic scenarios. A particular interesting property is that for single-peaked preferences can be aggregated using strategy-proof mechanisms. The most prominent such mechanism is the \emph{median vote}~\cite{Spr91}.\footnote{We also tried simulations with \emph{single-dipped} preferences, where the voter's most preferred candidates are at the extreme. However, in such profiles the truthful vote has only two candidates with positive support (the extremes), and no voter ever has an incentive to move.}


\item {\bf Polya-Eggenberger urn:} This model was developed and used to model the grouping of much of society to major homogenous groups~\cite{Ber85,Wal10,RS12}. In a $k$-urn model, $k$ preference orders are chosen, and an urn is built to let voters choose preference orders from it. Each of the $k$ chosen preferences gets $\frac{1}{k+1}$ of the preference orders in the urn, with remaining $\frac{1}{k+1}$ of the urn filled by all non-selected preference orders. Preferences chosen from this urn have significant likelihood to be of the $k$ main groups. In this work, we used the $2$-urn and $3$-urn model.

\item{\bf Riffle:} In a riffle model we get preferences of each voter by interleaving two separate preference orders on subsets of candidates in an independent manner. Huang and Guestrin~\shortcite{HG09} showed real-world elections which resembled this distribution.

\item{\bf Placket-Luce:} In the Placket-Luce model each candidate has an intrinsic cardinal value in the interval $[0,1]$ (the ``ground truth''). Each vote is then sampled from a particular distribution which adds noise to the true ranking~\cite{SPX12}.


\end{itemize}

\paragraph{German election data}
We used data from three polls based on the German National Election Study, from 1969, 1972, and 1976.\footnote{Since these were election to the parliament and not a single-winner elections, it is impossible to compare the simulation results to the actual outcome.} 
The German election datasets had three candidates: The Christian Democratic Union (CDU), the Social Democratic Party (SDP), and the Free Democratic Party (FDP); we used $n=100$ voters whose preferences (from the six possible orders) were taken from~\cite{regenwetter2006behavioral}.

\paragraph{PrefLib} We used all 225 complete preference profiles available from PrefLib.org. Most instances have 3 or 4 candidates (over 10 in some), and several hundred voters. 

\subsection{Methods}
\label{sec:methods}
We simulated voting in an iterative setting, where voters start from a particular state, and then iteratively make strategic moves until convergence. 
We constructed a simulator that enables us to control the following features of the simulation. First, determine the parameters of the preference profile:
\begin{itemize}
	\item Number of voters. We used $n\in\{10,20,50\}$ (also $n=100$ in a few simulations).
	\item Number of candidates. We used $m\in\{3,\ldots,8\}$.
	\item Distribution of preferences. We used all the $6$ distributions described above.
\end{itemize}

Ignoring the $n=100$ simulations, this defines $3\times 6 \times 6=108$ distinct distributions. From each such distribution we generated $200$ preference profiles, to a total of $21,600$ profiles.
Then, we determine the parameters of the strategic model:
\begin{itemize}
	\item Distance metric used for accessibility relation. We used the $\ell_1$ norm and the multiplicative distance.
	\item The uncertainty parameter $r$ which determines the radius of local dominance. For the $\ell_1$ distance, we varied $r$ in $\{0,1,\ldots,15\}$; for multiplicative we varied $r$ in $\{\frac{1}{n},\ldots,\frac{4}{n},0.3,0.5,0.7\}$. We also used $r=m$ as a baseline value, where there is no strategic behavior (the outcome is the truthful Plurality outcome).
	\item For voters with truth-bias or lazy-bias, the parameter $k$ was varied as well from $r+1$ (or $r+\frac{1}{n}$ for multiplicative) to $2r$.
	\item The simulator allows the creation of voters of different types. However in most of our simulations all voters had the same type (but different preferences).
\end{itemize}
Finally, there are setting of the simulation itself:
\begin{itemize}
	\item The initial voting profile.  In most simulations this was the voters' truthful profile. 
	\item The scheduler. In most simulations we used a singleton scheduler, which selects the next voter uniformly at random. We also used a group scheduler, which randomly selects a subset of at most $n/2$ voters and lets all of them make independent strategic moves (if they have such). 
\end{itemize}

\paragraph{Basic simulations}
For each profile, we ran simulations with strategic voters using the $\ell_1$ metric, ranging the value of $r$. For each combination of preference profiles and voters' type, we repeated the simulation 100 times with a random singleton scheduler, recording all the equilibria that were attained from this profile. In this setting, we know by Theorem~\ref{th:basic} that every simulation must converge.
For profiles with $m\leq 5$, we conducted the same procedure with the multiplicative metric as well.



\paragraph{Concurrent voting}
We repeated a subset of the simulations above with a group scheduler. We used all the above distributions, with $n=50,m=5$, and all values of $r$. We then compared the results to the corresponding results under a singleton scheduler.

\paragraph{Random initial state}
We repeated our simulations where the initial state is sampled uniformly at random from all voting profiles (regardless of the preference profile). Thus for each preference profile we ran 100 simulations, each with a different initial state and a different scheduler. 

\paragraph{Diverse types} Since actual societies are likely to contain voters of different types, we repeated the same simulations with a diverse population. In these simulations the value $r_i$ for each voter was sampled uniformly from $\{0,1,\ldots,n/m\}$.

Note that in the three batches of simulations above, we had no formal guarantee that the game converges. However, in practice all simulations converged to an equilibrium.

\paragraph{Truth-biased and lazy-biased societies}
For $m\leq 5$ we repeated the simulations while varying the values of $k$ in addition to $r$. In that process we simulated truth-biased and lazy-biased societies.

\paragraph{Non-truthful starting profiles}
For $n=20, m=4$, we examined the possibility of changing the the starting position of players to a random one. A different starting profile was assigned every run of a profile (to enable examination how many different winners are reached using our dynamic).


\paragraph{Observed variables}
For every generated profile, we measured the following variables (all averaged over 100 simulations with a random singleton scheduler). We then averaged again over all 200 generated profiles of a given distribution.
\rmr{Omer, can we make the headers in the xlsx consistent with these?}
\begin{description}
	\item[NumStep] The number of steps from the initial (truthful) profile to convergence.
	\item[NumStates] The number of distinct equilibrium outcomes (for the same preference profile), in terms of voting profiles.
	\item[NumWinners] The number of distinct equilibrium outcomes (for the same preference profile), in terms of winner's identity.
	\item[WinnerConsistency] The maximal fraction of simulations (for the same preference profile) that ended with the same winner.  
	\item[PluralityAgreement] The fraction of simulations where the winner was the original Plurality winner.
	\item[BordaAgreement] The fraction of simulations where the winner was the Borda winner.
 \item[CoplandAgreement] The fraction of simulations where the winner was the Copland winner.
\item[MaximinAgreement] The fraction of simulations where the winner was the Maximin winner.
\item[CondorcetAgreement] The fraction of simulations where the winner was the Condorcet winner, when one exists (not counted otherwise).
\item[SocialWelfare] The relative rank of the winner, according to its borda score (lower is better). Equivalently: the complement of the average social welfare of voters, assuming Borda utilities.
\item[Gap1-2] The ratio between the score of the winner and the score of the runner-up ($s(c_1)/s(c_2)$, where $c_i = \argmax_{c\neq c_j, j<i} s(c)$.) 
\item[Gap2-3] The ratio between the score of the second and the third candidates ($s(c_2)/s(c_3)$).
\item[TotalDuverger] Fraction of simulations where at most two candidates received votes ($1$ iff $s(c_3)=0$).
\item[RelativeDuverger] The fraction of votes for the two leading candidates ($(s(c_1)+s(c_2))/n$).
\end{description}
For the Placket-Luce distribution, we also measured \emph{WinnerGroundRank}, which is the rank of the winner according to the ground truth used to generate the profile (between $0$ and $m-1$). 

%

\begin{figure}[p]
\begin{center}
\includegraphics[scale=0.3]{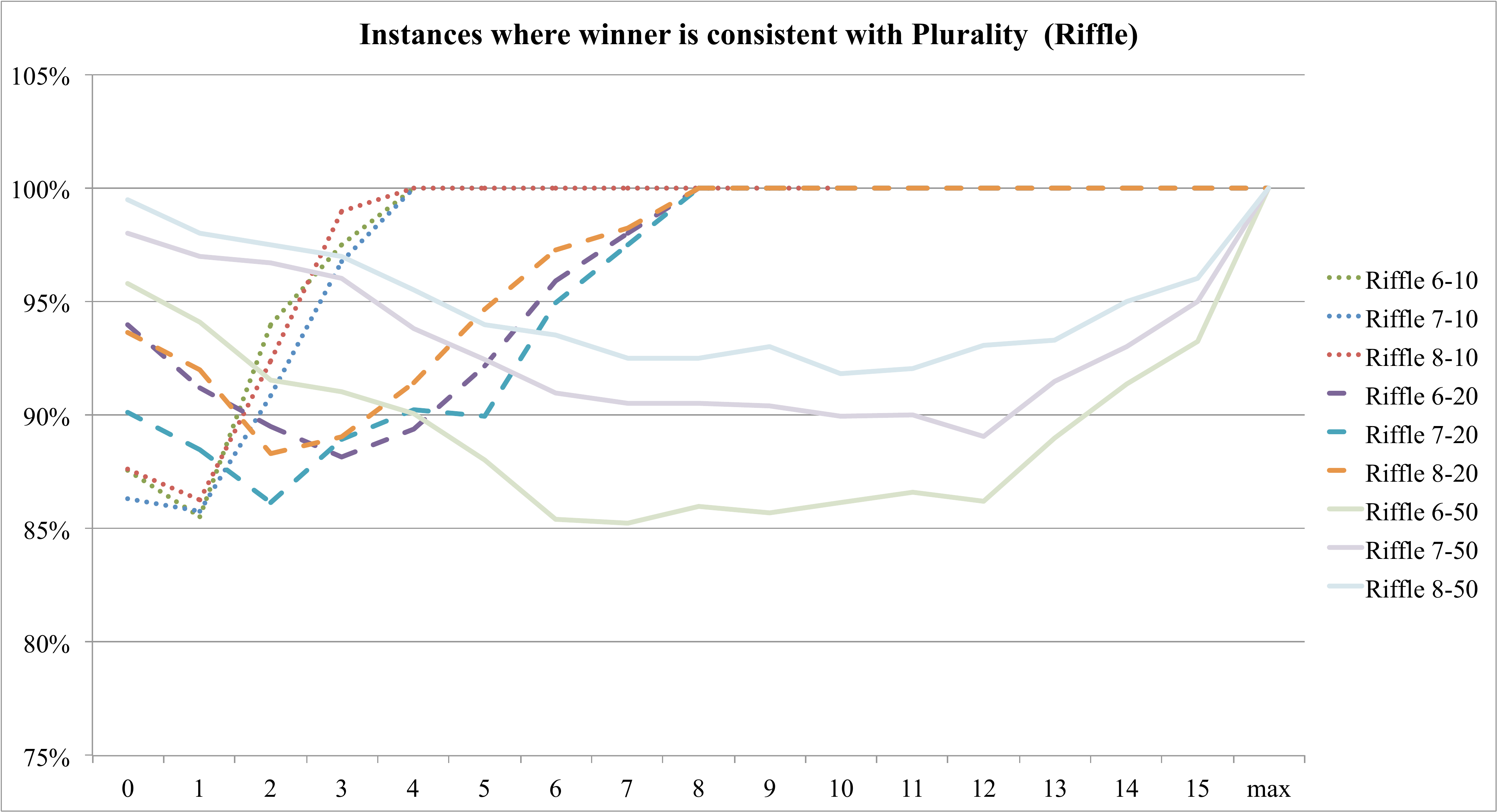}
 \end{center}
\caption {Ratio of games where the winner was the same winner as plurality for Riffle distribution simulation (\emph{PluralityAgreement})\label{fig:riffle_plurality}}
\end{figure}

\begin{figure}[p]
\begin{center}
\includegraphics[scale=0.4]{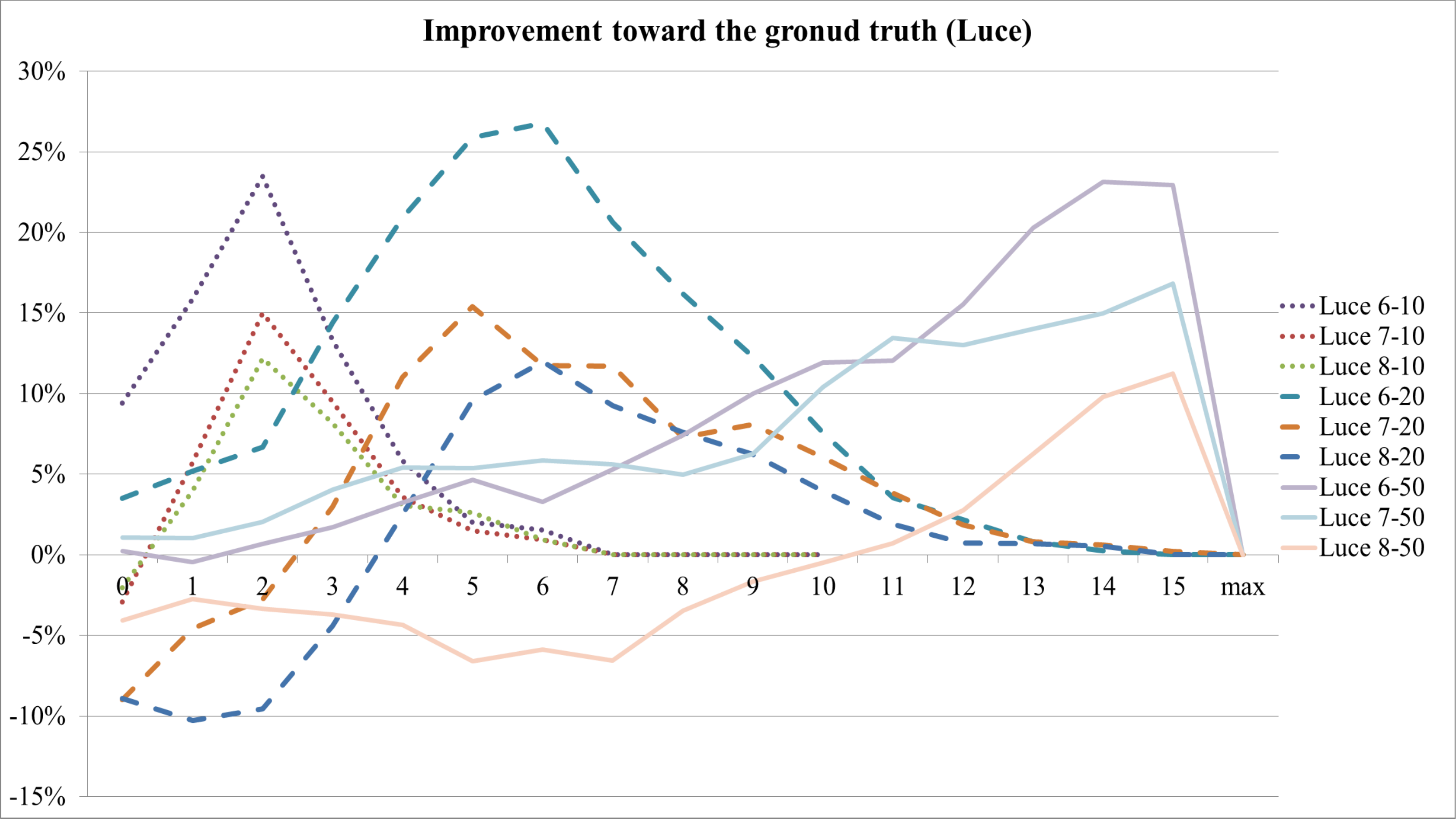}
 \end{center}
\caption{Improvement in the objective quality of the winner (\emph{WinnerGroundRank}), according to the ground truth. Note that when there is some strategic interaction but not as much as in peak $r$, winner quality might \emph{decrease}.\label{fig:Luce_ground}}
\end{figure}

\begin{figure}[p]
\begin{center}
\includegraphics[scale=0.4]{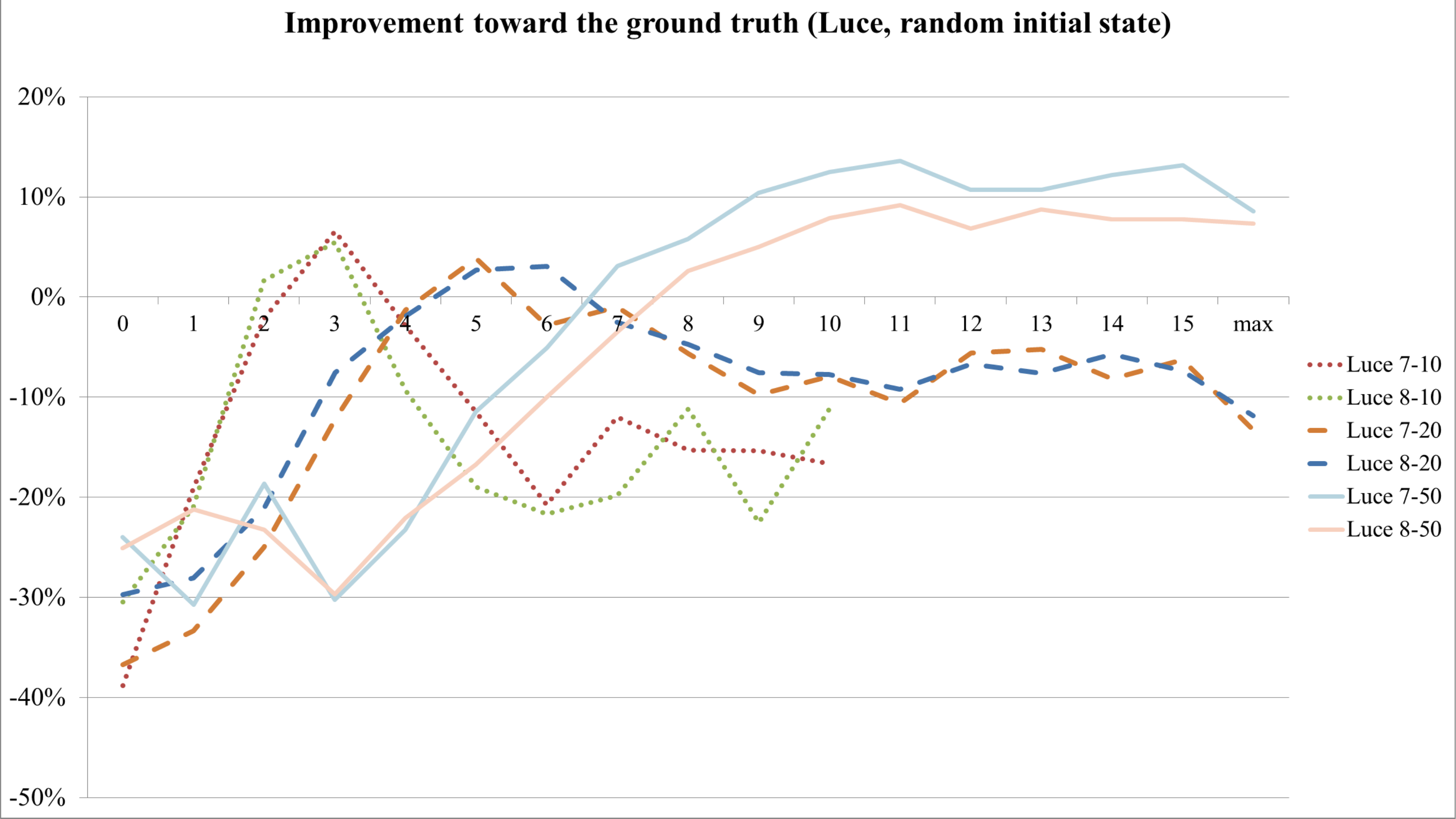}
 \end{center}
\caption{Same as Figure~\ref{fig:Luce_ground}, but with random initial state. We can see strategic behavior (around peak $r$) yields better outcome on average than truthful voting, even when starting from a random state.\label{fig:Luce_ground_random}}
\end{figure}

\begin{figure}[p]
\begin{center}
\includegraphics[scale=0.3]{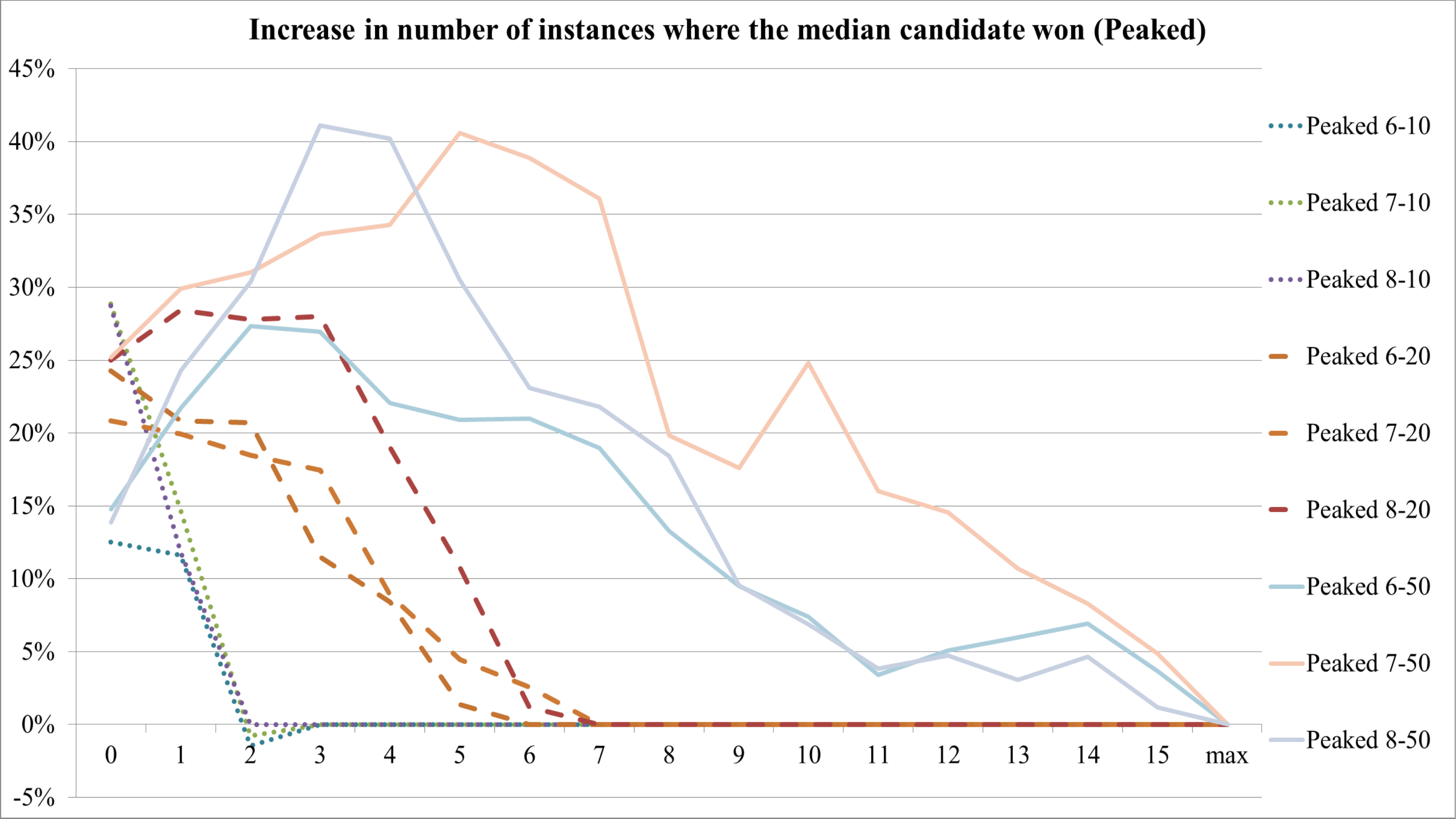}
 \end{center}
\caption {The fraction of simulations in the single-peaked distribution, where the winner is the median candidate (\emph{CondorcetAgreement}), relatively to the truthful baseline.\label{fig:peaked_median}}
\end{figure}

\begin{figure}[p]
\begin{center}
\includegraphics[scale=0.3]{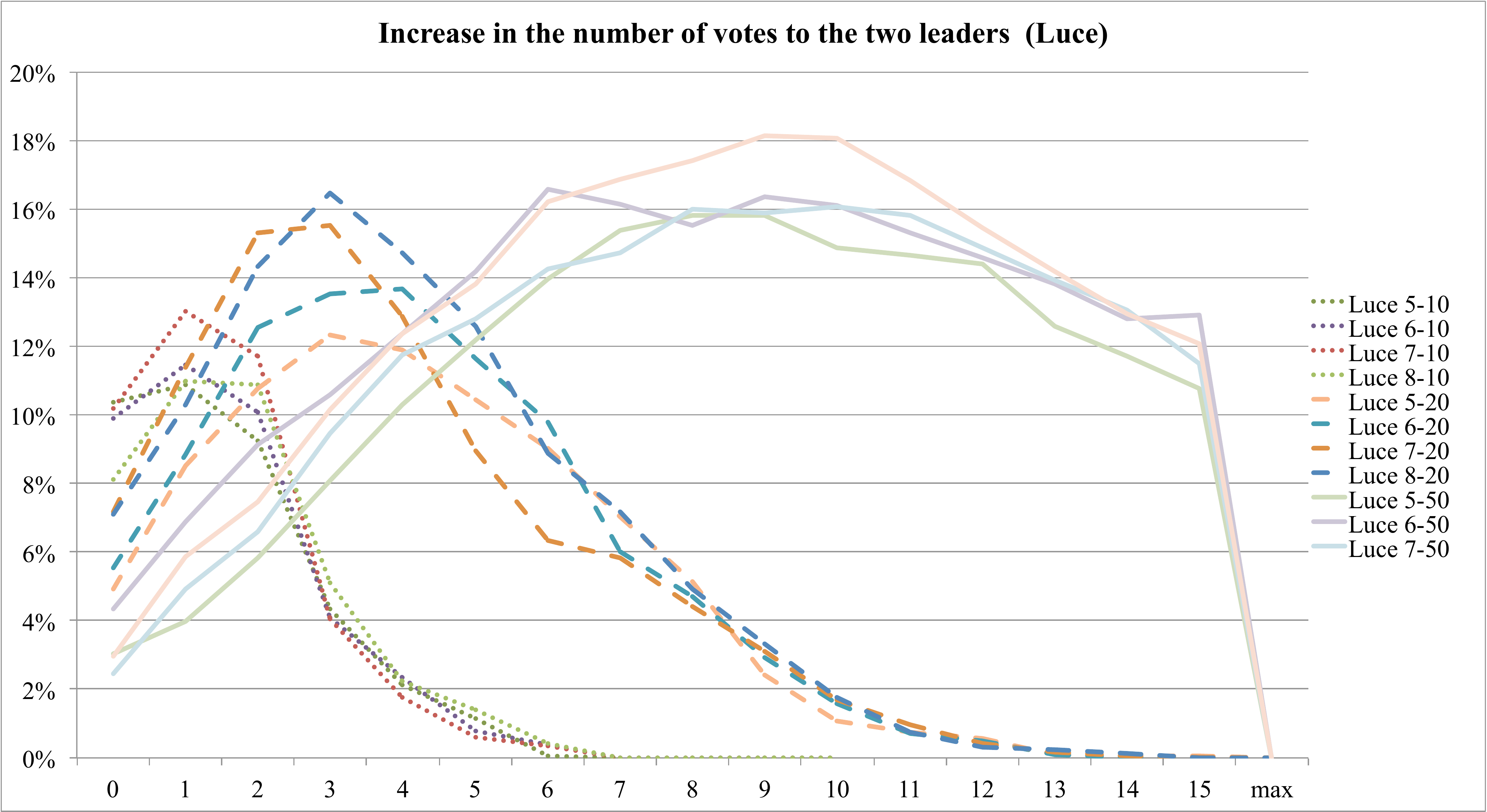}
 \end{center}
\caption {The Difference between \emph{RelativeDuverger} in the equilibrium outcome (the fraction of voters voting for the two leaders), and the truthful baseline.\label{fig:luce_duv}}
\end{figure}

\begin{figure}[t]
\begin{center}
\includegraphics[scale=0.35]{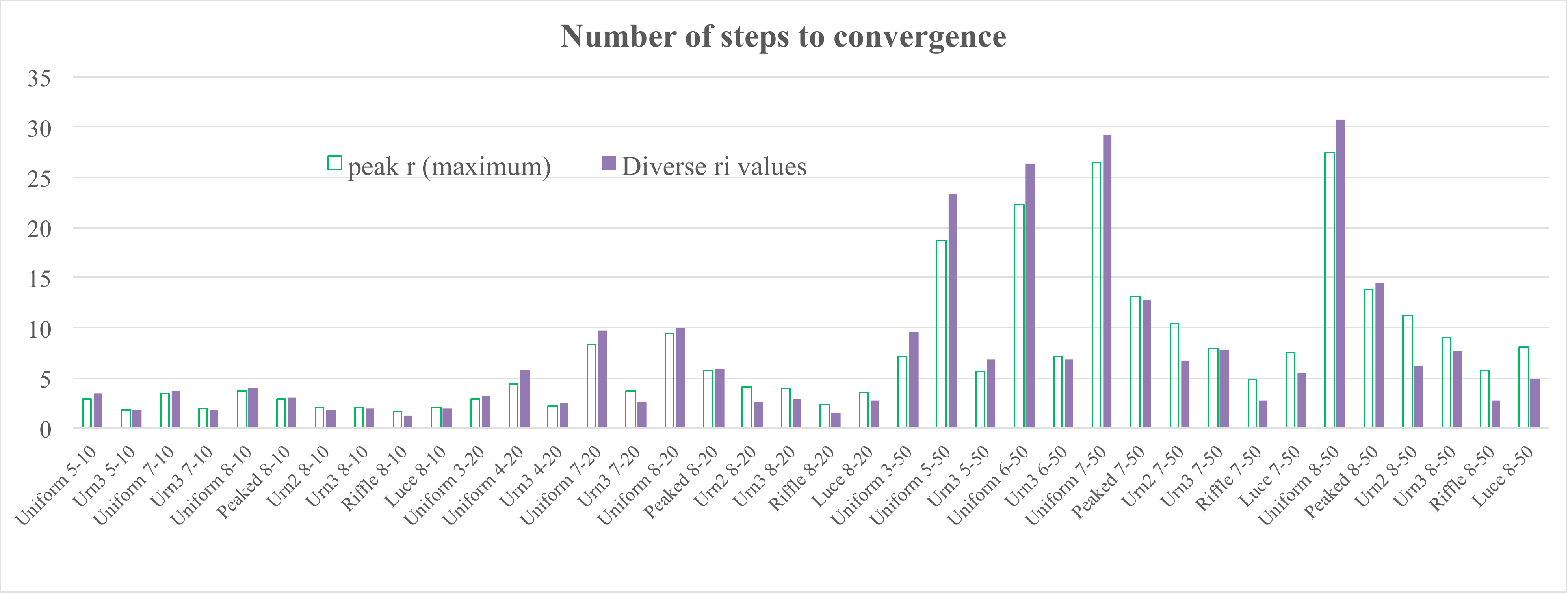}
 \end{center}
\caption {Number of steps to convergence (\emph{NumStep}).\label{fig:diverseStep}}
\end{figure}

\newpage
\begin{small}
\bibliography{plurality.scw}
\end{small}

\end{document}